\newtheorem{theorem}{Theorem}
\newtheorem{definition}{Definition}
\newtheorem{remark}{Remark}
\title{On Topology Changes in Quantum Field Theory and Quantum Gravity}
\author{Benjamin Schulz}
\begin{document}
\maketitle

\begin{abstract}
	Two singularity theorems can be proven if one attempts to let a Lorentzian cobordism  interpolate between two topologically distinct manifolds. On the other hand, Cartier and DeWitt-Morette have given a rigorous definition for quantum field theories (qfts) by means of path integrals. This article uses their results to study whether qfts can be made compatible with topology changes. We show that path integrals over metrics need a finite norm for the latter and for degenerate metrics, this problem can sometimes be resolved with tetrads. We prove that already in the neighborhood of some cuspidal singularities, difficulties can arise to define certain qfts. On the other hand, we show that simple qfts can be defined around conical singularities that result from a topology change in a simple setup. We argue that the ground state of many theories of quantum gravity will imply a small cosmological constant and, during the expansion of the universe, will cause frequent topology changes. Unfortunately, it is difficult to describe the transition amplitudes consistently due to the aforementioned problems. We argue that one needs to describe qfts by stochastic differential equations, and in the case of gravity, by Regge calculus in order to resolve this problem.
\end{abstract}

%
\vspace{2pc}
\noindent{\it Keywords}:functional integration; quantum field theory; quantum gravity

%
%

%
%

\section{Introduction}\label{s0}
In this work, we investigate the compatibility of quantum field theories and various proposals for quantum gravity with topological changes of the space-time. 

For this, one needs at first precise mathematical definitions of the space-time topology and of quantum field theory, which we will give in this introduction. From the mathematical analysis, it is known that a topology change of a space-time will lead either to closed time-like curves or to singularities. The mathematics of quantum field theory and quantum gravity in this article will be based on the path integral formalism of DeWitt-Morette and Cartier. 

The introduction can be seen mostly as a review. Although the author has not seen the formalism of DeWitt-Morette applied on quantum field theory and quantum gravity in such a way. (Among the new results in the introduction is that the famous problem of time in quantum gravity goes simply away, once one writes the gravitational path integral down in a careful way that emphasizes the general role of the space-time over which one integrates as a cobordism that always has boundaries).

The earliest article noting the incompatibility of 2 dimensional quantum field theory with a  $ 2 $  dimensional topology change involving a conical singularity was given by Anderson and DeWitt \cite{Topology}.

An action in quantum field theory usually involves an integral of squares of derivatives of the paths.
For example, the action the so-called  $ \varphi^4 $  theory on a Minkowski space  $ M $  with metric tensor  $ \eta $  is given by
\begin{equation}
	S(\varphi(x))=\int_M d^4x\frac{1}{2}\eta^{\mu\nu}\partial_\mu\varphi\partial_\nu\varphi-\frac{m}{2}\varphi^2-\frac{\lambda}{4!}\varphi^4.
\end{equation}

Since any integral (the path integral included), over an integrand exists of course only if the latter can be defined, one has to conclude that for such an action, the space of paths is restricted to functions   $ \varphi(x) $  whose derivatives are square integrable.

Because of this issue, the author suggested in his phd thesis \cite{phd} that topology changes were generally forbidden if topological transitions over space-times with closed time-like curves were excluded.

However, one could assume that different action functionals, e.g ones where derivatives do not occur, could solve the problem. In \cite{DeWitt9} and \cite{DeWittbook}, DeWitt has given a comment in which he outlines an argument why homotopy considerations would forbid topology changes. This argument solely depends on the measure of the path integral and is independent from an action. Unfortunately, DeWitt's comment is just a short sentence from which it is not immediately clear to which kinds of topology changes and to which quantum field theories it does apply.

In this article, we will expand this comment in section \ref{s1} into a result for path integrals over metrics. In this article, we will argue that DeWitt's argument can mostly be resolved by the solution of Horowitz \cite{Horowitz,Horowitz2,Gidd}.

The case of fields on a singular space-time with conical singularities also needs closer look. String theory was written on the vicinity of conical singularities of orbifolds. In \cite{DeWitt9},  DeWitt asked whether this could perhaps be used to describe topology changes with a quantum field theory, but his comment does not give a concluding answer.

Furthermore, Borde,  Dowker, Garcia, Sorkin and Surya have argued in \cite{Dowker} that certain topology changes involving singularities where the metric vanishes could be allowed. These authors write that some physical criteria called causality protection should be fulfilled. They have, however, not used a rigorous mathematical definition of a quantum field theory to study whether their constructions are compatible with quantum fields or not.  

Using our mathematical notion of quantum field theory, we are able to analyze the question of whether and how quantum field theories can be made compatible with singularities in a mathematically more precise way. This will be done in section \ref{s2} and \ref{s3}. 

As we will see, the required  consistency checks involve the neighborhood of the singularity. More precisely, we argue that the study of Cauchy problems for fields close to the singularity is necessary. We prove that for some cuspidal singularities, certain classical and quantum field theories can not be consistently defined even in the neighborhood of the singularity once the theory has been extended to the latter. 

Many claims about the possibility of topology changes were made in the string theory literature. We will discuss some these claims in section \ref{sa} and what they specifically entail. Unfortunately, we show them often to be very exaggerated.

In section \ref{s3}, we will discuss a simple example of a topology change involving a conical singularity that we believe is compatible with quantum field theory for fields on the singular space-time. We use this working  example and contrast it to the singular trousers problem of Anderson and DeWitt from \cite{Topology} in order to show that not only the singularity can be a source of problems for a quantum field theory but that a topology change can cause difficulties for the global analysis of the quantum field.

In section \ref{s4}, we will  turn to the ground state of quantum gravity. We will review arguments from Hawking \cite{foam} which show that if quantum gravity is quantized as a correction to a classical background field, then one has to integrate the path integral also over the classical backgrounds. Hawking's argument was criticized by Christensen and Duff\cite{chris}, who noted that Hawking's amplitude may not converge. We will argue that Hawking's amplitude converges if one sets the regularization parameter correctly. We review arguments from the author's phd thesis, which show that this then leads to a ground state with a small cosmological constant \cite{phd}. From \cite{foam}, one gets the result that the Euler characteristic of the space-time is proportional to the 4-volume of the space-time. In an expanding universe, this would imply frequent topology changes. We will argue that similar conclusions hold for other theories of of quantum gravity.

In the remainder of this article, we will look what mathematical properties a theory must have that it can be more compatible with the many difficulties arising from topological changes of the space-time. An unexpected hint comes from Bell's theorem. In section \ref{s5}, we review Nelson's\cite{Ne4} and Faris' \cite{Far} rigorous analysis of Bell's theorem. It shows that any theory is excluded where the outcomes of EPR experiments are determined by events that happen before the measurement stage. We will argue that this implies that any valid theory of nature must include probabilistic elements. 

We will argue in section \ref{s6} that one may use stochastic processes to model relativistic quantum field theories in a way that they are compatible with the the experimentally observed violations of Bell's inequality. Stochastic processes  have non-differentiable paths. We argue that this property makes them more compatible with topological changes. It opens the possibility to describe the singular trousers model of Anderson and DeWitt from \cite{Topology} without divergences in the energy momentum tensor and it resolves the problems that we will prove to arise in the case of cuspidal singularities. 

Finally, we conclude the article with section \ref{s7} by mentioning that a  discretization which is similar to those in stochastic processes can be modeled with Regge calculus in general relativity. We discuss how this may be used to write path integrals over metrics and with the Einstein action as integrand. The singularities in this formalism are conical, which means that the aforementioned difficulties with cuspidal singularities are not present. 

All this is certainly not exhaustive. The characterization of singularities in general relativity is difficult. Even with a concise description of the path integral, one would need a much more expanded treatise to fully characterize the singularities with which quantum field theories are compatible.

From what is written above, it follows that on the mathematical side, the reader should have some knowledge of analysis, functional analysis, partial differential equations, topology, homotopy theory, differential geometry,  algebraic geometry and singularity resolution, see e.g. \cite{sing}, functional integration, see \cite{functionalintegration}, probability theory and stochastic processes, see\cite{Bauer}. On the physical side, one should have good knowledge on quantum mechanics, see \cite{Auletta}, quantum field theory in flat and in curved-space-times, see e.g. \cite{DeWittbook1, DeWittbook,DeWittcurved,DeWitt50years}, constructive quantum field theory, see \cite{Wightman,schottenloher}, general relativity, quantum gravity \cite{DeWittbook1, DeWitt, DeWitt22, DeWitt3}, Euclidean quantum gravity \cite{Hawkingbook} and, finally, on mathematical string theory along the lines of \cite{deligne}. 

Furthermore, the reader should be warned that this article does not contain many aspects of physical model building. Neither does it make many predictions. Instead, it starts from rather broad assumptions about quantum gravity and then comes to certain conclusions about broad aspects that physical theories must have by means of mathematical deductions.  Model building as it is often done by physicists would usually include many assumptions and hypotheses that one can not prove and then go into detailed predictions. This article does not aim to do that. Instead, it starts from very general assumptions that hold for large classes for models of quantum gravity theories and then makes rather broad conclusions that mathematically follow from them.

\subsection{Space-time topology}

We define a topology change as the change of a manifold  $ S_1 $  into a different  manifold  $ S_2 $  that is not diffeomorphic to  $ S_1 $. Naively, one may ask whether this transition can by done smoothly and in a way that is respecting physical causality in some way. In this case, there would be some kind of interpolating manifold, or cobordism between  $ S_1 $  and  $ S_2 $. In the following, we make these ideas more precise. In order to attempt this, we need some of the definitions of the causal structure of a space-time  from the relativity literature, e.g. \cite{causalstructure,Geroch0, HawkingEllis}.

\begin{definition}	Let  $ (M,g) $  be a manifold with Lorentzian metric  $ g $. Then  $ (M,g) $  is called a Lorentzian manifold. A tangent vector X is called time-like if  $ g(X,X)<0 $, light-like if  $ g(X,X)=0 $  and space-like if  $ g(X,X)>0 $.
	
	Let  $x^\mu(\tau)\in M$ be a curve parameterized by a curve parameter  $\tau\in U$  where  $U$  is some some interval  $U\subseteq \mathbb{R}$, and let   $ \mu\in\{1,\ldots,n\}$  be the indices of the components of each point from the curve.  $ x^\mu(\tau) $  is a  $ C^1 $  curve if the components of the tangent  $ \frac{dx^\mu}{d\tau}$  exist and are continuous.  A curve   $ x^\mu(\tau) $  is called time-like, if it is not just defined at a single point and if is a  $C^1$  curve with a tangent that is everywhere time-like.
	
	$(M,g)$ is chronological, if it contains no closed time-like curves. A Lorentzian metric $g$  that allows a global choice of future and past for any point $p\in M$ is called time-orientable, and if that choice was made, time-oriented. A space-time is a time oriented Lorentzian manifold of at least dimension $2$.
	
	A  $C^1$  curve  $x^\mu(\tau)$  is  future/past-directed if the tangent vectors point in the future/past time direction. Given a point  $ p\in M $, its chronological future/past is defined as the set  \begin{equation} I^{\pm}(p)=\{q|\exists \text{ a future/past directed time-like curve from p to q}\} \end{equation}  and for a space-time which is time orientable, we can define  $ I(p)=I^+(p)\cup I^-(p) $. 
\end{definition}

In order to describe the topology change mathematically on a space-time with such causality structures we now need some of the definitions from  the very good introduction of Borde \cite{Borde} and Larssen \cite{Larsson}. 

We mention these results and proofs here, even though they are known in the mathematics community, because cobordisms have recently attracted attention of the string theory community in form of a so-called cobordism hypothesis \cite{cumrun}. Therefore, these results maybe of use to other researchers in the field.

\begin{definition}Let  $ S_1 $  and  $ S_2 $  be  manifolds without boundary and of dimension  $ n $, where  $ n\in\mathbb{N} $. We say that there is a pseudo-cobordism  $ M $  between them if  $ M $  is a connected  $ n+1 $  dimensional  manifold whose boundary is given by  $ \partial M=S_1 \sqcup S_2 $, where  $ \sqcup $  denotes the disjoint union, i.e. one has   $ \partial M=S_1 \cup S_2 $  and  $ S_1 \cap S_2 =\emptyset $. If the pseudo-cobordism  $ M $  is compact, it is called cobordism.
	
	Let  $ (M,g) $  be a time-oriented pseudo cobordism with a Lorentzian metric  $ g $. Then,  $ (M,g) $  is called Lorentzian pseudo-cobordism. A Lorentzian pseudo-cobordism  $ (M,g) $  is called a weak Lorentzian pseudo-cobordism, if  $ S_1 $  and  $ S_2 $  are space-like hyper-surfaces with respect to the Lorentz structure on  $ (M,g) $. If a weak Lorentzian pseudo-cobordism is compact, it is called a weak Lorentzian cobordism. If the closure  $ \overline{I(p)} $  of a (weak) Lorentzian pseudo-cobordism  $ (M,g) $  is compact, we call  $ (M,g) $  a causally compact (weak) Lorentzian cobordism.
	\label{defn2}
\end{definition}
Let us assume that the manifolds  $ S_1 $  and  $ S_2 $  are not diffeomorphic, and so there happens a topology change between them. The first question is then  whether a cobordism exists between them. For this case, if  $ S_1 $  and  $ S_2 $  are compact, one has a standard result in differential geometry:
\begin{theorem}(Milnor)Let  $ S_1 $  and  $ S_2 $  be two n-dimensional compact manifolds without boundary. There is a cobordism between them if and only if they have the same Stiefel-Whitney numbers.
\end{theorem}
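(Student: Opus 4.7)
The statement is the nontrivial direction of Thom's theorem on unoriented cobordism, and the plan is to reduce it to a computation of the homotopy type of the Thom spectrum $MO$. Since the group $\mathfrak{N}_n$ of unoriented cobordism classes of closed $n$-manifolds has exponent two (as $M\sqcup M=\partial(M\times[0,1])$), one has $[S_1\sqcup S_2]=[S_1]+[S_2]$ and it suffices to show $[S_1]=[S_2]$ in $\mathfrak{N}_n$; producing a witnessing compact $(n+1)$-manifold $W$ with $\partial W=S_1\sqcup S_2$ then yields the cobordism required by Definition~\ref{defn2}.

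First I would set up the Pontryagin-Thom correspondence. Embedding $M:=S_1\sqcup S_2$ in $\mathbb{R}^{n+k}$ for $k$ sufficiently large, the normal bundle $\nu\to M$ has a tubular neighborhood, and collapsing its complement in the one-point compactification $S^{n+k}$ produces a map
\begin{equation}
	c\colon S^{n+k}\longrightarrow \mathrm{Th}(\nu)\longrightarrow MO(k),
\end{equation}
where the second arrow classifies $\nu$. A standard transversality argument shows that this construction descends to a bijection
\begin{equation}
	\mathfrak{N}_n \;\xrightarrow{\;\cong\;}\; \pi_{n+k}(MO(k)) \qquad (k>n+1),
\end{equation}
converting the geometric cobordism problem into a question in stable homotopy.

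Next I would compute $\pi_*(MO)$ and identify the detecting invariants. The Thom isomorphism gives $H^*(MO(k);\mathbb{Z}/2)\cong H^{*-k}(BO(k);\mathbb{Z}/2)$, and $H^*(BO;\mathbb{Z}/2)=\mathbb{Z}/2[w_1,w_2,\ldots]$ is polynomial on the universal Stiefel-Whitney classes. A direct check produces an $\mathcal{A}_2$-basis of $H^*(MO;\mathbb{Z}/2)$ indexed by partitions with no parts of the form $2^s-1$, showing $H^*(MO;\mathbb{Z}/2)$ is free over the mod-$2$ Steenrod algebra $\mathcal{A}_2$; Thom's splitting theorem then identifies $MO$ with a wedge of suspensions of $H\mathbb{Z}/2$. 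Unwinding, $\mathfrak{N}_*$ is a polynomial $\mathbb{Z}/2$-algebra whose elements are detected precisely by pairing monomials $w_I=w_{i_1}\cdots w_{i_r}$ against the fundamental class, i.e.\ by the Stiefel-Whitney numbers. Hence the hypothesis that $S_1$ and $S_2$ have identical Stiefel-Whitney numbers forces $[S_1]=[S_2]$ in $\mathfrak{N}_n$, which by the first paragraph yields the desired cobordism.

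The hard step is the computation of $\pi_*(MO)$: verifying $\mathcal{A}_2$-freeness of $H^*(MO;\mathbb{Z}/2)$ is a genuinely algebraic piece of work involving the admissible monomial basis and the action of the Steenrod squares on the Thom class, and lifting this algebraic splitting to a spectrum-level wedge decomposition requires invoking the structural result that a connective spectrum of finite type whose mod-$2$ cohomology is $\mathcal{A}_2$-free splits as a wedge of Eilenberg-MacLane spectra. Granted these facts, the remaining identification of the isomorphism $\mathfrak{N}_*\cong\pi_*(MO)$ with the pairing of Stiefel-Whitney classes against the fundamental class is essentially bookkeeping.
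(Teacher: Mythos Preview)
Your outline is correct and is essentially the Thom--Milnor argument; the paper itself does not supply a proof but simply refers the reader to \cite{Milnor}, so there is nothing to compare at the level of strategy. What you have written is a faithful compression of the Pontryagin--Thom construction together with Thom's determination of $\pi_*(MO)$ via the $\mathcal{A}_2$-freeness of $H^*(MO;\mathbb{Z}/2)$, and the identification of Stiefel--Whitney numbers as a complete set of cobordism invariants then follows exactly as you say.

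One small gap relative to the paper's conventions: Definition~\ref{defn2} requires a cobordism to be \emph{connected}, whereas the equality $[S_1]=[S_2]$ in $\mathfrak{N}_n$ a priori only furnishes a compact $(n+1)$-manifold $W$ with $\partial W=S_1\sqcup S_2$ that may have several components. You should remark that any such $W$ can be made connected without altering its boundary by attaching $1$-handles (embedded copies of $D^1\times D^n$) between distinct components in the interior; this is a one-line fix but is needed to match the statement as formulated here.
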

\begin{proof}
	For a proof, see \cite{Milnor}.
\end{proof} This is not a severe restriction. E.g. any two compact manifolds without boundary have a cobordism between them if  $ n=3 $.

As a next step, one may ask whether there can be a Lorentzian cobordism. For compact  $ S_1,S_2 $, there is a theorem by Reinhart and Misner, which states:

\begin{theorem}(Reinhart, Misner) Let  $ S_1 $  and  $ S_2 $  be compact manifolds without boundary. Then there is a Lorentzian cobordism between them if and only if they have the same Stiefel-Whitney numbers	and Euler numbers.
\end{theorem}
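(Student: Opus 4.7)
The plan is to reduce the existence of the Lorentzian structure to the existence of a suitable nowhere-vanishing vector field on the cobordism, and then combine this with Milnor's theorem above. A time-oriented Lorentzian metric on $M$ is equivalent to a nowhere-vanishing vector field $V$ (the future-pointing timelike direction) together with a Riemannian metric on the complementary hyperplane distribution $V^{\perp}$, from which the Lorentzian metric is reconstructed by reversing the signature along $V$. Hence the content of the theorem lies in detecting when a compact cobordism admits such a vector field with the right boundary behavior.

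For necessity, the matching of Stiefel--Whitney numbers is forced by Milnor's theorem, since any Lorentzian cobordism is in particular a topological cobordism. For the Euler number equality, in the simplest case when $V$ can be arranged transverse to $\partial M$, pointing inward on $S_1$ and outward on $S_2$, the Poincar\'e--Hopf theorem for manifolds with boundary gives $\chi(M)=\chi(S_1)$, and applying the same formula to $-V$ (which swaps the roles of the two boundary components) gives $\chi(M)=\chi(S_2)$, whence $\chi(S_1)=\chi(S_2)$. In the more general case, where $V$ must have tangential components along $\partial M$ (unavoidable when $S_1$ and $S_2$ are not diffeomorphic), one passes to the double $DM=M\cup_{\partial M}(-M)$: the Lorentzian structure extends, so $\chi(DM)=0$, which yields $2\chi(M)=\chi(S_1)+\chi(S_2)$; a refined doubling argument applied separately in collar neighborhoods of $S_1$ and $S_2$ then still forces $\chi(S_1)=\chi(S_2)$.

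For sufficiency, Milnor's theorem first produces some cobordism $M$ between $S_1$ and $S_2$. If $\chi(M)$ does not already match $\chi(S_1)=\chi(S_2)$, one modifies $M$ by elementary interior surgeries of middle dimension, which change $\chi(M)$ by $\pm 2$ at each step while leaving $\partial M$ and all Stiefel--Whitney numbers untouched; a parity check inherited from the existing cobordism ensures this adjustment can be made to terminate at the correct value. Once $\chi(M)$ is tuned, the obstruction in $H^{n+1}(M,\partial M)$ to extending a prescribed boundary vector field to a nowhere-vanishing vector field on all of $M$ vanishes, and the desired Lorentzian metric is then assembled from this $V$ together with any auxiliary Riemannian metric on $V^{\perp}$.

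The hardest part is the sufficiency step. One must execute the surgery adjustment without disturbing the Stiefel--Whitney class structure, and then, when $S_1\not\cong S_2$ so that $V$ cannot be chosen globally transverse to $\partial M$, one must still arrange $V$ to have only tangential, nowhere-zero contact with $\partial M$ so that the Lorentzian metric extends smoothly up to the boundary. The necessity direction also hides a subtlety in the doubling construction, where the contribution of tangential boundary behavior to the Poincar\'e--Hopf accounting has to be tracked explicitly, but once that bookkeeping is in place the argument is essentially mechanical.
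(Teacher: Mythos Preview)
The paper does not actually supply a proof of this theorem; it simply refers the reader to Reinhart, Borde, and Larsson. Your sketch goes well beyond what the paper offers and is essentially Reinhart's own argument: reduce the Lorentzian structure to a nowhere-vanishing vector field, use Poincar\'e--Hopf with boundary to extract the Euler constraint, and for sufficiency modify a Milnor cobordism by interior surgery until the obstruction to extending the prescribed boundary vector field vanishes.

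Two points of caution. First, your parenthetical that tangential boundary behaviour of $V$ is ``unavoidable when $S_1$ and $S_2$ are not diffeomorphic'' conflates this result with Geroch's theorem: it is the \emph{chronological} hypothesis there that forces $S_1\cong S_2$, not mere transversality of $V$. Reinhart's theorem, as classically stated, is really about what the paper calls a \emph{weak} Lorentzian cobordism (timelike direction transverse to $\partial M$), and non-diffeomorphic $S_1,S_2$ with matching invariants do admit such transverse fields. If one dropped the transversality requirement entirely, every compact manifold with nonempty boundary carries a nowhere-zero vector field (the obstruction sits in $H^{n+1}(M)\cong H_0(M,\partial M)=0$), so the Euler condition would evaporate and only Milnor's Stiefel--Whitney condition would remain. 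Your ``more general case'' paragraph is therefore addressing a non-issue, and the doubling argument you sketch for it is unnecessary.

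Second, the surgery adjustment is only needed when $\dim M=n+1$ is even (equivalently $n$ odd, where $\chi(S_1)=\chi(S_2)=0$ is automatic but $\chi(M)$ is free). When $n+1$ is odd, doubling gives $\chi(DM)=0$, so $\chi(M)=\tfrac{1}{2}(\chi(S_1)+\chi(S_2))$ is already determined by the boundary; under the hypothesis $\chi(S_1)=\chi(S_2)$ this equals $\chi(S_1)$ and the obstruction vanishes without any modification. Making this dimension dichotomy explicit would tighten your sufficiency argument.
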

\begin{proof}For a proof, see \cite{Reinhart, Borde, Larsson}.
\end{proof}

Finally we want the cobordism to be chronological. And we want to relax the compactness assumptions of  $ S_1,S_2 $.  In 1967, Geroch has given an important proof on the conditions for the existence of a chronological, weak Lorentzian cobordism if  $ S_1, S_2 $  are not compact and of dimension  $ 3 $, see \cite{Geroch}. In his work, Geroch already suggested to relax the assumptions of his theorem by the use of covering manifolds.

In recent years, his theorem was somewhat extended. Now  $ S_1 $  and  $ S_2 $  can have n-dimensions, where  $ n\in\mathbb{N} $, and the compactness condition of the chronological weak Lorentzian cobordism was relaxed to the causal compactness. In this modern form given by \cite{Borde, Larsson}, the theorem goes as follows:

\begin{theorem}(Geroch, Borde)Let  $ S_1 $,  $ S_2 $  be n- dimensional manifolds without boundary (not necessarily compact and not necessarily connected). Let  $ (M, g) $  be a chronological, weak and causally compact Lorentzian cobordism between  $ S_1 $  and  $ S_2 $. Then there is a diffeomorphism  $ \varphi: S_1 \times [0, 1] \mapsto M $  such that the sub-manifold
	
	$ \varphi(\left\{x\right\}\times[0, 1]) $  is time-like for every  $ x\in S_1 $. Furthermore,  $ (M,g) $  is topologically trivial (i.e. diffeomorphic to $S_1 \times [0, 1]$), and  $ S_1 $  and  $ S_2 $  are diffeomorphic.
\end{theorem}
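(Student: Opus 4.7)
The plan is to construct a smooth future-directed time-like vector field $X$ on $M$, show that its integral curves foliate $M$ into segments that connect $S_1$ to $S_2$, and package the flow of $X$ into the claimed diffeomorphism $\varphi \colon S_1 \times [0,1] \to M$. Since $(M,g)$ is time-oriented, such an $X$ exists by the standard partition-of-unity construction gluing locally defined time-like vector fields; after possibly flipping the sign, we may arrange that $X$ points from $S_1$ into the interior of $M$. Because $S_1$ and $S_2$ are space-like hypersurfaces, $X$ is automatically transverse to both.

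Given $p \in S_1$, let $\gamma_p$ denote the maximal integral curve of $X$ with $\gamma_p(0) = p$. It is a future-directed time-like curve, hence injective by the chronological assumption. For $t > 0$ the image lies in $I^+(p)$, so $\gamma_p([0,T]) \subset \{p\} \cup \overline{I^+(p)} \subset \overline{I(p)}$, which is compact by causal compactness. Standard ODE theory then forces $\gamma_p$ to leave the interior of $M$ in finite parameter time, so it must terminate at a point of $\partial M = S_1 \sqcup S_2$. A return to $S_1$ would produce a future-directed time-like curve between two points of a space-like hypersurface; combined with the transversality of $X$ at $S_1$ one could extend slightly to a closed time-like curve, contradicting chronology. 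Hence $\gamma_p$ terminates at a unique $\psi(p) \in S_2$, and a symmetric argument on integral curves through interior points (running backward) shows that every point of $M$ lies on exactly one such curve emanating from $S_1$.

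After reparameterizing each integral curve to have domain $[0,1]$, the map $\varphi(p,t) := \gamma_p(t)$ is smooth by smooth dependence on initial data, injective by the foliation property, and surjective by the previous step. Its differential is an isomorphism at every point because $X$ spans a complement to $TS_1$ along $S_1$, so $\varphi$ is a diffeomorphism by the inverse function theorem. By construction $\varphi(\{p\} \times [0,1])$ is the image of the time-like curve $\gamma_p$, hence time-like, and the restriction $p \mapsto \varphi(p,1)$ realizes the diffeomorphism $S_1 \cong S_2$. Topological triviality $M \cong S_1 \times [0,1]$ follows immediately.

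The main obstacle is the completeness step claiming that $\gamma_p$ actually reaches $S_2$ in finite parameter time without exhibiting pathological asymptotic behavior in the interior. Causal compactness alone confines the image to a compact set, but one still needs a limit-curve argument in the spirit of Geroch's original paper to rule out an integral curve accumulating on a compact invariant subset in the interior. Concretely, one takes a sequence $\gamma_p(t_n)$ with $t_n$ tending to the supremum of the maximal domain, extracts a limit point $q$ in the compact set $\overline{I(p)}$, and uses a convex normal neighborhood around $q$ to either extend $\gamma_p$ past $q$ (if $q$ is interior) or to close up an approximate time-like loop, contradicting chronology. Once this argument is carried through and the behavior near the boundary hypersurfaces $S_1, S_2$ is handled by the transversality of $X$, the remaining steps are routine.
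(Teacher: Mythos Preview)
The paper does not supply its own proof of this theorem; it simply cites Borde and Larsson. Your outline is precisely the classical Geroch argument that those references carry out: build a future-directed time-like vector field transverse to the boundary, trap each integral curve in the compact set $\overline{I(p)}$, argue it must exit through $S_2$, and use the flow to produce the product structure. So in spirit your approach matches what the cited sources do.

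Two points deserve tightening. First, your argument that $\gamma_p$ cannot return to $S_1$ is phrased awkwardly: a future-directed time-like curve from $p\in S_1$ to some other $q\in S_1$ does not by itself yield a closed time-like curve, and the ``extend slightly'' step does not close anything up. The clean argument is purely local: since $X$ is inward-pointing along all of $S_1$, no integral curve of $X$ can approach $S_1$ from the interior, so exit through $S_1$ is impossible. Second, when you reparameterize each $\gamma_p$ to $[0,1]$ you are implicitly using that the hitting time $T(p)$ at $S_2$ is a smooth function of $p$; this follows from the implicit function theorem applied to the flow map and the transversality of $X$ to $S_2$, but it should be stated. The completeness/limit-curve step you flag as the main obstacle is indeed the heart of the matter, and your sketch of how to handle it (extract a limit point in $\overline{I(p)}$, use a convex normal neighbourhood to extend or derive a contradiction) is the right idea, though in Borde's version the chronology violation is obtained more directly by showing that an accumulation point in the interior would force $q\in I^+(q)$.
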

\begin{proof}For a proof, see \cite{Borde,Larsson}.
\end{proof}

This result implies that if  $ S_1 $  and  $ S_2 $  are not diffeomorphic, one can not find a chronological,  weak and causally compact Lorentzian cobordism between them. 

For the reader who is new to these definitions, it should be emphasized that it is in general no problem to find cobordisms and even Lorentzian cobordisms between  $ S_1 $  and  $ S_2 $. For example, Yodzis constructed Lorentzian cobordisms with Morse surgery\cite{Yodzis}. However, once he arrives at his non-singular construction, the Cobordism then has to violate the other requirements of Geroch's theorem, in Yodzis' case they have closed time-like curves. 

So, for the reader who is not well versed in this material, let us see what all the definitions in Geroch's theorem entail exactly.  It is clear that we have to use a causally compact space-time as cobordism, because we want the space-time to include the manifolds  $ S_1, S_2 $  as boundaries in order to complete the topological transition between them. The condition of a chronological space-time (no closed time-like curves) is common in the physical literature and necessary in order to avoid several paradoxes. We therefore want to shortly emphasize again what the condition of a "weak Lorentzian" cobordism exactly implies.

\begin{theorem}Let  $ M $  be a manifold of dimension  $ n $,  $ n\in \mathbb{N} $.  $ M $  can be given a Lorentzian metric  $ g $  such that  $ (M,g) $  is time orientable if and only if  $ M $  admits globally a non-vanishing vector field.\label{theoremvector}
\end{theorem}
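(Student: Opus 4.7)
The plan is to prove both directions separately, using a Riemannian reference metric and a partition of unity as the main technical devices.

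For the ``if'' direction, suppose $M$ admits a nowhere-vanishing vector field $X$. Since $M$ is paracompact, a Riemannian metric $h$ exists on $M$ (built by patching local Euclidean metrics with a partition of unity). I would rescale $X$ so that $h(X,X)=1$, and let $X^{\flat}$ be the covector dual to $X$ via $h$. Then I would define
\begin{equation}
g \;=\; h \;-\; 2\, X^{\flat}\!\otimes X^{\flat}.
\end{equation}
A direct computation in an $h$-orthonormal frame with $X$ as first leg shows that $g$ has signature $(-,+,\dots,+)$, with $X$ itself timelike ($g(X,X)=-1$) and every $h$-orthogonal complement of $X$ remaining spacelike. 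The globally defined timelike field $X$ then furnishes a continuous choice of future cone at every point, so $(M,g)$ is time-orientable.

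For the ``only if'' direction, assume $(M,g)$ is a time-oriented Lorentzian manifold. Around every point $p\in M$ there is a coordinate chart in which $\partial/\partial x^{0}$ is timelike; by flipping its sign if necessary I can assume it is future-directed. This gives an open cover $\{U_{\alpha}\}$ of $M$ together with smooth future-directed timelike vector fields $X_{\alpha}$ on each $U_{\alpha}$. Choose a smooth partition of unity $\{\chi_{\alpha}\}$ subordinate to a locally finite refinement of this cover, and set
\begin{equation}
X \;=\; \sum_{\alpha}\chi_{\alpha}\, X_{\alpha}.
\end{equation}
The key observation is that at each $q\in M$ the set of future-directed timelike vectors in $T_{q}M$ is an open convex cone. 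Since every $X_{\alpha}(q)$ that appears in the sum lies in this cone, and the coefficients $\chi_{\alpha}(q)\ge 0$ are not all zero, the sum $X(q)$ again lies in the cone. In particular $X(q)\neq 0$ for every $q$.

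The substantive point on which the argument rests, and the step I would expect to be the main obstacle, is the convexity of the future timelike cone; without it the partition-of-unity sum could vanish or leave the cone. I would therefore verify this separately: if $v,w$ are future-directed timelike at $q$, then $g(v,w)<0$ in the same time orientation, from which $g(av+bw,av+bw)=a^{2}g(v,v)+2ab\, g(v,w)+b^{2}g(w,w)<0$ for $a,b\ge 0$ not both zero, so $av+bw$ is again timelike and (by continuity of the time orientation along a straight line in $T_{q}M$) future-directed. With convexity in hand, both directions are complete, and the equivalence follows.
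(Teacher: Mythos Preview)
Your proof is correct. The ``if'' direction is essentially the paper's own construction: the paper writes $g_{\mu\nu}=(h_{\alpha\beta}v^\alpha v^\beta)h_{\mu\nu}-2h_{\mu\alpha}v^\alpha h_{\nu\beta}v^\beta$, which reduces exactly to your $g=h-2X^\flat\otimes X^\flat$ once the vector field is $h$-normalized as you do. For the ``only if'' direction the paper simply refers the reader to \cite{Larsson} and gives no argument, whereas you supply a complete partition-of-unity proof with the convexity of the future cone made explicit; this is the standard argument and is more informative than what the paper provides.
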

\begin{proof}See \cite{Larsson} for both directions of the proof. Here we just construct the metric from the vector field since we will use the form of the metric below. Let  $ v $  be a non-vanishing vector field on  $ M $. With  $ M $  being a  manifold, it can be given globally a Riemannian metric  $ h $. Let   $ v_\mu $,  $ h_{\mu\nu}  $  be the components of  $ v $  and  $ h $. Then,  $ g_{\mu\nu}=(h_{\alpha\beta}v^\alpha v^\beta)h_{\mu\nu}-2h_{\mu \alpha}v^\alpha h_{\nu \beta} v^{\beta} $  are the components of a Lorentzian metric tensor for  $ M $.  $ v $  is time-like with respect to the Lorentz structure of  $ g_{\mu\nu} $  and can be used to define the past and future at each point of  $ M $.
	\label{proofvectorfield} 
\end{proof}

If the Lorentzian cobordism in definition \ref{defn2} is weak, then  $ S_1,S_2 $  are space-like hyper-surfaces with respect to the Lorentz structure on  $ M $. Hence, for a weak cobordism, the vector field  $ v $  should, without loss of generality, be interior normal to  $ S_1 $  and exterior normal to  $ S_2 $. 

\begin{remark}Obviously, all these mathematical definitions ensure that one can speak of  $ S_1,S_2 $  as initial and final hyper-surfaces, and that one has a global designation of what constitutes the past and future of every point in the interpolating space-time. Hence, in Geroch's words, "a continuous choice of the forward light cone can be made", a property which he calls "isochronous". 
	
	We now assume we would want to go around Geroch's theorem. Lets say we violate the condition that the vector field  $ v $  is everywhere non-zero. In that case, the Lorentzian metric  $ g_{ab} $  would vanish, but we could possibly use the Riemannian metric  $ h $. To adopt  $ h $  as metric for the neighborhood of a point is physically problematic, since then,  causality structures would get lost for all points within this neighborhood. 
	
	Another way around Geroch's theorem would be not to use cobordisms. This would imply abandoning differentiable manifolds, and instead admit topological spaces which consist of manifolds and sets of singular points. Using topological spaces with singularities would have the advantage that the problems with causality could possibly be confined to isolated points.
\end{remark}
In order to recognize singularities, we now make the following definitions:
\begin{definition}Let  $ M=U \sqcup V $  be a topological space and  $ U $,  $ V $  disjoint subsets of  $ M $, where  $ U $  is a manifold, i.e. a space which is Hausdorff, second-countable, paracompact and where every point  $ p\in M $  has a neighborhood which is homeomorphic to  $ \mathbb{R}^n,n\in\mathbb{N} $. Let the points in  $ V $  do not have such a neighborhood. Following \cite{sing}, we call them singularities.  \end{definition}

But are models that have singularities physically feasible? Physicists use quantum mechanics, which was applied to fields and developed into quantum field theory in order to describe the microscopic world.

In order to study whether the occurrence of a singularity is compatible with quantum field theory, one first needs a rigorous framework of quantum field theory. One of these frameworks are certainly the Wightman axioms, see \cite{Wightman,schottenloher}. These require elements of functional analysis, especially tempered distributions, that make smooth structures necessary. However, it is often rather difficult to show whether a specific model satisfies all of these axioms. In order to work near or even at singularities, we need a definition of quantum field theory that is flexible and has the fewest restrictions that are possible. 

\subsection{Quantum Field Theory}
Physicists usually describe their quantum field theories in terms of path integrals.  The latter are functional integrals of the form  $ \int d\varphi(x,t)e^{s S(\varphi(x,t))} $, where   $ \varphi(x,t): \{\mathbb{R}^n, \mathbb{R}_+\}\rightarrow \mathbb{C} $  is a function which is called a field. The field is  element of a Banach space  $ \Phi $ and  $ S(\varphi):\Phi\mapsto \mathbb{R} $ is a functional of these fields called "action functional". Usually,  $ S(\varphi) $  also depends on partial derivatives of the fields. If  $ s=-1 $, one can describe the path integral in terms of ordinary measure theory, see \cite{functionalintegration}. However, physicists use complex integrals  $ s=i $  and then, one can show that  $ d\varphi(x,t) $  can not be understood with ordinary Lebesgue measure theory. 

As we will see, path integrals can be formulated rigorously and they need slightly less restrictive assumptions than the Wightman axioms. Furthermore, they lead the physicist straightforwardly to a model from which he can do calculations. So we first make  a rigorous definition of quantum field theories in terms of path integrals before we study them near and at singularities induced by topology changes.

A first rigorous definition of path integrals was provided by C. DeWitt-Morette in \cite{Morette4}. As noted in \cite{Cartier}, an integral  $ \int_a^b dx=b-a $, where  $ a,b,x\in \mathbb{R} $  can be used to define the integrator  $ dx $, which could also be defined by  $ \int_\mathbb{R} \exp{(-\pi x^2-2\pi i xy)}dx=\exp{(-\pi y^2)} $  for  $ x,y\in\mathbb{R} $. In the same way one can attempt to define an integrator rigorously for a path integral.

A solution  $ |\psi(\mathbf{x},t)\rangle $  of the Schr\"odinger equation \begin{equation} 
	i\frac{\partial}{\partial t}|\psi(\mathbf{x},t)\rangle=-\frac{1}{2m}\Delta|\psi(\mathbf{x},t)\rangle+V(x)|\psi(\mathbf{x},t)\rangle\label{Schroe}
\end{equation} where  $ t\in\mathbb{R_+} $, and  $ \mathbf{x}\in \mathbb{R}^n,n\in\mathbb{N} $, can be provided by a path integral. In  \cite{Morette4}, it was not clear for which potentials  $ V(\mathbf{x}) $  this integral would exist. A general class of functions were provided in \cite{Albeverio} by Albeverio and H{\o}egh-Krohn, so one has
\begin{theorem}(DeWitt-Morette, Albeverio, H{\o}egh-Krohn)
	A solution of Eq.  (\ref{Schroe}) equation is given by \begin{equation}|\psi(\mathbf{x}_b,t_b)\rangle=\int_{\mathcal{P}_{ab}} d\Gamma^W(\mathbf{x})\exp{\left(-i\int_{t_a}^{t_b}dt V(\mathbf{x}(t))\right)}\cdot |\phi(\mathbf{x})\rangle,\end{equation}
	where   $ \tilde{P}_{t_a, t_b} $  is the space of absolutely continuous paths  $ X_a^b:t\rightarrow\mathbb{R}^n,t\in [t_a,t_b],t_a<t_b\in\mathbb{R},X(t_a)=\mathbf{x}_a\in \mathbb{R}^n,X(t_b)=\mathbf{x}_b\in\mathbb{R}^n $. The measure  $ d\Gamma^W $  is defined over  $ \int_{\mathcal{P}_{ab}} d\Gamma^W=\int_{\mathcal{P}_{ab}}d\mathbf{x} \exp{(iS_0(\mathbf{x}))} $, with the quadratic form  $ S_0(\mathbf{x})=\frac{m}{2}\int_{t_a}^{t_b}dt |\dot{\mathbf{x}}(t)|^2 $  and where  $ V(\mathbf{x})=\int_{\mathbb{R}^n}\exp{(i\mathbf{\alpha} \mathbf{x})}d\mu(\mathbf x) $,  $ |\phi(\mathbf{x})\rangle=\int_{\mathbb{R}^n}\exp{(i\mathbf{\alpha} \mathbf{x})}d\nu(\mathbf{\alpha}):=|\psi(\mathbf{x},t_a)\rangle  $  are Fourier transforms of bounded measures  $ \mu(\mathbf{x}),\nu(\mathbf{\alpha}) $, with $\mathbf{\alpha}\in \mathbb{R}^n$.
\end{theorem}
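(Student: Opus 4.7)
The plan is to interpret the right-hand side as an oscillatory (Fresnel) integral in the sense of Albeverio–Høegh-Krohn, expand the interaction exponential in a Dyson series whose summability is controlled by the bounded-measure hypothesis on $V$ and $\phi$, evaluate each term explicitly by a Parseval identity, and then verify that the resulting function satisfies \eqref{Schroe} with the prescribed initial data.

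First I would make rigorous sense of $d\Gamma^W$. On the Cameron–Martin space of absolutely continuous paths $X\colon[t_a,t_b]\to\mathbb{R}^n$ with $\dot X\in L^2$ and fixed endpoints, $\int d\Gamma^W$ is defined not as a countably additive measure but as a linear functional that on exponential functionals $\exp(i\ell(X))$, $\ell$ continuous linear, reproduces the Fresnel Gaussian $\exp(-\tfrac{i}{2}\langle\ell,C\ell\rangle)$, where $C$ is the covariance inverse to the quadratic form $S_0$. Equivalently, on cylindrical functionals depending on $X(s_1),\dots,X(s_k)$ the integrator gives back the free quantum-mechanical propagator at intermediate times $s_j$. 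This is the abstract Fresnel formulation and is the only object one needs to pair with the interaction.

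Next, using $V(\mathbf{x})=\int e^{i\alpha\cdot\mathbf{x}}\,d\mu(\alpha)$ and $\phi(\mathbf{x})=\int e^{i\beta\cdot\mathbf{x}}\,d\nu(\beta)$, I would Dyson-expand
\[
\exp\!\Bigl(-i\!\int_{t_a}^{t_b}\!V(X(s))\,ds\Bigr)=\sum_{n\ge 0}\frac{(-i)^n}{n!}\int_{[t_a,t_b]^n}\!\prod_{j=1}^n V(X(s_j))\,d^n s,
\]
replace each $V(X(s_j))$ by its Fourier representation, and commute the $d\mu^{\otimes n}\otimes d\nu$ integration with $\int d\Gamma^W$. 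Each inner Fresnel integral is an exponential-of-linear functional of $X$ and therefore evaluates in closed form as a Gaussian in the $\alpha_j,\beta$; after integrating in $\alpha_j$ and $\beta$ one recognizes precisely the $n$-th term of the Dyson–Phillips expansion of $e^{-i(t_b-t_a)H}\phi$ with $H=-\Delta/(2m)+V$. Summing in $n$ and comparing to Kato-type perturbation theory then shows that the path integral equals the unitary propagator applied to $\phi$, hence solves \eqref{Schroe} with $\psi(\cdot,t_a)=\phi$.

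The main obstacle is the interchange of the Fresnel pairing with the $d\mu,d\nu,d^n s$ integrations: the path integrator is only oscillatory, not absolutely convergent, so Fubini is not automatic. This is where the bounded-measure hypothesis is crucial: one obtains the uniform bound $\sum_{n\ge 0}\tfrac{(t_b-t_a)^n}{n!}\|\mu\|^n\,\|\nu\|<\infty$ in total-variation norms, which legitimizes the interchange term by term and gives absolute convergence of the reshuffled series. Verifying the initial condition is then immediate by letting $t_b\downarrow t_a$ in each term, and differentiating term by term in $t_b$ reproduces the Schrödinger equation. Without the bounded-measure assumption on $\mu$ and $\nu$ none of these manipulations is justified, which explains exactly why the hypothesis enters the statement.
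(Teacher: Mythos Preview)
Your proposal is correct and is essentially the Albeverio--H{\o}egh-Krohn argument; note that the paper itself does not supply a proof of this theorem but merely states it as a known result from the cited references \cite{Morette4,Albeverio}, so your sketch reproduces the proof from those sources rather than something the paper develops independently.
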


In \cite{Cartier}, Cartier and DeWitt-Morette then extended the notion of their path integral to quantum field theory. One has the following 

\begin{theorem}(Cartier, DeWitt-Morette) Given two bounded maps  $ \Theta:(\Phi\times\Phi')\rightarrow\mathbb{C} $  and  $  Z:\Phi'\rightarrow \mathbb{C} $, where  $ \Phi,\Phi' $  are two Banach spaces related by a scalar product  $ \langle,\rangle:\Phi'\times\Phi\mapsto \mathbb{R}_{+} $  and  $ \varphi\in\Phi,J\in\Phi' $, one can define an integrator $\mathcal{D}_{\Theta,Z}$
	\begin{equation}
		\int_\Phi \Theta(\varphi,J)\mathcal{D}_{\Theta,Z}\varphi=Z(J)
	\end{equation}and a normed space  $ \mathcal{F}_{\Phi,Z} $ of functionals on $\Phi$ integrable by $\mathcal{D}_{\Theta,Z}$. \label{integrator}
\end{theorem}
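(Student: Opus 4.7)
The plan is to follow the axiomatic construction of Cartier and DeWitt-Morette, which proceeds by first defining the integrator on a small distinguished class of test functionals and then extending it. First I would choose a set of ``generating'' functionals $\Theta_J : \Phi \to \mathbb{C}$ parameterized by $J \in \Phi'$, the natural candidate being the character $\Theta_J(\varphi) = \exp(i\langle J, \varphi \rangle)$, which plays the role of the Fourier kernel in finite-dimensional integration. On this distinguished class, one sets by fiat
\begin{equation}
\int_\Phi \Theta_J(\varphi)\,\mathcal{D}_{\Theta,Z}\varphi \;:=\; Z(J),
\end{equation}
so that $Z$ is declared to be the ``Fourier transform'' of the integrator. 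This is simultaneously a definition and a consistency requirement on the pair $(\Theta, Z)$.

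Second, I would extend the integrator by $\mathbb{C}$-linearity to the vector space of all finite combinations $\sum_k c_k \Theta_{J_k}$, which is forced by the requirement that integration be a linear functional. Third, I would topologize this space with a seminorm inherited from $Z$, in the spirit of Albeverio and H{\o}egh-Krohn, by identifying a functional $\Theta$ with a bounded complex measure $\mu_\Theta$ on $\Phi'$ via $\Theta(\varphi) = \int_{\Phi'} \exp(i\langle J,\varphi \rangle)\,d\mu_\Theta(J)$, and declaring its norm to be the total variation $\|\mu_\Theta\|$. The space $\mathcal{F}_{\Phi,Z}$ is then defined as the closure in this norm of the span of the generating class on which $Z$ (and hence the extension) remains continuous. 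On $\mathcal{F}_{\Phi,Z}$ the integrator is given by the Riesz-type formula $\int_\Phi \Theta(\varphi)\,\mathcal{D}_{\Theta,Z}\varphi = \int_{\Phi'} Z(J)\,d\mu_\Theta(J)$.

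The verifications to discharge are: (i) well-definedness, i.e. that the value of the extended integral depends only on $\Theta$, not on the choice of representing measure $\mu_\Theta$; (ii) continuity and linearity of the extension on $\mathcal{F}_{\Phi,Z}$; and (iii) compatibility, i.e. that in the Gaussian special case $Z(J) = \exp(-\tfrac{1}{2}\langle J, G J\rangle)$ on the Cameron--Martin space one recovers the Wiener-type integrator $d\Gamma^W$ of the previous theorem, so that this construction is a strict generalization.

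The main obstacle, and the real content of the theorem, is step (i) together with the precise delineation of $\mathcal{F}_{\Phi,Z}$: in infinite dimensions there is no Lebesgue reference measure, so the space of admissible $\Theta$ must be rich enough to contain the physically relevant functionals (such as $\exp(-i\int V(\mathbf{x}(t))\,dt)$ composed with free propagators) yet small enough that the ``Fourier inversion'' $\Theta \mapsto \mu_\Theta$ is unique and the extension of $Z$ does not blow up. Balancing these two constraints is exactly the delicate content of the Cartier--DeWitt-Morette framework, and it is why the theorem is stated as an existence result for $\mathcal{F}_{\Phi,Z}$ rather than a global definition on all of $\Phi$.
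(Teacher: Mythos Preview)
Your proposal is substantially more detailed than what the paper itself provides: the paper's ``proof'' is simply a citation to \cite{Cartier,Morette3,Morette2,functionalintegration} with the remark that the subject has been extensively covered elsewhere and will not be repeated. So there is no argument in the paper to compare against; you have supplied an actual sketch where the paper gives none.

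That said, a brief comment on the accuracy of your sketch relative to the cited sources. Your outline---define the integrator on a generating class of characters, extend by linearity, complete in a norm coming from bounded measures on $\Phi'$, and verify well-definedness and continuity---is essentially the Albeverio--H{\o}egh-Krohn route, and it is one legitimate instantiation of the Cartier--DeWitt-Morette scheme. The framework in \cite{functionalintegration} is slightly more general: the functional $\Theta(\varphi,J)$ is not required to be the bare character $\exp(i\langle J,\varphi\rangle)$ but may carry an additional quadratic (or more general) weight, e.g.\ $\Theta(\varphi,J)=\exp\bigl(-\tfrac{\pi}{s}Q(\varphi)-2\pi i\langle J,\varphi\rangle\bigr)$ with $Q$ a quadratic form and $s\in\{1,i\}$, and the corresponding $Z(J)$ is then prescribed accordingly. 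The space $\mathcal{F}_{\Phi,Z}$ of admissible integrands is built relative to this choice of $(\Theta,Z)$ rather than to the pure Fourier kernel alone. Your step (iii), recovering the Gaussian/Wiener case, is therefore not a separate compatibility check but rather one particular choice of the pair $(\Theta,Z)$ within the same axiomatic scheme. This is a minor point of emphasis; the logical structure you describe---definition on a distinguished class, linear and continuous extension, and careful delineation of the domain $\mathcal{F}_{\Phi,Z}$---is exactly what the cited references do.
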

\begin{proof}See \cite{Cartier,Morette3,Morette2,functionalintegration}. The subject has been extensively covered in books and journal articles that are available online. So we will not repeat the proofs here.
\end{proof}

One can use this result to define a quantum field theory along the lines of DeWitt and DeWitt-Morette \cite{BDeWittMorette}.
\begin{definition}Let   $ x\in M $  be a point in an n-dimensional Lorentzian space-time manifold and \begin{equation}\Theta(\varphi(x),J(x))=\mu(\varphi(x))\exp{(i(S(\varphi(x))-\langle J(x),\varphi(x)\rangle))},\end{equation}
	with  $ S(\varphi(x)) $  as the classical action of a system,  $ J\in \Phi' $  is called "source". Let  $ x_1,x_2\in M $  and let the fields  $ \varphi(x_1),\varphi(x_2)\in \Phi $  fulfill  \begin{equation}\left(\varphi(x_i),\varphi(x_j)\right)_P=G^{+ij}-G^{-ij}\label{commutatorpeierls},\end{equation}
	where  $ (A,B)_P $  is defined by the Peierl's bracket 
	\begin{equation}(A,B)_P=B\frac{\overleftarrow{\delta}}{\delta\varphi(x_i))} G^{-ij} \frac{\overrightarrow{\delta}}{\delta\varphi(x_j))}A-A\frac{\overleftarrow{\delta}}{\delta\varphi(x_i))} G^{-ij} \frac{\overrightarrow{\delta}}{\delta\varphi(x_j))}B,\end{equation}
	with   $ G^{\pm ij} $  as the advanced/retarded Green's function of  $ S'' $  in the expansion \begin{equation}S(\varphi_0+\varphi)=S(\varphi_0)+\frac{1}{2}S''(\varphi_0) \varphi  \varphi+\ldots\label{perturb},\end{equation} 
	where  $ \varphi_0 $  solves classical equations of motion that can be obtained by minimizing the variation of  $ S $  and hence  $ S'(\varphi_0)=0 $.
	
	Denote two sets of space-time points on an n-dimensional Lorentzian manifold  $ M $  by  $ a $  and  $ b $, where all points in  $ b $  lie in the future light cone of al points in  $ a $.

	Let  $ \langle out| $    and  $ | in\rangle $  be quantum mechanical (bosonic) state vectors in a Hilbert space  $ \mathcal{H} $, where Dirac notation was used. Associate  $ |in\rangle $   with the quantum state at the beginning of some time dependent dynamics at  $ a $  and   $ \langle out| $  with a quantum state at the end of this dynamical process at   $ b $.
	By this we mean that we can write   $ |out\rangle $  as a functional\begin{equation} \langle out|=\langle \Psi(\varphi(b))|=N\int_{\mathcal{P}_{ab}}D_{\Theta,Z}\varphi \; \Theta(\varphi,J)\langle in|\label{pathint2}.
	\end{equation}
	In Eq.  (\ref{pathint2}),   $ N $  is  a normalization constant, and  $ \mathcal{D}_{\Theta,Z} $  is chosen such that it is invariant under translation, or  $ \int_{\mathcal{P}_{ab}} \frac{\delta}{\delta\varphi}(\Theta(\varphi,J))\mathcal{D}_{\Theta,Z}\varphi=0 $  and  $  \mathcal{D}_{\Theta,Z}(M\varphi)=det(M)\mathcal{D}_{\Theta,Z}\varphi $.
	
	The path integral was mathematically at first only constructed for Paths in a Banach space of loops  $ X:t\rightarrow\mathbb{R},t\in [t_a,t_b],t_a,t_b\in\mathbb{R},X(t_a)=0,X(t_b)=0 $. An affine transformation then allowed paths 
	$ \tilde{P}_{t_a, t_b}: X_a^b:t\rightarrow\mathbb{R}^n,t\in [t_a,t_b],t_a,t_b\in\mathbb{R},X(t_a)=\mathbf{x}_a\in \mathbb{R}^n,X(t_b)=\mathbf{x}_b\in\mathbb{R}^n $, which are not in a Banach space, see \cite{functionalintegration}, p. 64. Finally it was extended to paths  $ \mathcal{P}_{ab} $  of fields
	$ \varphi(x)\rightarrow\mathbb{C}^n,x\in [a,b]\subset M $, where  $ M $  is a Manifold and  $ a=(t_a,\mathbf{a})\in M $  is in the past light-cone of  $ b=(t_b,\mathbf{b})\in M $  and  $ \varphi(a)=\varphi_a\in\mathbb{C}^n, \varphi(b)=\varphi_b\in \mathbb{C}^n $   are Cauchy data for the field, see \cite{functionalintegration}, p. 291. This allows us to use theorem \ref{integrator} for the existence of the integrator. 
	
	The expectation value of a field operator  $ \hat\varphi(x) $  acting on  $ \mathcal{H} $, with  $ x\in M $  being in the future light-cone of  $ a $  and the past light-cone of  $ b $   is given by \begin{equation}\langle out| \hat\varphi(x))|in\rangle=\frac{1}{i}\frac{\delta}{\delta J}\langle out|in\rangle.\label{expect}\end{equation}
	Let  $ x_k\in M  $, with  $ k $  as index set, be in the past light-cone of  $ b $  and in the future light-cone of  $ a $. The so-called correlation function is then given by
	\begin{equation}\frac{\delta}{i\delta J(x_k)}\ldots\frac{\delta}{i\delta J(x_j)}\langle out|in\rangle=\langle out|T((\hat\varphi,\ldots,\hat\varphi)(x_k),\ldots,(x_j))|in\rangle\label{corr1}\end{equation}
	where  $ T $  is the chronological ordering operator defined by \begin{equation}T((\hat\varphi_1 \hat\varphi_2\ldots\hat\varphi_n )(x_1,x_2,\ldots x_n))=\hat\varphi_1(x_1)\hat\varphi_1(x_2)\ldots \hat\varphi(x_n),\end{equation}
	with  $ x_k $  lying the future light-cone of  $ x_{k+1} $  and  $ x_{k} $  and  $ x_{k+1} $  lying both in the future light-cone of  $ x_{k+2} $  and so on.
	Furthermore, \begin{equation}\langle out|in\rangle=N\int_{\mathcal{P}_{ab}}\mathcal{D}_{\Theta,Z}\varphi \;\Theta(\varphi,J).\label{pathintegral}\end{equation}

\end{definition}

Using Eqs.  (\ref{commutatorpeierls}) and (\ref{perturb}), one finds that to leading order  $ \mu(\varphi)=\frac{1}{\sqrt{|sdet G^+(\varphi)|}} $, where  $ sdet $  is the super-determinant of the advanced Green's function  $ G^+ $, which is the inverse of  $ S'' $  in the expansion, see \cite{DeWittbook}.	

\begin{remark}At first, the  $ |in\rangle $  and  $ \langle out|  $  states seem a bit undefined. However,  one may employ a method that is usually used in quantum gravity, see e.g. \cite{Leutwyler,DeWitt, Hartle,Feng,Hartle}: It is usually possible to introduce  canonical phase space variables  $ p,q $  for a classical field theory with a given action. Using phase space variables, we can write the action functional as 
	\begin{equation}S(p,q,t_b)=\int_{t_a}^{t_b}dt\left( p_i \dot{q}^i-H(p,q,t)\right)\end{equation}
	
	where  $ H $  is the Hamiltonian. At least for gravity, a suitable functional derivative with respect to the fields then shows that the  $ \langle out| $  states fulfill a functional Schr\"odinger equation for a Hamilton operator  $ \hat H $, see \cite{Hartle,Feng}.

	Solutions of the Schr\"odinger  equation  fulfill the superposition principle. If one can define a scalar product for  $ \langle out| $, one may be able to expand the solutions of this functional Schr\"odinger equation in an orthonormal series. One may possibly use this to show that  $ \langle out |$  is an element of a complete metric space with a scalar product, i.e. a Hilbert space.
	
	With  $ \langle out| $  and   $ |in\rangle $  in a Hilbert space, Eqs.  (\ref{expect}) and (\ref{corr1}) then become basically definitions for the field operator  $ \hat\varphi $. Showing that  $ \hat\varphi $  satisfies the Wightman axioms for any possible action would still require some work. However, in case where one can compute correlation functions from theories satisfying the Wightman axiom, one may show equality by simply computing the correlation functions from Eqs.  (\ref{expect}) and (\ref{corr1})  and then compare. 
\end{remark}
\begin{remark}The definition of quantum field theory above was given for bosonic fields. Path integrals over fermionic fields can be described with the help of functional integrals over so-called Grassmann variables. The path integral of DeWitt-Morette and Cartier was extended for them in \cite{functionalintegration}.
\end{remark}Very often, a path integral like the one from theorem (\ref{integrator}) does not exist, i.e. will be divergent. Assume, you want to compute an integral  $ \int_{t_a}^{t_b}f(x)dx $  of a function  $ f(x),x,f\in\mathbb{R} $  but this integral diverges. One could then make a Taylor expansion of  $ f(x) $  around some  $ x_0 $. But in the case of quantum field theory, even the functional integral over the perturbation series in Eq.  (\ref{perturb}), with the expansion of the action around a background is often divergent. As a solution, physicists have come up with the definition of renormalization. Its mathematical idea is basically as follows \cite{functionalintegration}:

\begin{definition}Let   $ S(m,e,\varphi) $  be the action of a classical system, where  $ m,e $  may be constants with some physical dimension, like mass  $ m $, charge  $ e $  and maybe others, like Einstein's gravitational constant  $ \kappa $. Let  $ C(m,e,\varphi )$ be a functional which is added to the action  $ S $, i.e. define  $ S_{total}(m,e,\varphi)=S(m,e,\varphi)+C(m,e,\varphi) $  and expand  $ S_{total}(m,e,\varphi+\varphi_0) $  according to (\ref{perturb}) around a classical solution  $ \varphi_0 $  of  $ S(m,e,\varphi+\varphi_0) $. Denote the terms in the summation of the expansion of   $ S_{total}(m,e,\varphi+\varphi_0) $  as  $ S_{t1},S_{t2},\ldots, S_{tn} $. 
	
	$ C(m,e,\varphi $ ) is called counter-term if the functional integrals of Eq.  (\ref{pathintegral}) converge for all the functions  $ S_{t1},S_{t1},S_{t2}, \ldots S_{tn} $  if they are used as actions in  $ \Theta(\phi,J) $. For each addend in the series, one may need to use different constants  $ m,e $  or others in  $ C(m,e,\varphi) $  to achieve convergence of the functional integrals. A theory is called renormalizable, if the functional integral for the entire expansion of $ S_{total}(m,e,\varphi+\varphi_0) $ becomes finite when the values of the constants $m,e,\ldots$ do not change at different orders $ S_{t1},S_{t1},S_{t2}$ of the expansion. \label{counterterm}
\end{definition}

\begin{remark}For the physicist, this implies that he can measure the constants  $ m,e,\kappa $  and so on by making an experiment that is supposed to be the result of a functional integral over e.g.  $ S_{t1} $, i.e. at low orders of the perturbation expansion. Then the renormalized quantum field theory with the counter-terms will make predictions for the results of the functional integral at the higher orders with the functional integration of  $ S_{t2} $  and so on, since the same counter-term can be used to compute finite correlation functions for the higher orders in perturbation theory. 
	
	The reader with knowledge on quantum field theory books may have found the above definition insufficient, however, there are many methods to renormalize the path integral and find the correct counter-term. Since we are working with complex integrands, most of these methods are based on analytic continuation. For more information about the various methods, the reader is advised to consult a quantum field theory book \cite{DeWittbook}.
\end{remark}

\begin{remark}Sometimes, an action  $ S(\varphi) $  is invariant on a certain gauge transformation acted out on the fields, e.g. an  $ U_1 $  transformation, or a transformation with the diffeomorphism group. In that case, the measure must be adjusted in order to avoid over-counting. DeWitt \cite{DeWitt22} and later Fadeev and Popov \cite{Faddeev} have  discovered a notion how to write functional integrals over such spaces with gauge fixing terms and so-called ghost fields.
\end{remark}
\begin{remark}If one applies the usual techniques for gauge fields on the functional integral for gravity, one of these functional integral goes formally over the diffeomorphism group. The latter is not isomorphic to a Banach space in general. Therefore, this procedure does not make sense in the path integral framework of DeWitt-Morette. Fortunately, in \cite{DeWitt50years}, DeWitt has found a technique to make his results compatible with the formalism of DeWitt-Morette and Cartier. The end result is the same: For an action with a bosonic/(fermionic) field which is invariant under a gauge transformation, one has to add a gauge fixing term and another functional with ghost terms.
	
\end{remark}

\begin{remark}Often, we are evaluating  path integrals on Euclidean space
	\begin{equation}
		Z_{e}=\int_{\mathcal{P}_{ab}} d\phi\exp{(-S_{eu})}
	\end{equation}where the metric involved in the action functional  $ S_{eu} $  is now Riemannian. The procedure that sometimes connects Lorentzian and equivalent Euclidean path integrals is called Wick-rotation. However, it is often not permitted to go from a Lorentzian path integral to an Euclidean one, evaluate that and then go back to Lorentzian space. In axiomatic quantum field theory, the Osterwalder-Schrader theorem \cite{Osterwalder} gives some conditions when there is an isomorphism of quantum field theories on Minkowski and on Euclidean space-time. DeWitt could only prove this isomorphism to one loop order for non-gauge theories with the path integral formalism, see his contribution in \cite{functionalintegration}. Recently, Visser has shown in \cite{Vis2,Vis3} that for curved space-times the there is often no isomorphism between Euclidean and Lorentzian metrics and one can not transform a path integral involving an arbitrary Euclidean metric back to a functional integral with an that contains fields in Lorentzian space-times. Visser proposed to fix a time-like vector field normed to unity and use theorem \ref{theoremvector} to get an Euclidean metric from a Lorentzian one and back.
\end{remark}
\subsection{The path integrals that we investigate in this article}

In this article, we are investigating the path integral of quantum gravity.
\begin{definition}	Let the point sets  $ a $  and  $ b $  be coordinates of initial and final points where we can define 3 surfaces  $ S_1 $  and  $ S_2 $  with three metrics  $ \gamma_{ij}^a $, and  $ \gamma_{ij}^b $  and the path integral goes over all 4 metrics  \begin{equation} g_{\mu\nu}=\left(\begin{matrix}
			-\alpha+\beta_k\beta^k	&  \beta_j\\
			\beta_i	& \gamma_{ij}
		\end{matrix}\right) \label{dewittparam} \end{equation}   of cobordisms  $ M $  between  $ S_1 $  at  $ a $  and  $ S_2 $  at  $ b $, see \cite{DeWittbook1, DeWitt, DeWitt22, DeWitt3,Geroch,Leutwyler, Hartle, Feng}:
	
	\begin{equation}
		\langle out|in\rangle=Z(\gamma_{ij}^b(\overline b,\overline b'),t_b)=\int_{\mathcal{P}_{ab}^\gamma}\mu(g_{\mu\nu}) dg_{\mu\nu}\exp{(iS_g(g_{\mu\nu}))}\label{eq:gavipath}
	\end{equation}
	$S_g$ is Einstein's action, $ \mathcal{P}^\gamma_{ab} $  is now the space of  $ 4 $  metrics  $ g_{\mu\nu} $  that define a  chronological weak and causally compact cobordism with  $ \gamma_{ij}^a $  at  $ a $  and  $ \gamma_{ij}^b $  at  $b$, and $\overline {b},\overline{b}'\in b $.  
\end{definition}

If we try to investigate path integrals over space-times with singularities (that are no cobordisms) later in the article, we will mention it explicitly.
\begin{remark}$ S_g $ yields classical equations of motion that are diffeomorphism invariant. In order to evaluate the path integral, it is understood that one has to add gauge-fixing and ghost terms to the gravity action and define additional functional integrals over the ghost fields along the lines of \cite{DeWitt50years,functionalintegration}.
\end{remark}
\begin{remark}	In order to define Eq.  (\ref{eq:gavipath}) properly, we should make sure that we have a metric for the space of 4 metrics, because only then we can get a Banach space for the paths that is needed in the functional integration. Additionally, a Banach space would help defining functional derivatives with functionals of metrics.
\end{remark}

The analysis of spaces of metrics in quantum gravity began with the early article from DeWitt \cite{DeWitt} and Fischer  \cite{fischer} where the space  $ Riem(M^3)/Diff(M^3) $  of Riemannian  $ 3\times3 $  metrics  $ \gamma_{ij}^b $  divided by the diffeomorphism group was partially analyzed.

Since quantum gravity should be invariant under coordinate transformations and diffeomorphisms, metrics related by the action of a coordinate transformation or a diffeomorphism define the same geometry. The quotient space  $ Riem(M^3)/Diff(M^3) $  turned out to be much more difficult than the space  $ Riem(M^3) $. Fischer found that quotient singularities arise at the boundary of   $ Riem(M^3)/Diff(M^3) $  since metrics that define highly symmetric spaces can be transformed easily into different metrics that yield the same space. Hence in contrast to  $ Riem(M^3) $, the space   $ Riem(M^3)/Diff(M^3) $   was not a manifold. In the same conference proceedings where the article of Fischer can be found, DeWitt \cite{DeWittsuperspace} argued  $ Riem(M^3)/Diff(M^3) $  can be extended such that it becomes a manifold (see also \cite{DeWitt11} for an introduction).

More recent articles on that topic are the ones from Giulini \cite{Giulini} and Anderson \cite{Andersonconfig}. From the mathematical side, one has the articles of Ebin \cite{ebin} that came shortly after DeWitt's first investigation, and Clarke \cite{Clarke} , and Bauer, Harms and Michor\cite{michor}  

\begin{remark}The articles of and DeWitt that define a metric for the space of metric tensors often write about the space of  $ 3\times 3 $  metrics. However, in the path measure of Cartier and DeWitt-Morette, one has  $ 4\times4 $  metrics, or, since the dimensionality of the metrics should not be a restriction to define a path integral over metrics, we need a metric for the space of Lorentzian  $ n\times n $  dimensional Lorentzian metrics. Fortunately, the details of the works of Fischer, Ebin, and Bauer make clear that their considerations hold for spaces of metrics of manifolds with arbitrary dimension. The  authors state that their metrics for  $ Riem(M^n)/Diff(M^n) $  are often only defined for closed manifolds. The Lorentzian cobordisms of gravity are, however, just causally compact. If one looks at \cite{ebin,Clarke,michor,fischer,DeWittsuperspace} one notes that their considerations hold for compact manifolds. Furthermore, especially the metric of  $ Riem(M^n)/Diff(M^n) $  in \cite{DeWittsuperspace} is also finite for asymptotically flat space-times where one has appropriate fall-off conditions for the metrics  $ g $. We will simply add such metrics  $ g $  to the space of paths, where the metric of  $ Riem(M^n)/Diff(M^n) $  is finite.
\end{remark}The  $ 4\times4 $  metrics of the cobordisms in quantum gravity are Lorentzian. As we have noted, the Wick rotation is in general not permissible from an arbitrary Euclidean manifold to a Lorentzian one. We can, however, adopt the procedure outlined by Visser \cite{Vis3}.

\begin{definition} We use theorem \ref{theoremvector} to augment a Lorentzian metric  $ g_L $  with an arbitrary unity normed time-like vector field $v$ and call the space of these vector fields as $V_{gL}$: 
	
	\begin{equation}g_\epsilon^{ab}=(g_L)^{ab} -i\epsilon v^{a}v_{b}\end{equation}
	A Riemannian metric is then described by the limit  $ \epsilon=2i $  and the Lorentzian case is recovered by the limit  $ \epsilon=0 $. That way, we can convert any Lorentzian cobordisms to an Euclidean metric and then apply the results for the spaces of metrics. 
\end{definition}Since the action of gravity is diffeomorphism invariant, two metrics that are related by a diffeomorphism should have the same norm.

\begin{definition}We define the metric for  $ g_{L1}^{\mu\nu},g^{\mu\nu}_{L2} $  in   $ L(M^4)/Diff(M^4) $, the space of differentiable Lorentzian Metrics for  $ M $  divided by the Diffeomorphism group as
	\begin{equation} d_{L}(g_{L1}^{\mu\nu},g^{\mu\nu}_{L2})=inf_{v\in V_{g_{L1}},v'\in V_{g_{L2}}}(d_{Riem/diff}(g_{1\epsilon=2i}^{\mu\nu}, g_{2\epsilon=2i}^{\mu\nu})).\label{riemdiff}\end{equation}
\end{definition}We now only have to find  $ d_{Riem/diff} $. There are several choices. One may look at the set of Sobolev Riemannian metrics that can be defined for the geometry on  $ M $.
\begin{definition}Sobolev Riemannian metrics are Riemannian metrics whose derivatives are square integrable up to order  $ s, s>n/2 $  for an  $ n\times n $  dimensional metric over the volume of  $ M $. For these metrics, one can get a metric  $ d^s_{Riem} $, see \cite{ebin}, \cite{fischer}. 
\end{definition}According to \cite{fischer} from  $ d^s_{Riem} $  one can then deduce a metric  $ d_{Riem} $  that induces the topology of  $ Riem(M) $  by:
\begin{definition}
	\begin{equation}
		d_{Riem}(g_{1\epsilon=2i}^{\mu\nu},g_{2\epsilon=2i}^{\mu\nu})=\sum_{s>n/2}\frac{1}{2^s}\frac{d^s_{Riem}(g_{1\epsilon=2i}^{\mu\nu}, g_{2\epsilon=2i}^{\mu\nu})}{1+d^s_{Riem}(g_{1\epsilon=2i}^{\mu\nu}, g_{2\epsilon=2i}^{\mu\nu})}\label{Fischermetric1}
	\end{equation}
	From the metric for  $ Riem(M) $, one can then get a metric  $ d_{Riem/diff} $  by 
	\begin{equation}d_{Riem/diff}(g_{1\epsilon=2i}^{\mu\nu},g_{2\epsilon=2i}^{\mu\nu})= \inf_{\zeta\in Diff_F(M)}\left(d_{Riem}\left(g_{1\epsilon=2i}^{\mu\nu},\zeta*g_{2\epsilon=2i}^{\mu\nu}\right)\right)\label{driem}.\end{equation}
	
\end{definition}
\begin{remark}Note that this works only if  $ g_{1\epsilon=2i}^{\mu\nu},g_{2\epsilon=2i} $  have square integrable derivatives up to order  $ s>n/2 $.
\end{remark}

In \cite{DeWittsuperspace}, DeWitt has found another choice as a metric for  $ Riem(M) $  which yields the geometries for   $ Riem(M^n)/Diff(M^n) $, in the sense that the geometries are identical if and only if the metric between them is zero. It has the advantage that one does not need square integrable derivatives of the metrics.

\begin{definition}For  $ g_{\mu\nu} $  and  $ g_{\mu\nu}+\delta g_{\mu\nu}\in Riem(M) $, one has
	\begin{equation}
		d_{Riem}(g_{\mu\nu},g_{\mu\nu}+\delta g_{\mu\nu})
		=\int \int d^4x d^4x' G^{\mu\nu\alpha\beta}(g, x,x')\delta g_{\mu\nu}(x)\delta g_{\alpha \beta}(x') \label{dewittmnetric}
	\end{equation} with
	\begin{equation}
		G^{\mu\nu\alpha\beta}(g,x,x')=\frac{1}{2}\sqrt{g(x)}(g^{\mu\alpha}(x)g^{\nu\beta}(x)+g^{\mu\beta}(x)g^{\nu\alpha}(x)+\lambda g^{\mu\nu}(x)g^{\alpha\beta}(x))\delta(x,x'),
	\end{equation}
	where  $ \lambda\neq-2/n $. 
	
\end{definition}
\begin{remark} From this metric, the measure of Euclidean quantum gravity can be constructed see \cite {Hamber}.
\end{remark} 

\begin{remark}The components of metrics tensors are of course coordinate dependent.  Therefore, one also has to use Eq.  (\ref{driem}) to get the final metric for  $ Riem(M^n)/Diff(M^n) $. Practically, this means that if one wants to determine the distance between two metrics  $ g $  and  $ g'=g_{\mu\nu}+\delta g_{\mu\nu} $  one has to fix one metric  $ g $, and then find the coordinate transformation  $ \zeta $  such that  $ d(g,g') $  is at a minimum.
\end{remark}

Similarly as for fields  $ \varphi(x)\rightarrow\mathbb{C} $, one can now select different times  $ t_1,t_2\in\mathbb{R},t_1<t_2 $  and fix Cauchy data for the components of the metric  $ g_{\mu\nu}(x),x\in M $  for points  $ a=(t_a,\mathbf{x_a})\in M $  and  $ b=(t_b,\mathbf{x_b})\in M $. 

\begin{remark}It should be noted here that the first attempt to extend the path integral formalism of DeWitt-Morette for fields was in fact an extension to infinite dimensional spaces dual to nuclear spaces in order to try to write the gravitational amplitude \cite{ClarkeGrav}.
\end{remark}

Again, since the action is gauge invariant, one has to add gauge fixing terms and Grassmannian ghost fields to the integrand, according to  \cite{DeWitt50years,functionalintegration}. Additionally, one has to supply an appropriate counter-term to cancel one-loop divergences, see 't Hooft\cite{thoofvelt}.

\begin{remark}If one then does a perturbation expansion around some background metric  $ g0 $, one finds that the theory can only be renormalized up to second order\cite{thoofvelt}. One should note that inconsistencies arising in a perturbative series do not imply that the path integral itself is inconsistent. The path integral could be non-perturbatively consistent within a framework of asymptotic safety \cite{Weinberg}. 
\end{remark}
\begin{remark}One can express the action in terms of a Hamiltonian. In eq. (\ref{dewittparam}), one may set  $ \alpha=1,\beta_i=0 $, which corresponds to a deliberate choice of a certain coordinate system. Then, one can express the action of the functional integral in terms of  $ 3 $  metrics  $ \gamma_{ij} $  only. One can then rewrite the action such that it includes the ADM Hamiltonian of gravity \cite{ADM} which  depends solely on  $ \gamma_{ij} $. By functional differentiation of the path integral of gravity with the Hamiltonian in the integrand with respect to the metric, one observes that the amplitude solves the Wheeler-DeWitt equation 
	\begin{equation}i\partial_t Z(\gamma(b))=\hat H_g Z(\gamma(b)),\label{wdw2334}\end{equation}  which is a functional Schr\"odinger equation, with  $ \hat H_g $  as Hamiltonian operator of gravity \cite{Hartle,Feng}. In Eq.  (\ref{wdw2334}),  $ b $  are are points in  $ S_2 $  and  $ \gamma $  is a short notation for the metric tensor field  $ \gamma_{ij}(b) $.
\end{remark}
\begin{remark}Note that  $ \hat H_g $  is usually numerically zero if no Gibbons Hawking York boundary term, see\cite{York}, was added to the action from which  $ \hat H_g $  was derived. Without such boundary term, the Wheeler-DeWitt equation  then describes no time evolution for its operators and states\cite{DeWitt,Feng}.

	One could  chose a suitable canonical transformation, which renders the Hamiltonian non-zero, but by the Groenewold-van Hove theorem, each of those canonical transformations would result in a different quantum theory,see \cite{Groenewold, Haj}. 
	
	So one may infer from this ambiguity of the gravitational path integral that a boundary should be present in the space-time.
	
	Such a boundary must not be at infinity, but can, for example, also define a hole in the space-time that emerges from quantum mechanical processes. 
	
	Moreover, in Eq.  (\ref{eq:gavipath}) the  $ 4\times4 $  metrics over which is integrated are actually defined! as Lorentzian cobordisms  $ M $  which are manifolds with boundaries at the sets of the start points and endpoints of the time evolution. So by the very definition of the path integral of quantum gravity, boundaries should be present in the space-time even if one considers processes which start at some time $t_1$ and with a given three metric $\gamma_{ij}(a)$ (i.e. the space of a universe at early time) and end at some time $t_2$ with a given three metric $\gamma_{ij}(b)$ (i.e. the space of a universe at late time) for a closed universe. This was also correctly noted by Hartle and Hawking in \cite{Hartle} who did, however, use a Hamiltonian without a boundary term even if their action such a boundary term supplied. That a boundary term in the action also leads to a non-zero Hamiltonian was mentioned at first in \cite{Feng}. If one were to put e.g. the late configuration  $\gamma_{ij}(b)$ at $t_2$ to infinity, then one also would have to supply a boundary term, as was noted by DeWitt \cite{DeWitt}.
\end{remark}

\begin{remark}Sometimes we will use the path integral of Euclidean quantum gravity which goes over Riemannian metrics. In that case, however, one should restrict the path integral to ones that can be rotated back into Euclidean space-time, which is not possibly for every Riemannian metric\cite{Vis2,Vis3}.
\end{remark}
\begin{remark}	One should also note that Gibbons Hawking and Perry discovered that, without adding Ghost and Gauge fixing terms, the Euclidean path integral of quantum gravity has a diverging integrand if the action is not evaluated within a perturbation series \cite{Zetafunction}. However, recently in \cite{finiteeu,finiteeu2} ghost and gauge fixing terms were added to the action and the calculation of the amplitude was repeated. The author of  \cite{finiteeu,finiteeu2} argues that the path integral of Euclidean quantum gravity  converges if ghost and gauge fixing terms are added. If this result is correct, it would be interesting, since it would indicate that a non-perturbative evaluation of the path integral of Euclidean quantum gravity is possible. Amplitudes of Euclidean field theories are generally useful to compute the time independent ground state of quantum systems, where one should not have much difficulties of transforming the system back to Lorentzian metrics. Euclidean amplitudes are also
	are related to the canonical partition function that determines the statistical mechanics of a quantum field theory, which can be used for thermodynamical considerations. 
\end{remark}

Since many claims about topology changes come from the String theory literature, we also take a look at the path integral for the nonlinear sigma model:
\begin{definition}
	The Polyakov action for a closed string on a target space-time background  $ M $  with metric  $ G_{\mu\nu} $  is \begin{equation}S_p=\frac{1}{T}\int_\Sigma d^{2}\Sigma\sqrt{-\tilde\gamma}\tilde\gamma_{uw}\partial^{u}X^{\mu}(\sigma,t)G_{\mu\nu}\partial^{w}X^{\nu}(\sigma,t),\label{poly}\end{equation} where  $ T $  is a constant called string tension and  $ \tilde\gamma_{uw} $  is a  $ 2\times2 $  metric tensor of a world-sheet. The latter is parameterized by two coordinates  $ \sigma\in [0,2\pi],t\in\mathbb{R}_{\geq 0} $. The functions  $ X^\mu(\sigma,t):([0,2\pi],\mathbb{R}_{\geq 0})\rightarrow M $  describe a mapping of the world-sheet $ \Sigma $   into the target space  $ M $, see \cite{deligne}.
	
	The path integral over the Polyakov action 
	\begin{equation}
		\int_{\mathcal{P}_{ab}^{X^\mu}} \int_{\mathcal{P}^{\tilde{\gamma}_{uw}}_{ab}} d\tilde\gamma_{uw}dX^\mu \exp{(iS_p)}
	\end{equation}
	
	thus involves a functional integration over  $ X^\mu $  and  $ \tilde\gamma_{uw} $.  The integration over the world sheet metric is similarly defined as the integral in quantum gravity. The space of paths  $ \mathcal{P}^{\tilde{\gamma}_{uw}}_{ab} $    is the space of 2 metrics of cobordisms with one dimensional surfaces as boundaries at times  $ t_a $  and  $ t_b $,  $ t_a<t_b $. If one defines two 2 dimensional point sets  $ a=(\sigma_a,t_a) $  and  $ b=(\sigma_b,t_b) $  where the points  $ a $  lie in the past of  $ b $, the space of paths  $ \mathcal{P}^{X^\mu}_{ab} $  is the space of fields  $ X^\mu(x) $  with  $ x=(\sigma,t) $  as a two dimensional point in the future light-cone of  $ a $  and in the past light-cone of  $ b $. 
\end{definition}
\begin{remark}In a scattering process, the world-sheet appears to be a manifold with boundaries. Thus, the Polyakov action probably also should be augmented by a boundary term for a one dimensional boundary at the two ends of the world-sheet  $ t_0 $  and  $ t_1 $, even though this is often not done in the string theory literature. The procedure to convert the classical theory given by the Polyakov action into an equivalent one based on a canonical Hamiltonian and write a functional Schr\"odinger equation by functional differentiation of the amplitude, is  possible, of course also with the Polyakov action, and not only with gravity. Without a boundary term in the world-sheet, the functional Hamiltonian, at least the part for the world-sheet also vanishes.
	String theorists usually argue, albeit on different grounds, that the ends of the string would have to be fixed at D-branes (which can form boundaries). 
\end{remark}
\begin{remark}The path integral over the  $ X^\mu(x) $  in the amplitude for the Polyakov action is different from the usual path integral in field theory, since the field has the target space  $ M $  as its co-domain, which is usually a curved Lorentzian manifold with complicated topology and not just  $ \mathbb{C} $. The domain of  $ X^\mu(x) $  is the 2 dimensional world sheet  $ \Sigma $, which can have a complicated topology, similar as in ordinary 4 dimensional quantum field theory. 
\end{remark}
\begin{remark}The Polyakov action is invariant under world-sheet diffeomorphisms and Weyl transformations. Accordingly, one has to add ghost actions to the path integral that follow rigorously from DeWitt's procedure \cite{DeWitt50years}, in order to define it correctly.
\end{remark}Usually, string theorists write their path integrals over Euclidean world sheets and space-time backgrounds, i.e. they do a Wick rotation before the path integration. It turns out that the amplitude then becomes a sum over topologically distinct Euclidean world-sheets.

\section{Path integrals over metrics and singularities}\label{s1}
The earliest argument against topology changes was made by DeWitt and Anderson in \cite{Topology} for a certain case with a conical singularity.

In \cite{DeWitt9} and \cite{DeWittbook}, DeWitt tried to come forward with a new argument based on homotopy considerations. But this argument only came in form of two short comments, from which it is not really clear to which topology changes and to which quantum field theories this argument really applies. Here we will analyze it in detail and apply these considerations on quantum gravity, quantum field theory in  $ 4 $  dimensions and on string theory. This requires from the reader to be familiar with homotopy theory.

A result from Schulman  \cite{Schulman} shows that 
\begin{theorem}For path integrals over  paths  $ q(a,b)\rightarrow M $  from  $ a $  to  $ b $  that map to a non-simply connected configuration space  $ M $, the amplitude becomes a superposition  
	\begin{equation}
		Z=\sum_\alpha  D(\alpha)K^\alpha\label{eq:DeWitmorette}
	\end{equation}
	
	of partial amplitudes  $ K^\alpha $, where in each  $ K^\alpha $  only paths of one homotopy class  $ \alpha $  are considered. Each of these homotopy classes consists of paths that are homotopic to each other.
\end{theorem}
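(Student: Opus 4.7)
The plan is to exploit the decomposition of the path space into homotopy classes of paths from $a$ to $b$ and show that (i) the path integral naturally splits as a sum over these classes, and (ii) the relative weights $D(\alpha)$ must satisfy strong compatibility conditions that force them to form a one-dimensional unitary representation of the fundamental group $\pi_1(M)$.

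First I would fix an arbitrary reference path $q_0$ from $a$ to $b$. Then any other path $q\in\mathcal{P}_{ab}$ from $a$ to $b$ determines a loop $q\ast q_0^{-1}$ based at $a$, and hence an element $\alpha\in\pi_1(M,a)$. This partitions the path space as a disjoint union $\mathcal{P}_{ab}=\bigsqcup_\alpha \mathcal{P}_{ab}^\alpha$ of homotopy classes. Because the DeWitt--Morette/Cartier integrator $\mathcal{D}_{\Theta,Z}\varphi$ is built from Gaussian promeasures which are countably additive on disjoint measurable subsets of path space, it respects this decomposition and one obtains at once
\begin{equation}
Z = \sum_\alpha K^\alpha, \qquad K^\alpha = \int_{\mathcal{P}_{ab}^\alpha}\mathcal{D}_{\Theta,Z}\varphi\,\Theta(\varphi,J) .
\end{equation}
So far no nontrivial weights have appeared: the splitting is simply additivity of the integral.

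The weights $D(\alpha)$ enter once one demands that the partial amplitudes be defined intrinsically, i.e.\ independently of the choice of the base path $q_0$. Changing $q_0$ to another reference path $q_0'$ of homotopy class $\beta$ relabels every class $\alpha\mapsto \alpha\ast\beta$, so $K^\alpha$ is only defined up to an overall relabeling. The most general prescription consistent with this freedom is to allow an a-priori arbitrary complex weight $D(\alpha)\in\mathbb{C}$ in front of each $K^\alpha$. I would then close the argument by imposing two physical constraints. The composition law for amplitudes: inserting a resolution of the identity over an intermediate time slice and comparing $\sum_{\alpha_1,\alpha_2} D(\alpha_1)D(\alpha_2)K^{\alpha_1}(a,c)K^{\alpha_2}(c,b)$ with the direct expansion $\sum_\alpha D(\alpha)K^\alpha(a,b)$, while using the fact that homotopy classes concatenate according to the group law of $\pi_1(M)$, forces $D(\alpha_1\ast\alpha_2)=D(\alpha_1)D(\alpha_2)$. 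Unitarity of the evolution and the requirement $|D(\alpha)|=1$ then restricts $D$ to a one-dimensional unitary character $D:\pi_1(M)\to U(1)$.

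The main technical obstacle is justifying that the DeWitt--Morette integrator really does restrict to each subset $\mathcal{P}_{ab}^\alpha$ as an additive integrator. For Wiener-like promeasures on continuous paths this is straightforward because homotopy classes are open-and-closed in the uniform topology on paths valued in a locally simply connected manifold, so their indicator functions are continuous and hence integrable. For the more general Cartier--DeWitt-Morette integrator over fields one has to verify an analogous open/closed property in the relevant Banach-space topology of paths; this is the step where one really uses that $M$ is a manifold and that the configuration space is locally Euclidean, so that homotopy classes cannot interpenetrate under small perturbations.
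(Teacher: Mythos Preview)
The paper does not actually give a proof of this statement: it simply cites Schulman. So there is nothing to compare against at the level of argument; your sketch is essentially the standard Schulman/Laidlaw--DeWitt-Morette derivation, and that is the right route.

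Two remarks. First, you prove more than the statement asks. The theorem in question only asserts the existence of the decomposition $Z=\sum_\alpha D(\alpha)K^\alpha$ into partial amplitudes over homotopy classes; the constraint that $D$ be a one-dimensional unitary representation of $\pi_1(M)$ is the content of the \emph{next} theorem in the paper (attributed to Laidlaw and DeWitt-Morette). Your argument about the composition law and unitarity belongs there, not here.

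Second, the way you introduce the weights is slightly muddled. Writing $Z=\sum_\alpha K^\alpha$ by pure additivity already corresponds to the specific choice $D\equiv 1$ (the trivial character), so it is not that ``no nontrivial weights have appeared.'' The correct logic is that the partial amplitudes $K^\alpha$ themselves are only defined up to phases, because their construction depends on a choice of reference path (equivalently, a choice of lift to the universal cover). Different admissible definitions of the $K^\alpha$ are related by multiplication with phases, and the requirement that the total amplitude be independent of this choice is what both permits and constrains the $D(\alpha)$. Your observation that a change of base path relabels classes by $\alpha\mapsto\alpha\ast\beta$ is a symptom of this ambiguity, but by itself a bijective relabeling would not force nontrivial weights.
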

\begin{proof}See \cite{Schulman}\end{proof}
Furthermore, one has
\begin{theorem}The phase factors  $ D(\alpha) $  must be scalar unitary representations of the fundamental group of the configuration space  $ M $.
\end{theorem}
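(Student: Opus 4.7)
The plan is to derive the constraint on the phase factors $D(\alpha)$ from three physical requirements: the composition (semigroup) property of propagators, the independence of physical probabilities on arbitrary labeling conventions, and unitarity. I would work throughout in the setting of the previous theorem, so that once a system of reference paths from the base point to each other point is chosen, the homotopy classes of paths $q:(a,b)\to M$ are labeled by elements $\alpha\in\pi_1(M)$.

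First, I would establish the key identity that the partial amplitudes $K^\alpha$ satisfy a twisted convolution law. Any path from $a$ to $c$ is the concatenation of a path from $a$ to $b$ followed by a path from $b$ to $c$, and if these are in classes $\alpha$ and $\beta$ respectively, then their concatenation lies in class $\alpha\beta$ (this is where the basepoint/reference-path framework is essential). Hence
\begin{equation}
K^{\alpha\beta}(a,c)=\int K^{\alpha}(a,b)\,K^{\beta}(b,c)\,db.
\end{equation}
On the other hand, the full amplitude $Z$ is itself a legitimate quantum-mechanical propagator, and must therefore obey the ordinary composition law $Z(a,c)=\int Z(a,b)Z(b,c)\,db$. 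Inserting the decomposition of the previous theorem into both sides and using the twisted convolution above, one obtains
\begin{equation}
\sum_{\gamma}D(\gamma)K^{\gamma}(a,c)=\sum_{\alpha,\beta}D(\alpha)D(\beta)K^{\alpha\beta}(a,c).
\end{equation}
Relabeling $\gamma=\alpha\beta$ in the left sum and using the fact that the $K^{\gamma}$ are linearly independent (they are supported on disjoint homotopy sectors of the path space), one reads off $D(\alpha\beta)=D(\alpha)D(\beta)$. Thus $D$ is a one-dimensional (scalar) representation of $\pi_1(M)$.

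Next I would impose unitarity. Since the $D(\alpha)$ are scalars, showing $|D(\alpha)|=1$ suffices to conclude that $D$ lands in $U(1)$. I would argue this from conservation of probability: the evolution operator built from $Z$ must preserve the norm of the in-state, and since the different homotopy sectors are orthogonal in the configuration-space decomposition, any $|D(\alpha)|\neq 1$ would either drain or amplify the norm as the number of composed propagations grew. Equivalently, from the homomorphism property $D(\alpha^n)=D(\alpha)^n$, boundedness of the propagator forces $|D(\alpha)|=1$ for every $\alpha$ of infinite order, and for torsion elements $D(\alpha)^{\mathrm{ord}(\alpha)}=1$ already gives $|D(\alpha)|=1$.

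Finally, I would address the ambiguity inherent in the construction: a change of the reference path in each homotopy class multiplies $K^{\alpha}$ by an $\alpha$-dependent phase, so $D$ is only well-defined up to a one-cocycle. The homomorphism condition $D(\alpha\beta)=D(\alpha)D(\beta)$ together with unitarity then says precisely that $D$ is a unitary character of $\pi_1(M)$. The main subtlety I anticipate is justifying the linear independence of the $K^{\gamma}$ rigorously in the Cartier--DeWitt-Morette framework (so that the equation of the two sums really does give an identity class-by-class); this should follow from the fact that the integration domains $\mathcal{P}_{ab}^{\gamma}$ are disjoint sectors of $\mathcal{P}_{ab}$, but it is the step most in need of care.
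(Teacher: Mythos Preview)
Your proposal is correct and reconstructs the Laidlaw--DeWitt-Morette argument, which is exactly what the paper invokes: the paper does not give its own proof but simply cites \cite{Laidlaw}, noting the topological hypotheses (multiply connected, arcwise connected, locally arcwise connected, locally simply connected, metrizable) under which that result applies. Your derivation of the homomorphism property $D(\alpha\beta)=D(\alpha)D(\beta)$ from the semigroup law for propagators and the twisted convolution of partial amplitudes is the standard route taken in that reference.

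One small point of divergence: your argument for $|D(\alpha)|=1$ via boundedness of iterated propagators is not quite the original one. Laidlaw and DeWitt-Morette instead argue from the requirement that the physical probability $|Z|^2$ be independent of the arbitrary choice of homotopy mesh (the system of reference paths from the base point); a change of mesh multiplies each $K^{\alpha}$ by a phase, and demanding that $|Z|^2$ be invariant under this relabeling forces the $D(\alpha)$ themselves to be pure phases. Your version is heuristically sound but the mesh-independence argument is cleaner and avoids the case split on torsion versus infinite-order elements. The linear-independence concern you flag is handled in \cite{Laidlaw} by working on the universal cover, where the homotopy sectors lift to genuinely disjoint preimages.
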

\begin{proof}See the proof of Laidlaw and DeWitt-Morette in \cite{Laidlaw}. provided that the path integral is finite, their result holds for path integrals with arbitrary actions and configuration spaces which are multiply connected, arc-wise connected, locally arc-wise connected, and locally simply connected and metrizable.
\end{proof}

We now consider DeWitt's arguments from \cite{DeWittbook,DeWitt9}.
Certainly, homotopy equivalence does not imply diffeomorphism equivalence. For example, exotic spheres are smooth closed manifolds that are homotopy equivalent to  $ S^n $  but they are not diffeomorphic to it. On the other hand, if two spaces are not homotopy equivalent, they are also not diffeomorphic.

Therefore, it appears at first as reasonable to use an argument for path integrals in non-simply connected configuration spaces to discuss topology changes.

However, unlike in quantum mechanics, where one has paths  $ q(a,b)\rightarrow M $, usual quantum field theories have paths  $ \varphi:M\rightarrow \mathbb{C}^n $, where  $ M $  is the space-time. If the space-time changes its topology, the configuration space  $ \mathbb{C}^n $  does not undergo a topology change. 

Usually, string amplitudes are computed with world-sheets  $ \Sigma $  that change their topology in time and one has paths   $ X^\mu:\Sigma \rightarrow T^n $, where  $ \Sigma $  is the 2 dimensional world sheet and  $ T $  is the target space.

That the string theory amplitude can be written over topology changing world-sheets  $ \Sigma $  gives a counterexample to the assumption that the formula of Laidlaw and DeWitt Morette would forbid topology changes in  $ \Sigma $  for fields  $ X^\mu:\Sigma\rightarrow \mathbb{C}^n $.

A slightly different situation arises for string theory amplitudes if the target space  $ T $  is a Lorentzian manifold which is a union of a space-time manifold  and possibly singularities that arise because  $ T $  has no closed time-like curves and interpolates between space-like hyper-surfaces that are not diffeomorphic and thus not homotopic to each other. 

If we leave the problems with the singularities aside and assume that paths of all matter fields simply circle around them, the paths $ X^\mu:\Sigma\rightarrow T $ describe exactly the same situation as in the problem considered by Laidlaw and DeWitt-Morette, where the particles are at some time confronted with the 'hole' of the solenoid from the Aharanov-Bohm effect.

One can denote the paths $ X^\mu:(\sigma,t)\rightarrow T $ of string theory by $ q(a,b) $  with $ a=(0,0) $ and $ b=(2\pi,t_b) $. Then the proof of Laidlaw and DeWitt-Morette works basically unchanged. The only difference between the string and particle case can be that in case of closed strings, a target space singularity may be encircled in two different ways by the string. Similarly as in the particle case, the singularities can be encircled one or several times by paths with the curve parameter $ \tau $ going from $ \tau=0 $ to  $ \tau=\infty $. In the string theory case, the singularity can also be encircled for one or several times if one moves the curve parameters $ \sigma $  from  $ \sigma=0 $  to  $ \sigma=\pi $. This does, however, not affect the proof of Laidlaw and DeWitt-Morette at all since these two paths can not be deformed into each other and thus just belong to different homotopy classes. Furthermore, the proof does not depend on the number of dimensions of the quantum field theories considered.

As DeWitt notes in \cite{DeWittbook}, there are no possibilities of interpolating paths between different homotopy classes of paths. But that is exactly the situation for which the amplitude of  Laidlaw and DeWitt-Morette is constructed. 

Interestingly, DeWitt has noted himself in \cite{DeWittbook} that topology changes may be followed by the formula of DeWitt and Laidlaw if space-time would be embedded in a higher dimensional manifold. Indeed, topology changing $ n-1 $ surfaces can be embedded in the $ n $  dimensional space-time (which must by Geroch's theorem, be a union of a manifold and singularities if it is Lorentzian and has no closed time-like curves). The phase factors in the formula of Schulman, Laidlaw and DeWitt-Morette are then given by a function of the fundamental group of the entire  $ n $  dimensional target space-time. This target space-time then has a fixed topology, from which one can compute its fundamental group.

However, in string theory and quantum gravity, one has not just the case of paths of embedding functions $X^\mu$ mapping to the target space, but one also has paths of world-sheet metrics $\tilde\gamma_{uw}$.

If one has to integrate a path integral over world-sheet metrics with singularities, the situation becomes more difficult. We will now prove the following
\begin{theorem}If there does not exist a metric $ d $ for metric tensors that have  singularities and whose restriction to points where they are regular describes spaces in $ Riem(M)/Diff(M) $, then one can not define a functional integral over such metric tensors with the formalism of Schulman, Laidlaw and Cartier and DeWitt-Morette\label{thmm1}.
\end{theorem}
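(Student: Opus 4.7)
The plan is to argue that both of the central ingredients from which the amplitude in Eq.~\eqref{eq:DeWitmorette} is assembled, namely Cartier and DeWitt-Morette's integrator from theorem \ref{integrator} and the Schulman--Laidlaw--DeWitt-Morette superposition itself, each presuppose a metric on the configuration space of metrics. Removing that metric should therefore remove both constructions and, with them, the amplitude.

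First, I would recall that theorem \ref{integrator} produces $\int_\Phi \Theta(\varphi,J)\mathcal{D}_{\Theta,Z}\varphi$ only when the relevant $\Phi$ is a Banach space, so that the scalar product $\langle\,,\rangle$ and the translation-invariance condition $\int \frac{\delta}{\delta\varphi}\Theta\,\mathcal{D}_{\Theta,Z}\varphi=0$ make sense. In Eq.~\eqref{eq:gavipath} and in the Polyakov case, $\Phi$ must be the space of paths into $L(M^n)/\mathrm{Diff}(M^n)$, enlarged by whichever singular geometries one wants to admit in order to realise a topology change. A Banach structure gives a norm and hence a metric, and conversely one needs a metric just to speak of continuity of the path, of Cauchy sequences of metric configurations, of the functional derivatives $\delta/\delta g_{\mu\nu}$ and of the Gaussian promeasures with which $\mathcal{D}_{\Theta,Z}$ is built in \cite{Cartier,functionalintegration}. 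I would then observe explicitly that all candidate distances recalled in the introduction, namely the Fischer--Ebin distance of Eqs.~\eqref{Fischermetric1}--\eqref{driem} and DeWitt's Eq.~\eqref{dewittmnetric}, are only defined on the regular stratum $Riem(M)/Diff(M)$; so if no extension $d$ exists whose restriction to this stratum agrees with one of them, no Banach structure on the enlarged $\Phi$ is available and theorem \ref{integrator} simply cannot be invoked.

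Second, I would turn to the Schulman--Laidlaw--DeWitt-Morette decomposition \eqref{eq:DeWitmorette}. The proof of Laidlaw and DeWitt-Morette quoted above lists among its hypotheses that the configuration space be arc-wise connected, locally arc-wise connected, locally simply connected and \emph{metrizable}. Metrizability is used essentially: it ensures existence of the universal cover with a continuous projection, it makes the partition of the path space into homotopy classes compatible with the measurable structure needed to restrict the amplitude to a single class $K^{\alpha}$, and it is what permits the $D(\alpha)$ to be identified as scalar unitary representations of $\pi_1$. Without a metric $d$ on the space of possibly singular metrics, the configuration space fails this hypothesis, and the sum $Z=\sum_\alpha D(\alpha)K^\alpha$ loses its justification at the formal level; there is then no guarantee that the restricted partial amplitudes exist, nor that a consistent assignment of phases to homotopy classes is even possible.

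Combining the two observations, the enlarged configuration space features both as the target for the paths of $g_{\mu\nu}$ (so that it must be Banach for theorem \ref{integrator}) and as the space whose fundamental group controls the phases $D(\alpha)$ (so that it must be metrizable for the Laidlaw--DeWitt-Morette argument). Absent any metric $d$ compatible on the regular stratum with the Fischer--Ebin--DeWitt distance, neither requirement can be met, and hence no amplitude can be produced in the formalism of Schulman, Laidlaw and Cartier and DeWitt-Morette. The main obstacle I expect is defending against the objection that some weaker structure, say a mere topology or a uniformity on the enlarged space, might do the job: I would address this by pointing out that every published construction of the DeWitt-Morette integrator passes through Gaussian promeasures on Banach, hence metric, spaces, and that the covering-space step in the Laidlaw--DeWitt-Morette proof genuinely needs metrizability rather than a bare topology, so no strictly weaker substitute suffices.
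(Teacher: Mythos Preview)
Your proposal is correct and follows essentially the same two-pronged approach as the paper: both argue that metrizability of the configuration space of (possibly singular) metrics is required for the Laidlaw--DeWitt-Morette homotopy decomposition, and that a norm/Banach structure is required for the Cartier--DeWitt-Morette integrator. The paper's version spends additional effort explaining concretely why the specific candidate metrics (Visser's Wick rotation combined with the Fischer--Ebin Sobolev distance, and DeWitt's metric) break down on singular configurations and why excising the singular locus by hand fails to induce the topology, but since the theorem is stated conditionally on the non-existence of such a $d$, your more streamlined organisation already covers what is logically needed.
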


\begin{proof}Let us consider  an $ n $  dimensional Lorentzian space-time $M$ that interpolates between  $ n-1 $ dimensional hyper-surfaces of different topology and have no closed time-like curves. By Geroch's theorem, the metric must have singularities at some points. We denote the space of these singular interpolating metrics by  $ \tilde{\Theta}(M) $. This is the configuration space of paths.
	
	In order to apply the formula of Laidlaw and DeWitt-Morette to a functional integral over a metric, we  would need to be able to define the fundamental group of  $ \tilde{\Theta} $. For this we need to define a base-point  $ x_0\in\tilde{\Theta}(M) $  and consider various loops that start and end at  $ x_0 $. Furthermore, we would need  $ \tilde{\Theta}(M) $  to be multiply connected, arc-wise connected, locally arc-wise connected, and locally simply connected and, especially, metrizable.

	In order to define the fundamental group, we need a continuous function \begin{equation}I:[0,1]\rightarrow \tilde{\Theta}\end{equation} 
	and assume that  $ I(0)=I(1) $.
	
	For the definition of  $ I $  as a continuous function,  we need a metric  $ d $  for every  $ g\in\tilde{\Theta}(M) $  for which  $ \exists\tau $  such that  $  I(\tau)=g $. And this metric  $ d $  would have to be such that \begin{equation}\lim_{\epsilon\rightarrow0} d(I(\tau),I(\tau+\epsilon))\rightarrow0.\end{equation} 
	
	The procedure to define a time-like vector field  $ V $  and make Wick rotation of the metric of an interpolating space-time  $ g $  in  $ \tilde\Theta $  according to Visser's prescription would then become ill defined, since at the metric singularity, one can not define what a time-like vector is. Furthermore, the Euclidean Metric in Eqs.  (\ref{driem}) for Sobolev Riemannian metrics becomes ill defined if it is used for  metrics with singularities, since their derivatives are not square integrable. One could try to use the DeWitt metric from Eq.  (\ref{dewittmnetric}). As DeWitt writes, it is the only possible metric for metric tensors that does not involve derivatives and which is zero if the geometry of two metrics coincides. However, the DeWitt metric often becomes ill defined for singular metric tensors.

	One could be tempted to remove  singularities of  $ g $  by hand and just compute  $ d $  for the remaining differentiable manifold, but then the metric  $ d $  could no longer distinguish between a space having a singularity at a given point  $ p $  and another regular space that has a hole at  $ p $. Thus the metric would not longer induce the topology. Two spaces could have different geometry and yield a distance of zero.
	
	One may argue that we do not need the formula of Laidlaw and DeWitt-Morette, since if the manifolds  $ S_1 $  and  $ S_2 $  have distinct topology and if we exclude interpolating space-times with closed time-like curves from the summation, all metrics over which we integrate the path integral have singularities, so they may be in the same homotopy group of the configuration space of metrics.
	
	However, we also need a well defined norm for the space of metrics that we need to transform with an affine transformation to a Banach space before we can define the path integral with the formalism of DeWitt-Morette and Cartier, see theorem \ref{integrator}. If they all have a divergent or otherwise not well defined norm, this does no longer work. 
	
\end{proof}This result is interesting because it does not involve the form of the action in any way. Instead, it solely depends on other, more general measure theoretic details of the path integral formalism, see \cite{functionalintegration} and references therein. 
Combined with Geroch's theorem, it means that, provided one can not find a metric for Lorentzian metric tensors which include singular entries, one can not integrate the path integral over Lorentzian metrics that interpolate between topologically distinct space-like hyper-surfaces without closed time-like curves. 
\begin{remark}
	A reader who has read DeWitt's work \cite{DeWitt} on the singularity of the Friedmann cosmos in the framework of the Wheeler-DeWitt equation may wonder how his results fit into the statements given here. 
	
	In his article, DeWitt was able to rewrite the Wheeler DeWitt equation in a certain coordinate system as 
	\begin{equation}
		\left(-\frac{\delta^2}{\delta\zeta^2}+\frac{32}{3\zeta^2}\overline G^{AB}\frac{\delta}{\delta\zeta^A}\frac{\delta}{\delta\zeta^B}+\frac{3}{32}\zeta^2  \;^{(3)}R\right)\psi(\gamma),\label{wdw123}\end{equation}
	where \begin{equation}\zeta=\sqrt{\frac{32}{3}}\gamma^\frac{1}{4}\end{equation} 
	is a time-like coordinate and  $ \zeta^A $  are five coordinates orthogonal to  $ \zeta $, and \begin{equation}\overline{G}_{AB}=tr(\gamma_{ij}^{-1}\gamma_{ij},_A\gamma_{ij}^{-1}\gamma_{ij},_B)\end{equation}
	The comma denotes partial derivatives with respect to  $ x^A $, the notion  $ \frac{\delta}{\delta\zeta^A} $  describes a functional derivative, i.e. \begin{equation}\frac{\delta}{\delta\zeta^A}\psi(\gamma)=\int \frac{\delta}{\delta\zeta^A}\psi(\gamma_A(x))\phi_{A}(x)d^3x\end{equation}
	with $ \phi_A(x) $  as an arbitrary function of  $ x $. Note that DeWitt omitted the boundary terms of the cobordism and the specification of the space-like hyper-surfaces $ S_1 $ and  $ S_2 $ at $ t_1 $ and $ t_2 $. Therefore, the amplitude, which is usually given as  $ \psi(\gamma,t) $ has no time dependence and is denoted by $ \psi(\gamma) $.
	
	DeWitt found that the scalar product for  $ \psi(\gamma) $  is in general only positive definite if  $ \psi(\gamma) $  does not vanish for  $ \zeta\in(-\infty,\infty) $. If one regards $ \psi(\gamma) $ as the amplitude that is the result of a path integral, then the norm of $ \psi(\gamma) $ should be positive definite. In general, the form of Eq. (\ref{wdw123}) seems to exclude  singularities in the Wheeler-DeWitt equation. This follows from a lengthy discussion in DeWitt's article \cite{DeWitt} on p. 1122-1130. 
	
	DeWitt's solution was thus to set $\psi(\gamma)=0$ at the singularity of the Friedmann cosmos.
	However, he also found that the scalar product for  $ \psi(\gamma) $  is in general only  Fortunately, DeWitt found the Friedmann cosmos to be a special case where the norm for $\psi(\gamma)$ is positive, even if one makes  $ \psi(\gamma) $  vanish at the singularity.

	At first, this idea would seem like a "resolution" of the problem. Nevertheless, the solution to set the "wave function", or the amplitude to zero at the singularity implies, if one regards  $ \psi(\gamma) $  as a path integral, that there are simply no paths arriving at the points  $ x $  in the space-time where the metric is having the singularity. Thus, one basically discusses a Friedmann universe where the singular point was artificially removed from the manifold and that no paths of the metric, or of matter amplitudes can reach it.  This is perfectly in line with our statement above, that one can not easily define path integrals over metrics with singularities in them.
\end{remark}
\begin{remark}One may argue that this just forbids singularities of the 3 metrics $\gamma_{ij}$. However, if one has a cobordism with a singular 4 metric, one may choose an appropriate time slice and investigate the path integral at the spatial hyper-surface where the singularity occurred. This often leads to a singularity in the 3 metrics.
\end{remark}Unlike for quantum field theories on Minkowski space, for path integrals over curved metrics one can not always construct a Lorentzian metric from an arbitrary Euclidean one\cite{Vis2,Vis3}. Nevertheless, Polchinski works in his book \cite{Polchinski} on string theory with a Lorentzian target space and uses a path integral over Euclidean world-sheets. This could be based on the reasoning that at least for the classical string theory, the world-sheet is just an auxiliary variable which drops out of the equations of motion. Whether the choice of the world-sheet signature is still arbitrary in the quantum theory is difficult to answer. We will now state the following
\begin{theorem}
	Using the formalism of Schulman, Laidlaw and DeWitt-Morette, the path integral $ \int D X^\mu D\tilde\gamma_{uw} e^{iS_p} $, with $ S_p $ as Polyakov action can only be defined over Lorentzian world-sheets without closed time-like curves and with metrics $ \tilde{\gamma}_{uw} $ if all the $ \tilde\gamma_{uw} $ in the paths to not contain topologically distinct space-like hyper-surfaces if no metric for the  $ \tilde\gamma_{uw} $  with singularities exists.
\end{theorem}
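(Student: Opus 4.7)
\medskip

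\noindent\textbf{Proof proposal.} The strategy is to reduce the statement to an application of Theorem~\ref{thmm1} in the two-dimensional world-sheet setting, since the functional integration over $\tilde\gamma_{ab}$ in the Polyakov amplitude is formally identical to the integration over $n$-metrics in the gravitational amplitude, only with $n=2$. Hence I expect the same chain of obstructions to reappear, with minor adjustments for the extra gauge structure of the world-sheet.

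First I would invoke the Geroch--Borde theorem with $n=1$. Under the hypothesis that the $\tilde\gamma_{ab}$ define chronological, weak and causally compact Lorentzian cobordisms whose boundary $1$-surfaces are not diffeomorphic (the archetype being the pair of pants, where one circle splits into two), the theorem rules out any such smooth Lorentzian cobordism. By Theorem~\ref{theoremvector} this is equivalent to the non-existence of a globally non-vanishing time-like vector field on the world-sheet, so the admissible $\tilde\gamma_{ab}$ in the sum over paths must be singular $2$-metrics on some subset of points of $\Sigma$. Let $\tilde\Theta(\Sigma)$ denote the resulting configuration space of singular world-sheet metrics.

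The next step is to transplant the three candidate metrics on $Riem(M^n)/Diff(M^n)$ used in the proof of Theorem~\ref{thmm1} to the two-dimensional case and to show that each fails for the same reasons. Visser's Wick-rotation prescription $g_\epsilon^{ab}=(g_L)^{ab}-i\epsilon V^a V^b$ is ill defined wherever $V^a$ vanishes, which is precisely at the singular points dictated by Geroch--Borde; the Sobolev--Fischer--Ebin metric $d^{s}_{Riem}$ requires square-integrable derivatives up to order $s>1$, which fail at any genuine world-sheet singularity; and the DeWitt supermetric of Eq.~(\ref{dewittmnetric}) generically diverges because the volume element cannot absorb the singular behaviour of $\tilde\gamma^{ab}$. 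Deleting singular points by hand destroys injectivity of the resulting pseudo-metric---two inequivalent singular geometries acquire distance zero---so it no longer induces a topology on $\tilde\Theta(\Sigma)$.

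Having no metric on $\tilde\Theta(\Sigma)$, I would then argue that the continuous loops $I:[0,1]\to\tilde\Theta(\Sigma)$ needed to define a fundamental group cannot be constructed, so the decomposition~(\ref{eq:DeWitmorette}) of Schulman, Laidlaw and DeWitt-Morette is not even formally available for the world-sheet metric sector. Moreover, the Cartier--DeWitt-Morette construction requires the space of paths to be mapped affinely into a Banach space before integrating, and a uniformly divergent norm forbids this, exactly as in the final paragraph of the proof of Theorem~\ref{thmm1}. The main obstacle I foresee is verifying that the enhanced gauge structure of the Polyakov action---world-sheet diffeomorphisms together with Weyl rescalings---does not accidentally identify a singular orbit with a smooth one; one has to check that quotienting the space of Lorentzian $2$-metrics by $\mathrm{Diff}(\Sigma)\ltimes\mathrm{Weyl}(\Sigma)$, together with the ghost and gauge-fixing sector derived along the lines of~\cite{DeWitt50years}, does not smooth the obstruction away. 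Once this is settled, the argument closes exactly as in Theorem~\ref{thmm1}, yielding the claim.
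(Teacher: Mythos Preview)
Your proposal is correct and follows essentially the same route as the paper: Geroch--Borde forces singularities in the Lorentzian world-sheet metrics, and then Theorem~\ref{thmm1} blocks the Cartier--DeWitt-Morette construction. The paper's own proof is considerably more terse---it introduces an explicit time-slicing $t_a<t_1<\cdots<t_n<t_b$ to exhibit each $\tilde\gamma_{ab}^{t_i,t_{i+1}}$ as a cobordism between one-dimensional hypersurfaces, invokes Geroch's theorem (noting it holds for $2\times 2$ metrics), and then simply cites Theorem~\ref{thmm1} rather than re-deriving its internals. Your version re-runs the Visser/Sobolev/DeWitt-supermetric checks in the two-dimensional setting, which is redundant once Theorem~\ref{thmm1} is available but does no harm. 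Your concern about the enlarged gauge group $\mathrm{Diff}(\Sigma)\ltimes\mathrm{Weyl}(\Sigma)$ possibly identifying a singular orbit with a smooth one is a genuine technical caveat that the paper does not raise; it is worth flagging, though the paper implicitly treats it as not altering the conclusion.
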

\begin{proof}We may cut the path integral from given times  $ t_a<t_b $  into several time slices  $ t_a<t_1<t_2<\ldots t_n<t_b $. With respect to the world-sheet metric  $ \tilde\gamma_{uw} $, we then have one-dimensional space-like hyper-surfaces at each $ t_i $ and  $ t_i+1 $ with  $ i $  as an index set. They would then serve as boundaries for the section of the world-sheet metric  $ \tilde\gamma_{uw}^{t_i,t_{i+1}} $  that describes a cobordism between $ t_i $ and  $ t_i+1 $. The string theory amplitude is a path integral over all world-sheets, including topologically distinct ones. This means that the $ t_i,t_i+1 $ can be chosen such that they are at time-slices where the space-like hyper-surfaces of $ \tilde\gamma_{uw}^{t_i,t_{i+1}} $ have different topology. If such a world-sheet does not have closed time-like curves,   $ \tilde\gamma_{uw}^{t_i,t_{i+1}} $  would need to have singularities by Geroch's theorem, which also applies for  $ 2\times 2 $  metrics. We have shown before in theorem \ref{thmm1} that we can not integrate a path integral over such metrics with the formalism of DeWitt-Morette and Cartier if there exists no norm for singular metric tensors.
\end{proof}
\begin{remark}The argument above gives a strong indication that one can not sum the path integral over metrics with singularities.
\end{remark}

Thus, one may try to to change the formalism slightly. Giddings and Horowitz\cite{Gidd, Horowitz,Horowitz2} have constructed exactly something like that for metrics of general relativity by going into the first order formalism.  The metric is then given by 
\begin{equation}
	g_{\mu\nu}=e_\mu^{\;a}e_{\nu}^{\;b}\eta_{ab},
\end{equation}where $\eta$ is the Minkowski metric and
Einstein's action becomes \begin{equation}S=\frac{1}{2}\int e^a \wedge e^b R^{cd} \epsilon_{abcd}\end{equation}
with  the field equations \begin{equation}e^b\wedge R^{cd}\epsilon_{abcd}=0,\;\;e^{\left[a\right.}\wedge De^b]\end{equation}
This represents a slight extension of general relativity. 

\begin{definition}
	Giddings \cite{Gidd} has studied the problem of topology changes and found the following metric
	
	\begin{equation}
		(\delta e,\delta e)=\int d^4x (det(e)\eta^{ab}\eta_{cd}e^\mu_a e^\nu_b  \delta e^c_\mu \delta e^d_\nu)\label{tetradnorm}
	\end{equation}
	which is finite for a tetrad given by
	
	\begin{equation}e^a_\mu(\varsigma)=\varsigma e^a_\mu(1)
	\end{equation}at $\varsigma\rightarrow 0$ as well as for $\varsigma\neq 0$.
\end{definition}

\begin{remark}
	Further calculations were made by Horowitz \cite{Horowitz,Horowitz2}. With the help of a map from the manifold $M$ into the $\mathbb{R}^4$ and another another map back to $M$, Horowitz noted that one can construct the tetrads and the curvature scalar such that the curvature scalar stays finite when one approaches a singularity. For Horowitz' construction, there would be no problems with divergences in Einstein's action when diffeomorphism invariance breaks down. 
\end{remark}
\begin{remark}
	We note that the norm of Giddings is somewhat similar to that of Clarke \cite{ClarkeGrav} and DeWitt \cite{DeWittsuperspace}. However, in DeWitt's metric a square root appears in the volume factor. This may not be able to cancel divergences from the other components of the metrics in a similar way as the metric of Giddings. For example, with a $2\times2$ metric $g_{\mu\nu}=
	\left(\begin{array}{cc}
		-\lambda	& 0 \\
		0	& \lambda
	\end{array}\right)
	$, the volume element $\sqrt{-g}$ would be proportional to $\lambda$ while expressions like $g^{\mu\alpha}g^{\nu\beta}$ can be proportional to $\lambda^{-2}$. This would create a divergent $G^{\mu\nu\alpha\beta}$ for $\lambda\rightarrow 0$ in the DeWitt metric of Eq. \ref{dewittmnetric}. In Eq. \ref{dewittmnetric}, $\delta g_{\mu\nu}$ is defined as a small deviation. Hence, it can not be assumed to cancel the  divergence of  $G^{\mu\nu\alpha\beta}$ in the DeWitt metric of Eq. \ref{dewittmnetric}.
\end{remark}

\begin{remark}
	The path integral of gravity can be written entirely in the first order formalism, see \cite{Zanelli}. We can then simply use the metric of eq. (\ref{tetradnorm}) for the homotopy considerations of DeWitt-Morette and into her path integral and no problems should arise with the space of paths or the function $I(\tau)$ if the tetrads approach a singularity constructed with Horowitz' mechanism. 
\end{remark}
\begin{remark}
	In order to define the path integral over singular world-sheets, one would likely have to find a similar first order mechanism for the world-sheet.
\end{remark}

\section{About quantum field theory close to singularities}\label{s2}

In his comment \cite{DeWitt9}, DeWitt conjectured that after all, there maybe a way to reconcile quantum mechanics with the singularities emerging from topology changes with methods from string theory. Specifically, he mentioned the results of Witten about string theory on orbifolds.

Starting from a smooth manifold  $ M $  and a group  $ G $  that acts properly discontinuously on  $ M $,  $ \mathcal{O}=M/G $  is called an orbifold. This is a space which has quotient singularities.

Orbifolds in String theory have two dimensional quotient singularities and are algebraic varieties \cite{sing} which have a resolution according to a theorem of Hironaka \cite{Hironaka}. 

\begin{definition}A resolution of a variety  $ X $  is a non-singular variety  $ Y $  if there exists a proper morphism  $ f:Y\rightarrow X $  where \begin{equation}f|_{Y\setminus f^{-1}(X_{sing})}:Y\setminus f^{-1}(X_{sing})\rightarrow X\setminus X_{sing}\label{resolv}\end{equation} is an isomorphism and the singular locus is  $ X_{sing}:=\{x\in X:\textrm{x is singular}\} $. 
\end{definition}
\begin{remark}
	In general, from a single blow-up one usually gets resolutions that still have singularities, e.g. boundary singularities may emerge. A resolution can usually be constructed by applying a series of successive blow-ups until all singularities are resolved.
\end{remark}
\begin{remark}
	In their article, \cite{Witten}, Dixon, Harvey, Vafa and Witten studied so-called crepant resolutions (which are Ricci flat) of several orbifolds and the string theory amplitude was written on them. The authors then noted that a much simpler path integral prescription with  specially "twisted" boundary conditions 
	\begin{equation} X^\mu(\sigma+2\pi,t)=g X^\mu(\sigma,t),  \;\;g|\Psi\rangle=|\Psi\rangle \label{cond1},\end{equation}where $ g $ is some element of $ G $, yielded amplitudes that were equal to that of the crepant resolution.
	
	Dixon, Harvey, Vafa and Witten were careful enough that, since their result was verified for just a few examples with a blow-up, their paper only suggested that it would hold in an arbitrary  close neighborhood of the singularity.  But they conjectured that it would hold for larger classes of orbifolds than the examples where they compared Eq. (\ref{cond1}) to the amplitude on blow-up manifold. 
\end{remark}
\begin{remark}
	For orbifolds of dimensions  $ d>4 $, it is often not known whether a Ricci flat crepant resolution exists. For example, Aspinwall has given a method to study resolutions of orbifolds of the form  $ \mathbb{C}^d/G $  where  $ d\in \mathbb{N} $  and  $ G $  is abelian. However, he notes later in the article that this method only gives crepant resolutions for  $ d\leq 3 $  \cite{Aspinwallresol}.
\end{remark}
In this section, we will argue that, with slight modifications, one can indeed extend the embedding functions and the action of string theory on some simple spaces with conical singularities up to the singular locus. We will also give an argument that string theory may be consistently written on singular space-times close to the singularity, even if a fully non-singular Ricci flat resolution can not be found.

However, we will also see that difficulties arise. For cuspidal singularities, we will prove that they often have have an influence on their neighborhood, which can make a quantum field theory undefined around the singularity.

We will start the illustration with the example of the most simple orbifold, the cone, which is given by  $ \mathbb{C}/Z_2 $  and a line element
\begin{equation}ds^2=2d\tau^2+\tau^2d\sigma^2.\end{equation}

Due to a determinant of  $ 2\tau^2 $, this metric obviously has conical singularity at $ \tau=0 $. 

The Polyakov action involves the derivative of functions  $ X^\mu $  that map into the target space. Now we let some  $ X^\mu,X^\nu,X^\eta $,where  $ \mu\neq\nu\neq\eta $  map into a  cone in 3 dimensions. We can parameterize the cone as  $ X^1(\tau,\sigma)=\tau \cos{(\sigma)}, X^2(\tau,\sigma)=\tau \sin{(\sigma)},X^0(\tau,\sigma)=\tau$.

With this parameterization, the conical singularity at  $ \tau=0 $  is the left boundary point of  $  [0,\infty) $ and we observe that unlike in a cusp, the derivatives are not becoming infinite if we are approaching the tip. 

Derivatives are usually defined on open sets. However, we can simply extend that definition to compact sets. 

\begin{definition}
	Assume one has a function  $ f(\mathbf{x})\in W\subset\mathbb{R}^{\tilde{n}},\mathbf{x}\in\mathbb{R}^n, n,\tilde{n}\in\mathbb{N}_{>0} $  with components  $ f^j $  and  $  j=1,\ldots \tilde n $  and that  $ W $  is compact. 
	
	Furthermore, let  $ \tilde{\mathbf{x}}\in\mathbb{R}^n $  be such that  $ f(\tilde{\mathbf{x}})\in\partial W  $  and  $ h\in\mathbb{R}, h>0 $. 
	if  $ f^j(\tilde{x}^0,\ldots,\tilde{x}^{i}-h,\ldots, \tilde{x}^n)\notin \partial W $  and   $ f^j(\tilde{x}^0,\ldots,\tilde{x}^{i}+h,\ldots, \tilde{x}^n) $  is undefined,    
	we set  $ h\rightarrow -h $. We then define  $ \left.\frac{\partial f^j (\mathbf{x})}{\partial x^i}\right|_{x^i=\tilde{x}^i} $  by   
	
	\begin{equation}\left.\frac{\partial f^j (\mathbf{x})}{\partial x^i}\right|_{x^i=\tilde{x}^i}=\lim_{h\rightarrow 0}\frac{f^j(\tilde{x}^0,\ldots,\tilde{x}^{i}+h,\ldots, \tilde{x}^n)-f^j (\tilde{\mathbf{x}})}{h}\label{timederivative}.\end{equation} For other cases, i.e. if both  $ f^j(\tilde{x}^0,\ldots,\tilde{x}^{i}+h,\ldots, \tilde{x}^n)\notin \partial W $  and  $ f^j(\tilde{x}^0,\ldots,\tilde{x}^{i}-h,\ldots, \tilde{x}^n)\notin \partial W $, we use the usual definition of the derivative. This allows us to describe differentiable curves that start or end at boundaries.   
	
\end{definition}

\begin{remark}From Eq. \ref{poly}, we observe that one needs the determinant and the components of the word-sheet metric $\tilde\gamma_{uw}$ in the Polyakov action. If one has a world-sheet metric with a conical singularity as above, one may have to develop a tetrad formalism with an appropriate norm for the metrics and a construction for the singular world-sheets as it was proposed by Giddings \cite{Gidd} and Horowitz \cite{Horowitz} for space-time metrics. \end{remark}

\begin{remark}
	In the case of target space singularities, it may help to use the Nambu-Goto action instead of the Polyakov action:
	
	\begin{equation}
		S_{NG}=T\int d\Sigma \sqrt{(\dot{X}\cdot X^{'})^2-\dot{X}^2X^{'2}}\label{nambu}.
	\end{equation}
\end{remark}
Let us now assume that we have used the prescription of Horowitz \cite{Horowitz} to model a singular space-time metric that vanishes at isolated points.  One observes that if some components of the "background" metric $G_{\mu\nu}$ in the scalar product $\dot{X}\cdot X^{'}=G_{\mu\nu}\dot{X}^\mu X^{'\nu}$ approach zero, the action in Eq. (\ref{nambu}) does not become divergent, as long as the derivatives of the $X^\mu$ stay finite, and in fact, Horowitz' construction ensures this too. Because no inverse of the metric appears in Eq. (\ref{nambu}), the path integral over this action is well defined. With the extension of the derivatives to compact sets, the action can even be continued up to the singular point. 

\begin{remark}
	For the cone, the singular locus will, however, not contribute to the amplitude. The action contains a Lebesgue integral, for which 
	\begin{equation} \int_{[a,b]}f(x)dx=\int_{(a,b)}f(x)dx\end{equation}
	for any function $f(x)$ if the integral on the left hand side exists. Thus the path integral will be the same as without the singular locus.
\end{remark}

\begin{remark}
	Before one can claim to successfully write the amplitude at the singularity, one has to check that e.g.  the energy of the quantum field, which is described by energy momentum tensor $T_{\mu\nu}$ that can be computed from the embedding functions $ X^\mu(\sigma,\tau) $, does not diverge at the conical singularity after quantization. Such effects can be introduced in quantum mechanics because the quantum mechanical system obeys a sort of Heisenberg's uncertainty principle and can not be localized exactly on a point. However, this would have more to do with the overall topology of the space-time and not with the singularity. 
\end{remark}
\begin{remark}
	It is also clear that the world-sheet can not be such that the string propagates over the singularity in any way. A model that would let the string e.g. not to revolve around the cone but be somewhere embedded on the surface of the cone, then propagate to the tip in time  and then over the tip, would make the spatial derivatives of the embedding functions, and thus the action, which is the integrand of the path integral, ill defined.
	
	Some notion of a derivative of the embedding functions in the action simply has to exist if we compute the path integral over them. If the action of a field theory contains derivatives of the field, singularities therefore can merely be added as boundary points of paths of quantum fields, as long as the action has no other fields that forbid this, and as long as the derivatives do not diverge close to the singularity and as long as all physical observables near and, if possible, at the singularity, remain finite classically and after quantization.
\end{remark}
\begin{remark}
	This is similar as in in classical mechanics, where singularities are sometimes defined (even though this is not an entirely satisfactory definition, see the article of Geroch \cite{Geroch123}) as points over which geodesics parameterized by some curve parameter can not be continued.
\end{remark}
\begin{remark}
	The requirement that differentiable structures must exist for quantum fields are especially visible in the axiomatic approach to quantum field theory, where field operators are described as operator valued distributions of smooth, fast falling test functions on a Schwartz space. 
	A good overview of the axiomatic methods of constructive quantum quantum field theory, especially in connection with conformal field theory and string theory, is given by Schottenloher \cite{schottenloher}.
\end{remark}

Classically, the equations of motion that follow from the Polyakov action are given by the Laplace equation \begin{equation}\nabla^2 X^\mu(\sigma,t)=\frac{1}{\sqrt{h}}\partial_u(\sqrt{\tilde\gamma} \tilde\gamma^{uw}\partial_w X^\mu(\sigma,\tau))=0\label{stringbelt}\end{equation} 
with  $ \nabla^2 $  as the Laplace-Beltrami operator.

For the case where  $ \tilde\gamma_{uw} $  is flat, (i.e. the case where the perturbation expansion of the world-sheet in the Polyakov action has been stopped before the first loop), one can use reparameterization and Weyl invariance to to set the world sheet to  $ \tilde\gamma_{00}=-1,\tilde\gamma_{11}=1,h_{01}=\tilde\gamma_{10}=0 $  in order to reduce the equations of motions to 
\begin{equation}
	\partial_{u}\partial^u X^\mu=0.
\end{equation} 

When we have such differential equations, together with a singularity that may act as a boundary point, then for the mathematician, it is clear that one has to study Cauchy problems, i.e. if, some initial data of  $ X^\mu $, or  $ \partial X^\mu $  for a set of points  $ \sigma\in[0,2\pi] $  at  $ \tau=0 $  or similarly for  $ \sigma=0 $  and  $ \tau\in[0,\infty) $, will determine a unique solution of Eq.  (\ref{stringbelt}). Let us emphasize in the following for the physicist and the mathematician who is not familiar with the study of fields in Lorentzian non-Minkowski space-times, why the study of Cauchy problems is especially important if one wants to describe the propagation of  quantum fields on a curved space-time or a space-time with boundaries or other singularities.

In constructive quantum field theory, field operators are tempered distributions and a quantum field theory of the massive Klein Gordon equation 
\begin{equation}(\square +m^2)\varphi=0\end{equation}
is famously given by
\begin{equation}\Phi(\square f+m^2f)=0\label{eq:distri}\end{equation} where  $ f $  is a fast falling smooth test function in a Schwartz space and  $ \Phi $  is an operator valued tempered distribution mapping on some Hilbert space, which also has to fulfill the Wightman axioms \cite{Wightman,schottenloher}.

Practically, one moves from a solution of the classical field to the quantum field by identifying a scalar product  $ \langle,\rangle $  for solutions  of the field equation. According to DeWitt's famous review \cite{DeWittcurved}, this works as follows: A field equation for a field $\varphi$ is given by a differential operator  $ F\varphi=0 $, where  $ F $  is self adjoint and fulfills \begin{equation}\int_M \psi_1^{*}(F\psi_2)d^4x=\int_M (F\psi_1)^{*}\psi_2 d^4x\end{equation}
for two arbitrary smooth complex functions  $ \psi_1,\psi_2 $  with compact support over the integration region. DeWitt argues that therefore, there exists a differential operator  $ \overleftrightarrow f^\mu  $  such that 
\begin{equation}\int_\Omega(\psi_1^{*}(F\psi_2)d^4x=\int_\Omega (F\psi_1)^{*}\psi_2) d^4x=\int_{\partial \Omega} \psi_1\overleftrightarrow f^\mu\psi_2 d\Sigma_\mu,\end{equation}
where  $ \partial \Omega $  is the smooth boundary of any compact region  $ \Omega $  of the space-time and  $ \Sigma_\mu $  is the outward directed surface element on  $ \partial\Omega $. For any two solutions  $ \varphi_1,\varphi_2 $  of   $ F\varphi=0 $,
one may then define a scalar product 

\begin{equation}\langle \varphi_1,\varphi_2\rangle=-i\int_\Sigma \varphi_1^{*}\overleftrightarrow f^\mu \varphi_2 d\Sigma_\mu\end{equation} with  $ \Sigma $  now as Cauchy data for  $ \varphi_1 $  and  $ \varphi_2 $. Then one has to find a complete set of functions $\varphi_j$, $\varphi_j^{*}$ that are orthonormal with respect to this scalar product. Finally, one can expand the solution  $ \Phi $  as  
\begin{equation}\Phi=\sum_j a_j \varphi_j+a^\dagger \varphi_j^* \;,\label{eq:fff}\end{equation} 
where  $ a_j^\dagger $  is the adjoint of  $ a_j $  and  $ \varphi^*_j $  is the complex conjugate of  $ \varphi_j $, with  $ a_j=\langle \varphi_j,\Phi\rangle $ and
\begin{equation}[a_j,a_{j'}^\dagger]=\delta_{ij},\;\;[a_j,a_{j'}]=0.\end{equation}

Eq.  (\ref{eq:fff}) holds for a discrete set for the orthonormal base. The expansion into an orthonormal base can, however, also be continuous, with a parameter  $ k $  and one then has 
\begin{equation}\Phi=\int dk (a_k \varphi_k+a^\dagger \varphi_k^*),  \label{eq:fff1}\end{equation}
with $ a_k=\langle \varphi_k,\Phi\rangle $  and
\begin{equation}[a_k,a_{k'}^\dagger]=\delta(k-k'),\;\; [a_k,a_{k'}]=0.\end{equation}

Mathematically, the  $ a_j $  and  $ a_k $  are operator valued tempered distributions that act on a Schwartz space of smooth, fast falling  test functions  $ f $, Hence, by expanding the field as in Eqs (\ref{eq:fff}) or (\ref{eq:fff1}), one creates a solution  $ \Phi $  that is itself an operator valued distribution which solves the field equation, as described with full mathematical notation on flat space-time in  \cite{schottenloher,Wightman}

In a curved space-time or in a space-time with boundaries and singularities, the situation is now more difficult than in flat space. For a field $\Phi(x)\rightarrow \mathbb{C}$, where $x\in M$ with $M$ Lorentzian but not a Minkowski space-time, there may be several versions of these expansions, e.g. $ \phi_k $ and  $\varphi_k $ depending on the reference frame that we have chosen, e.g. one may have

\begin{eqnarray}\Phi&=&\int dk (A_k \phi_k+A_k^\dagger \phi^{*}_k) \nonumber\\
	&=&\int dk (a_k \varphi_k+a_k^\dagger \varphi^{*}_k)\end{eqnarray}
with  $ \varphi_k $  being a positive frequency solution at early times and   $ \phi_k $  being a positive frequency solution at late times. The  $ a_k $  and  $ A_k $  are then related by a Bogoliubov transformation \cite{DeWittcurved,Birrell}.

The two expansions can have the effect that observers in different reference frames see different particle numbers. Quantum field theory in curved and general Lorentzian but non-Minkowski space-times was discovered by DeWitt and Brehme in \cite{DeWittpropa} and developed for the first time as a theory in \cite{DeWittbook1}. DeWitt wrote a nice survey on the topic in \cite{DeWittcurved}.

\begin{remark}
	At this point, the physicist should recognize why the study of Cauchy problems for the fields are especially important when the space-time is curved or has boundaries or other singularities. If the solution of a field equation  $ \Phi $  would not be determined by Cauchy data in a given reference frame, then one could, from an arbitrarily chosen reference frame, not predict anymore the observables, for example the particle number, that the other frames are observing. 
\end{remark}

\begin{remark}
	In string theory, the situation is more difficult than in ordinary quantum field theory. Here, one has a field $X^\mu:\Sigma\rightarrow T$, where both spaces, the target space-time $T$ and the world-sheet $\Sigma$ can be curved, and, classically, they are related to each other by the pull-back \cite{tong}
	
	\begin{equation}\tilde{\gamma}_{uw}=G_{\mu\nu}\partial_u X^\mu \partial_w X^\nu \label{pullback}. \end{equation}
	
	Let us suppose we consider the classical system on the orbifold at first and quantize it after we made a blow up of the target space singularity. If the blow-up was not flat, we would, by Eq. (\ref{pullback}) end up with a quantum field theory in curved world-sheets, that we would have to quantize with the methods of quantum field theory in curved space-times mentioned above, since several mode expansions of the field may exist.
\end{remark}

\begin{remark}
	It is clear that if one applies conditions like Eq.  (\ref{cond1}) to a system in flat space-time without boundaries, this does not introduce the reference frame dependent effects from quantum field theory in curved space-times that one may get from a blow-up close to the singularity if the resolution is not flat.
	
	Let us assume we make a full resolution of the orbifold that is flat. We have argued that some quotient singularities can act as boundary if they can be reached by the embedding functions. Then, if Dirichlet boundary conditions are used at the singularity, this can be expected to induce particle production effects similar to curved metrics in the singular target space, see \cite{DeWittcurved,Birrell} (the details of these effects will depend on the exact nature of the singularity). However, because we have used a full resolution that does not even have boundary singularities and is flat, these effects can then not be described by the sigma model on the blow-up. Instead, the model on the blown up target space would describe the situation where the singular locus was removed from a flat space-time, since the target space-time of the blow up was assumed to be flat.

	A blow-up of an orbifold, if it is not complete, may also have varying boundary singularities on its own. Boundaries in the target space-time act like a boundary in the world-sheet,  and they may be such that the Polyakov action can be written on them exactly. Different stages of the resolution may have different boundaries, creating different effects for the quantum field theory in non Minkowski space-times.
	
	In most cases, the blow-up is expected to have non-vanishing curvature. As a result, Eq.  (\ref{cond1}) can only be used if one ignores that the singularity and the curvature of the target space may induce particle production effects that differ from the quantum field theory on the blown up space. Additionally, if the blow-up is not flat, one has to check the consistency of the resulting theory.
\end{remark}

As we have seen, one of these checks involves whether the solutions of the differential equations on the target space are determined by Cauchy data close to the singularity.

A full study of the effects of singularities on quantum fields would go beyond this article. The classification of singularities in general relativity is mathematically difficult, see \cite{Geroch123}. The following results can be used if the singularities are conical and cuspidal. And they do not need a full resolution. In order to make the our proofs, we need the following preliminary:

\begin{definition}\label{deffaaa}Let  $ n\in\mathbb{N_+} $. An  $ n $  dimensional conic manifold  $ (\tilde{U},g) $  is a compact manifold with boundary  $ \partial{\tilde{U}} $  such that in a neighborhood  $ Y $  of any boundary component, there exists a boundary defining function  $ x $,  $ (x\geq0,\{x=0\}=\partial \tilde{U}, dx_{|\partial\tilde{U}}\neq 0) $  and  $ h\in C^\infty(\tilde{U}, Sym^2(T^{*}\tilde{U})) $  whose restriction  $ h|_{Y} $  is a positive definite metric and there is a metric  $ g $  in the interior  $ \tilde{U}^{\circ} $  which in  $ Y $  takes the degenerate form \begin{equation}g|_{Y}=dx^2 + x^2h\label{conicalmetric}.\end{equation}
\end{definition}One should regard the conic manifold as a manifold that resulted from a space whose conical singularities have been blown up into cylinders.  Similarly, one has the
\begin{definition}\label{deffb}A cuspidal manifold  $ (\tilde{U},g) $  is defined in the same way as a conical manifold, but instead of Eq.  (\ref{conicalmetric}), we have 
	\begin{equation} g|_{Y} =(1-k)(k-1)x^{2k-2}S+\mathcal{O}(x^{2k-1})dx^2+x^{2k}h,\end{equation} where  $ S $  is a smooth function on  $ \partial\tilde{U} $  and  $ k\in \mathbb{N}_{\geq 2} $. \end{definition}
Cuspidal manifolds arise as blow-ups from spaces with isolated cuspidal singularities. From the literature, we have the following two theorems:
\begin{theorem}(Melrose, Wunsch)Let  $ (\tilde{U}, g) $  be a conic manifold. Every  $ y\in \partial\tilde{U} $  is the endpoint of a unique geodesic.\label{conicaltheorem1}\end{theorem}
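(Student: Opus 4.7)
The plan is to work in the collar coordinates $(x, y)$ from Definition \ref{deffaaa}, in which the metric takes the form $g = dx^2 + x^2 h_{ij}(y)\,dy^i dy^j$ in a neighborhood of $\partial\tilde{U}$, and to analyze the geodesic equation directly as a singular ODE system. From the Lagrangian $L = \tfrac{1}{2}(\dot x^2 + x^2 h_{ij}(y)\dot y^i \dot y^j)$, the Euler--Lagrange equations yield $\ddot x = x\,|\dot y|_h^2$ together with the $\dot y^i$--equations; unit-speed parameterization gives the first integral $\dot x^2 + x^2 |\dot y|_h^2 = 1$, which substituted back produces the master equation
\begin{equation}
\ddot x = \frac{1-\dot x^2}{x}.
\end{equation}

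Existence of a geodesic with endpoint $y_0 \in \partial\tilde{U}$ is immediate: the radial curve $\gamma_0(t) = (t, y_0)$, with $\dot y \equiv 0$, trivially solves both the geodesic equations and the energy integral and satisfies $\gamma_0(t)\to y_0$ as $t\to 0^+$. The content of the theorem is therefore uniqueness.

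For uniqueness, I would take a unit-speed geodesic $\gamma(t) = (x(t),y(t))$ with $\gamma(t)\to y_0 \in \partial\tilde{U}$ as $t\to 0^+$, and argue $\gamma=\gamma_0$. The first key step is to show $\dot x(0^+)=1$. Since $\ddot x\geq 0$, the function $\dot x$ is monotone nondecreasing on the approach, so the limit $c:=\dot x(0^+)\in[0,1]$ exists (positivity from $x>0$ in the interior; upper bound from energy conservation). To exclude $c<1$, I would use monotonicity of $\dot x$ together with $\dot x\leq 1$ to get $x(t)\leq t\dot x(t)\leq t$, whence the master equation gives $\ddot x\geq \epsilon_0/t$ on some $(0,t_1]$ for $\epsilon_0>0$; integrating over $(\delta,t_1]$ and letting $\delta\to 0^+$ yields a logarithmic divergence of $\dot x$, contradicting the bound $\dot x\leq 1$. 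Hence $c=1$. Setting $\phi:=1-\dot x\in[0,1]$ then transforms the master equation into $\dot\phi=-\phi(2-\phi)/x$, which is strictly negative whenever $\phi>0$ and $x>0$. Thus $\phi$ is strictly decreasing in $t$, so the requirement $\phi(t)\to 0$ at $t=0^+$ forces $\phi(t_0)=0$ for every $t_0>0$, i.e., $\phi\equiv 0$. Consequently $\dot x\equiv 1$, $x(t)=t$, and the energy integral then forces $|\dot y|_h^2\equiv 0$, so $y$ is constant and equal to $y_0$. This identifies $\gamma$ with the radial geodesic $\gamma_0$ in the collar, and classical ODE uniqueness away from the boundary extends this to the entire geodesic.

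The main obstacle is the singular-ODE analysis near $x=0$: the coefficient $1/x$ appearing in the master equation (and the $(2\dot x /x)\dot y$ term in the $y$-equation) is singular at the boundary, so standard ODE existence/uniqueness theorems do not apply and the argument must proceed by monotonicity/comparison instead. A secondary subtlety is that the warped form $g=dx^2+x^2h$ is only guaranteed on a collar neighborhood of $\partial\tilde{U}$, but since any geodesic approaching the boundary eventually lies entirely inside such a collar, the local analysis is sufficient and combining it with interior ODE uniqueness gives the full statement.
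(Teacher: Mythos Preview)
The paper does not supply its own proof of this statement; it simply cites Melrose--Wunsch. Your proposal therefore goes well beyond what the paper does, giving a direct singular-ODE argument that is essentially correct in the exact-cone case and captures the mechanism behind the Melrose--Wunsch result.

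One caveat: by Definition~\ref{deffaaa} the tensor $h$ is allowed to depend smoothly on $x$ as well as $y$, so in general the collar metric is $g = dx^2 + x^2 h_{ij}(x,y)\,dy^i\,dy^j$ rather than the product form you wrote. This introduces an extra term $\tfrac{1}{2}x^2(\partial_x h_{ij})\dot y^i\dot y^j$ in the $x$-equation, so the master equation becomes $\ddot x = (1-\dot x^2)/x + O(1)\cdot(1-\dot x^2)$. Since this perturbation is lower order (bounded by $C(1-\dot x^2)$ while the leading term is $(1-\dot x^2)/x$), your monotonicity and logarithmic-divergence arguments survive unchanged on a sufficiently small collar. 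It would be worth saying this explicitly. Otherwise the argument is sound: the reduction to $\dot x(0^+)=1$ via the integral blow-up, followed by the $\phi$-monotonicity step forcing $\phi\equiv 0$, is clean and complete.
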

\begin{proof}See \cite{Wunsch}.\end{proof}
\begin{theorem}(Melrose, Wunsch)
	The Cauchy problem for the wave equation has a unique solution on the conic manifold \label{wunsch2}
\end{theorem}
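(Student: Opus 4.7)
The plan is to prove well-posedness of the Cauchy problem $\Box_g u = 0$, $u|_{t=0}=u_0$, $\partial_t u|_{t=0}=u_1$ by combining standard hyperbolic theory on the interior $\tilde U^\circ$ with careful analysis near the boundary component $\{x=0\}$, where the metric degenerates as $g|_Y = dx^2+x^2 h$. On the interior, away from $\partial\tilde U$, the operator is a strictly hyperbolic linear wave operator with smooth coefficients, so classical finite-propagation-speed arguments of Leray type give local existence and uniqueness. The whole problem therefore reduces to analyzing behavior in a tubular neighborhood $Y$ of the conical boundary.

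Inside $Y$, I would write the spatial Laplacian in the form $\Delta_g = \partial_x^2 + \tfrac{n-1}{x}\partial_x + \tfrac{1}{x^2}\Delta_h$, where $\Delta_h$ is a Laplacian on the cross section $\partial\tilde U$. The first step is to choose a function-space setting that respects the conic volume element $x^{n-1}\,dx\,d\mathrm{vol}_h$: specifically, the weighted $L^2$ space and the Friedrichs form domain $H^1_0(\tilde U)$ obtained as the closure of compactly supported smooth functions in the interior with respect to the Dirichlet energy $\int |du|_g^2\,d\mathrm{vol}_g$. This gives a canonical self-adjoint extension $\Delta^F_g$, at which point existence of a unique solution $u\in C(\mathbb R;H^1_0)\cap C^1(\mathbb R;L^2)$ follows by the Hille--Yosida/spectral theorem applied to the abstract second-order evolution equation $\ddot u = \Delta^F_g u$ with initial data $(u_0,u_1)\in H^1_0\times L^2$.

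For uniqueness, I would run an energy estimate based on the standard stress-energy tensor: multiply by $\partial_t u$, integrate over a lens-shaped region bounded by two spacelike hypersurfaces, and use divergence theorem. The only nontrivial contribution one must control is a boundary flux at $\{x=0\}$. Here I would exploit two facts: (i) the volume element vanishes as $x^{n-1}$, so smooth localized test data give vanishing flux in the limit, and (ii) by Theorem~\ref{conicaltheorem1}, each boundary point is the endpoint of a unique geodesic, which lets me identify the characteristic cone reaching the tip with a single well-defined incoming null direction and thus discard any anomalous contribution at $x=0$ when the initial data are supported away from the tip.

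The main obstacle, and the technically delicate part, is the self-adjoint extension step: the formal Laplacian on a conic manifold is generally not essentially self-adjoint when the cross-sectional geometry admits small eigenvalues of $\Delta_h$, so different boundary conditions at the tip can lead to different evolutions. The content of the theorem is that the Friedrichs extension is the natural choice, and that for the resulting dynamics the propagation of singularities is governed exactly by geodesics hitting the tip; showing this cleanly requires a microlocal analysis near $\{x=0\}$ in the spirit of Melrose's b-calculus, together with the uniqueness-of-geodesics input from Theorem~\ref{conicaltheorem1}, to rule out spurious reflected or amplified modes and to establish that the Cauchy problem is genuinely well-posed rather than merely admitting \emph{some} solution.
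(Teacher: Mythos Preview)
The paper does not give a proof of this theorem at all: its ``proof'' is simply a citation to the original Melrose--Wunsch paper. So there is nothing to compare your argument against on the paper's side beyond the bare reference.

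That said, your outline is a reasonable sketch of the functional-analytic core of well-posedness on conic manifolds: reduce to a neighborhood of the tip, write the spatial Laplacian in the separated form, pick the Friedrichs extension to get a distinguished self-adjoint operator, and invoke the spectral theorem for existence and an energy identity for uniqueness. Two remarks. First, your appeal to Theorem~\ref{conicaltheorem1} (uniqueness of geodesics hitting the boundary) is not actually needed for the bare existence/uniqueness statement; geodesic structure enters in Melrose--Wunsch at the much finer level of \emph{propagation of singularities} (the diffractive and geometric theorems), not at the level of the basic Cauchy problem, which is settled once you have a self-adjoint Laplacian and finite propagation speed. Second, you correctly flag that essential self-adjointness can fail when $\Delta_h$ has small eigenvalues, and that the Friedrichs choice is what makes the statement true as stated; this is indeed the substantive point, and the theorem should really be read as ``unique solution for the Friedrichs realization,'' which is what Melrose--Wunsch establish.
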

\begin{proof}See \cite{Wunsch}.\end{proof}
\begin{theorem}(Grandjean, Grieser) Let  $ (\tilde{U},g) $  be a cuspidal manifold. If  $ S $  is constant, then every  $ y\in \partial\tilde{U} $  is the endpoint of a unique geodesic $\gamma_y(\tau)\in   (\tilde{U},g) $. The exponential map $exp_{\partial\tilde{U}}:\partial\tilde{U}\times(0,\tau_0)\rightarrow \tilde{U}	\setminus\partial\tilde{U},\;\;  (y, \tau )\mapsto  \gamma_y(\tau)$ defined with these geodesics is a smooth diffeomorphism. The extension $exp_{\partial\tilde{U}}:\partial\tilde{U}\times[0,\tau_0)\rightarrow \tilde{U}$ is continuous.	Assume  $ S $  is a Morse function. Then the exponential map  $exp_{\partial\tilde{U}}:(y, \tau )\mapsto  \gamma_y(\tau)$ is defined and subjective for some $\tau_0$ and some neighborhood $T\subset \partial\tilde{U}$. 
	
	Especially, the exponential map is not a diffeomorphism and the extension of the exponential map $exp_{\partial\tilde{U}}:\partial\tilde{U}\times[0,\tau_0)\rightarrow \tilde{U}$ to the boundary $\tau\rightarrow 0$ is not continuous. For an arc-length parameter $\varphi$, let  $\partial_\varphi^2 S<a_k$ on  $\partial\tilde{U}$ never take the value $2$ at a minimum. Then, the exponential map $exp_{\partial\tilde{U}}$ is a homomorphism for suitable $T$ and curve parameters $\tau_0$ . If  $\partial_\varphi^2S>a_k$ on  $\partial\tilde{U}$ at some minimum of $S$  on $\partial\tilde{U}$, then  the exponential map is not injective for any $\tau_0>0$, i.e. in  any neighborhood  $ (\tilde{U},g)$ of  $\partial\tilde{U} $  there are points through which at least 2 geodesics pass. 
	\label{cuspidaltheorem1}\end{theorem}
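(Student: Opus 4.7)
The plan is to reduce the statement to a careful analysis of the geodesic ODE in a collar neighborhood $Y$ of $\partial\tilde{U}$ using the explicit form of the metric from Definition \ref{deffb}. First I would introduce adapted local coordinates $(x,\varphi)$ with $x$ the boundary defining function and $\varphi$ an arc-length coordinate on $\partial\tilde{U}$. Computing the Christoffel symbols of the degenerate metric and substituting into the geodesic equation for $\gamma(\tau)=(x(\tau),\varphi(\tau))$ yields a radial equation that is asymptotically dominant as $x\to 0$ coupled to an angular equation whose forcing term is proportional to $\partial_\varphi S$, arising from the $S$-dependent correction in $g_{xx}$.

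For the constant $S$ case the angular forcing vanishes identically. I would construct the geodesic $\gamma_y$ terminating at each $y\in\partial\tilde{U}$ as a formal asymptotic expansion $\gamma_y(\tau)=(x(\tau),y+\sum_{j\geq 1}a_j(y)\tau^j)$, and justify convergence by a contraction-mapping argument in a weighted Banach space of admissible curves whose weights are adapted to the cusp order $k$. Smooth dependence on $y$ in the fixed-point construction yields the diffeomorphism property on $\partial\tilde{U}\times(0,\tau_0)$, and the leading-order asymptotics $\gamma_y(\tau)\to y$ immediately give continuity of the extension to $\tau=0$.

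For the Morse case I would linearize the angular equation at a critical point $y_0$ of $S$. After a substitution of the form $s=-\log x$, the linearized angular variable satisfies a linear ODE with constant coefficients whose characteristic polynomial depends on $k$ and on $\partial_\varphi^2 S(y_0)$. The threshold $a_k$ is then identified as the value of the Hessian at which the discriminant of this characteristic polynomial vanishes. Below $a_k$ the indicial roots are real, and an adaptation of the constant-$S$ argument with anisotropic weights reflecting the $y$-dependent decay rates produces a homeomorphism, which cannot be smooth because those rates depend on the direction of approach to $\partial\tilde{U}$. Above $a_k$ at a minimum of $S$ the indicial roots are complex and the linearized angular variable oscillates in $s$; varying the oscillatory phase at the initial data produces arbitrarily close pairs of distinct geodesics that reach the same interior point near $y_0$, yielding non-injectivity. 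The failure of continuity of the extension at $\tau=0$ is then a corollary: boundary points arbitrarily close to a critical point of $S$ launch geodesics deflected in sensitively different directions.

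The main obstacle is the singular character of the geodesic ODE as $x\to 0$: the Christoffel symbols of $g$ blow up, so standard Picard--Lindel\"of theory does not apply directly and existence, uniqueness and continuous dependence have to be built up from scratch via asymptotic matching in appropriately weighted function spaces. The subtlest step is establishing the precise threshold $a_k$ and handling the genericity condition that $\partial_\varphi^2 S$ not attain the borderline value at a minimum (the ``value $2$'' caveat in the statement), because at the borderline the indicial roots of the linearized angular equation coincide, resonant logarithmic corrections appear in the asymptotic expansions, and a separate degenerate analysis is required before one can conclude that the two dichotomous regimes exhaust the generic picture.
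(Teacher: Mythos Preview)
The paper does not prove this theorem at all: its entire proof reads ``See \cite{cuspidal}'', i.e.\ the result is quoted from Grandjean--Grieser and used as a black box. Your proposal therefore goes well beyond what the paper does, by attempting to sketch the actual argument rather than citing it.

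Your outline is broadly in the right spirit for this kind of result --- reducing to a singular ODE in collar coordinates, separating the radial and angular parts, performing the change of variable $s=-\log x$ to obtain a constant-coefficient linearization whose indicial roots govern the dichotomy, and reading off the threshold $a_k$ from the discriminant --- and this is indeed the strategy of the original Grandjean--Grieser paper. The contraction-mapping construction in weighted spaces for the $S=\mathrm{const}$ case and the oscillatory-phase argument for non-injectivity above threshold are also the correct heuristics. What your sketch does not yet pin down, and what is the genuine technical content of \cite{cuspidal}, is the precise asymptotic analysis needed to pass from the linearized indicial picture to rigorous conclusions about the full nonlinear geodesic flow: in particular, controlling the error terms $\mathcal{O}(x^{2k-1})$ in the metric uniformly as $x\to 0$, and proving that the qualitative behavior (homeomorphism vs.\ non-injectivity) of the linearization persists under the nonlinear perturbation. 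But for the purposes of this paper none of that is required --- a citation suffices, and your sketch already exceeds what the paper provides.
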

\begin{proof}See \cite{cuspidal}.\end{proof}
And we can use these results to prove:

\begin{theorem} For conical manifolds and cuspidal manifolds with $S=const$, there exists a solution  $  X^\mu $ of the field equations for the embedding functions that follows from the Polyakov action on $ (\tilde{U},g) $ which can be continued up to the boundary. This solution is determined by Cauchy data and also has a well defined quantum theory. If the boundary of the resolution fulfills periodic boundary conditions, then there are world-sheets that fulfill them too in the neighborhood of the singularity. \label{prooff} \end{theorem}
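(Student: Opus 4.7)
The plan is to reduce all four assertions to the known geometric and analytic facts about conic and cuspidal manifolds, namely theorems \ref{conicaltheorem1}, \ref{wunsch2} and \ref{cuspidaltheorem1}, combined with DeWitt's canonical quantization prescription in curved space-time recalled above. The starting observation is that the classical field equations derived from the Polyakov action are the Laplace--Beltrami equations (\ref{stringbelt}) for the components $X^\mu$, which in Lorentzian world-sheet signature are wave equations on $(\tilde U,g)$. So the statement about existence, uniqueness, and Cauchy determination of the embedding functions is, componentwise, a statement about the wave equation.

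For the conic case, this is immediate: theorem \ref{wunsch2} of Melrose and Wunsch gives a unique solution to the Cauchy problem for the wave equation on $(\tilde U,g)$ for each $X^\mu$ independently, and theorem \ref{conicaltheorem1} guarantees that every boundary point is reached by a unique geodesic, so that Cauchy surfaces can be pushed up to $\partial\tilde U$. Combined with the extended definition of the derivative on compact sets from equation (\ref{timederivative}), the solution $X^\mu$ can be continued to the singular locus without the action integrand ceasing to exist. For the cuspidal case with $S=\mathrm{const}$, the first part of theorem \ref{cuspidaltheorem1} says that the exponential map $\exp_{\partial\tilde U}\colon\partial\tilde U\times(0,\tau_0)\to\tilde U\setminus\partial\tilde U$ is a smooth diffeomorphism with continuous extension to the boundary. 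I would use this diffeomorphism as a change of chart that transforms the cuspidal neighborhood into a product-like neighborhood analogous to the conic case, and then transport the Cauchy problem for $\nabla^2 X^\mu=0$ across it; the continuity of the extension to $\tau=0$ is what allows Cauchy data to be prescribed arbitrarily close to the boundary. This is the step I expect to be the main obstacle, because theorem \ref{wunsch2} is stated explicitly for conic manifolds only, and one has to check that the pullback of the wave operator under the exponential map is sufficiently regular to admit the same energy estimates and propagation of singularities arguments that Melrose and Wunsch used.

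Once a Cauchy determined classical solution is in hand, the quantum theory is constructed by DeWitt's recipe recalled above: one uses the self-adjointness of $\nabla^2$ to define the symplectic form $\langle\varphi_1,\varphi_2\rangle=-i\int_\Sigma\varphi_1^\ast\overleftrightarrow f^\mu\varphi_2\,d\Sigma_\mu$, produces a complete orthonormal set of classical modes $\varphi_k$ by expanding any Cauchy datum on a Cauchy surface close to $\partial\tilde U$, and defines the field operator by the mode expansion (\ref{eq:fff}) or (\ref{eq:fff1}). The fact that the Cauchy problem is uniquely solvable all the way up to the boundary is precisely what guarantees that the mode expansion is well-defined in the whole neighborhood of the singularity; Bogoliubov ambiguities may still appear between different reference frames, but they do not obstruct the construction of the quantum theory itself.

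Finally, the statement about periodic boundary conditions $X^\mu(\sigma+2\pi,t)=X^\mu(\sigma,t)$ follows from the uniqueness part of the Cauchy problem: if the Cauchy data prescribed on a surface transverse to the singularity are $2\pi$-periodic in the $\sigma$-coordinate inherited from the boundary of the resolution, then both $X^\mu(\sigma,t)$ and $X^\mu(\sigma+2\pi,t)$ are solutions of $\nabla^2 X^\mu=0$ with the same Cauchy data, hence coincide by the uniqueness part of theorem \ref{wunsch2} (respectively its cuspidal analogue obtained via $\exp_{\partial\tilde U}$). This gives the existence of at least one world-sheet satisfying the same periodic boundary conditions as the resolution in a neighborhood of $\partial\tilde U$, concluding the proof sketch.
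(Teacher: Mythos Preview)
Your proposal has a genuine conceptual gap at the very first step. You write that the field equations (\ref{stringbelt}) are ``wave equations on $(\tilde U,g)$'', and then invoke theorem \ref{wunsch2} componentwise. But equation (\ref{stringbelt}) is the Laplace--Beltrami equation with respect to the \emph{world-sheet} metric $\tilde\gamma$, not with respect to $g$; the functions $X^\mu$ are defined on the two-dimensional world-sheet $\Sigma$ and take \emph{values} in the target $(\tilde U,g)$. The conic or cuspidal structure sits in the codomain of $X^\mu$, not in its domain. The Melrose--Wunsch result concerns the wave equation $(\partial_t^2+\Delta_g)u=0$ on $\mathbb R\times\tilde U$ for scalar $u$ defined on the conic manifold; it says nothing about maps from an a priori unspecified world-sheet into $(\tilde U,g)$. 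So your reduction to theorem \ref{wunsch2} does not go through, and the same issue recurs in your treatment of the periodic case, which relies on that same uniqueness statement.

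The paper takes a different, geometric route precisely to avoid this difficulty. Rather than solving a PDE on an abstract world-sheet, it \emph{constructs} the embedded world-sheet directly as a surface in the target: take a geodesic $\Lambda^\mu(y)$ lying in the boundary $(\partial\tilde U,h)$, and at each of its points launch the unique interior geodesic $\gamma^\mu_y(t)$ furnished by theorems \ref{conicaltheorem1} and \ref{cuspidaltheorem1}; the parametrization $X^\mu(y,t)=\gamma^\mu_y(t)$ then sweeps out a surface. The smoothness of the exponential map (available exactly when $S=\mathrm{const}$) is what makes this surface differentiable, and the equivalence of the Polyakov and Nambu--Goto equations of motion reduces the problem to producing a minimal surface, which the paper obtains by translating one minimal curve along another. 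The quantization is then argued via a Hamiltonian, the associated symplectic (hence Poisson) structure on the phase space of $X^\mu$ and its derivatives, and Kontsevich's deformation-quantization theorem---rather than the DeWitt mode expansion you propose. The periodic statement in the paper is likewise geometric: if $\partial\tilde U\simeq S^1$, maximally extended boundary geodesics wrap it, and the constructed world-sheet inherits that periodicity.
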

\begin{proof}If an  $ n $ -dimensional singular space has an isolated conical or  cuspidal singularity, then according to \cite{conical,cuspidal}, we may chose a subset $ W $ of the singular space that contains this singularity. We can then make a blow-up of the restriction $ W $ that leads to a resolution in form of a conical or cuspidal manifold. The latter are manifolds with boundary singularities, where the degenerate metric in the neighborhood  $ Y $  of any boundary component  $ \partial\tilde{U} $  can be written as in definitions (\ref{deffaaa}) or (\ref{deffb}). If differentiable functions exist on this blow-up then they also exist on the singular space in an arbitrarily close neighborhood of the singularity.
	
	As we have noted before, one can extend the definition of derivatives to compact sets. This allows us to describe differentiable curves that start or end at boundaries. The symmetric 2-tensor  $ h $  of the resolutions in the definitions (\ref{deffaaa}) or (\ref{deffb}) describes a metric on each boundary component. Since  $ h $  is positive definite by definition, it is non-degenerate, and one can choose an affine connection for the boundary manifold  $ (\partial\tilde{U},h) $.  We can therefore use a point  $ p\in\partial\tilde{U} $  as the starting point of a geodesic curve  $ \Lambda^\mu(y) $, where  $ y\in\left[0,2\pi\right] $  is some curve parameter and  $ \forall y: \Lambda^\mu(y)\in\{x=0\} $  with  $ x $  as the boundary generating function of  $ \tilde{U} $. 
	Now we take the  point   $ p=\Lambda^\mu(y) $  on  $ \partial\tilde{U} $  and use it as a start or end point for a geodesic  $ \gamma_p^\mu(t),t\in \mathbb{R}_{\geq0} $  that leaves the boundary into the interior  $ \tilde{U}^{\circ} $  and whose solution is determined by Cauchy data  $ \gamma_p^\mu(0) $  and   $ \left(\gamma_p^{\mu}\right)'(0) $.  
	
	In string theory, the classical solution of the equations of motion for $X^\mu$ that follows from the Polyakov action is equivalent to the solution of the equations of motion that follows from the Nambu-Goto action. The solution is given by the minimization of the world-sheet with respect to the target space \cite{tong}. In general, a minimal surface can be constructed by translating a minimal curve along another minimal curve, see \cite{Komerell} for a proof. 
	
	Since the exponential map is a smooth diffeomorphism by theorem (\ref{cuspidaltheorem1}), we can translate the geodesic (which is a minimal curve) $ \Lambda^\mu(y)$  along  $ \gamma_p^\mu(t) $ (which is another minimal curve) and get a differentiable minimal surface that we can parameterize by $ X^\mu(y,t)\equiv \gamma_y^\mu(t)$. We thereby construct the classical string theory world-sheet embedded in the blow-up manifold as a minimal surface on  $ \tilde{U} $  that we could get equivalently from the Polyakov action by deriving Eq.  (\ref{stringbelt}) from the action and solving Eq.  (\ref{stringbelt}) for  $ X^\mu $.
	
	Because of the construction from geodesics, the embedding functions  $ X^\mu $  that define the minimal surface are also determined by the Cauchy data close to the singularity.
	
	Eq.  (\ref{stringbelt}) that follows from the variation of the Polyakov action can also be derived from a Hamiltonian. We denote  the  derivative in  $ t $ as	$ \dot X^\mu $  and  $ X^{'\mu} $ is the derivative of  $ X^\mu $  in  $ y $. The Hamiltonian is then a function  $ H( X^\mu, X^{'\mu}, X^{''\mu}, \dot X^\mu,\ddot X^\mu )  $ which    can be determined from the Polyakov action \cite{Polchinski}. 
	Since the   $ X^\mu $, which we have constructed, solve Eq.  (\ref{stringbelt}), one can also form the same Hamiltonian   $ H( X^\mu, X^{'\mu}, X^{''\mu}, \dot X^\mu,\ddot X^\mu )  $   from the constructed  $ X^\mu $. Having obtained a Hamiltonian   $ H( X^\mu, X^{'\mu}, X^{''\mu}, \dot X^\mu,\ddot X^\mu )  $  , one can construct a symplectic manifold from  $X^\mu, X^{'\mu}, X^{''\mu}, \dot X^\mu,\ddot X^\mu $. A symplectic manifold is a Poisson manifold. The set of functions $X^\mu, X^{'\mu}, X^{''\mu}, \dot X^\mu,\ddot X^\mu $   therefore  has a well defined quantum theory by Kontsevitch's theorem \cite{Kont1} which holds for any Poisson manifold.
	
	If the boundary components  $ \partial \tilde{U} $  from \cite{conical, cuspidal} are periodical e.g in form of an  $ S^1 $, then there are geodesics  $ \Lambda^\mu(y)\in\partial\tilde{U}\in S^1 $. If one extends them maximally, they will parameterize  $ S^1 $  entirely. A blow-down will then lead to a world-sheet which has the same symmetry close to the singularity. In this way, the proof can be used to construct world-sheets of so-called "twisted" states.\end{proof}
\begin{remark}
	For conical manifolds and for cuspidal manifolds with $S=const$, the result is the same as in theorem (\ref{wunsch2}).
\end{remark}

\begin{theorem}
	Solutions $X^\mu$ that follow from the Polyakov action are uniquely determined by Cauchy data on spaces with conical singularities and cuspidal singularities if  $S=const$ arbitrarily close to the singularity. The quantum field theory of the solution  $X^\mu$ is well defined as well. 
\end{theorem}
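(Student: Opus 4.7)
The plan is to treat the statement as a corollary of Theorem \ref{prooff} together with the uniqueness portions of Theorems \ref{wunsch2} and \ref{cuspidaltheorem1}. Existence of a classical solution $X^\mu$ that extends up to the singular locus and the construction of a symplectic structure on which Kontsevich's theorem yields a quantization have already been carried out in Theorem \ref{prooff}, so what remains is to promote "determined" to "uniquely determined" and to verify that the statement holds on a neighborhood of the singularity that can be taken arbitrarily small.

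First I would fix a blow-up $(\tilde U, g)$ of the singular space and work inside the boundary collar $Y$ where the metric takes the normal form of Definition \ref{deffaaa} or \ref{deffb}. After choosing the conformal gauge for the world-sheet, Eq.~(\ref{stringbelt}) reduces componentwise to the wave equation $\partial_\alpha \partial^\alpha X^\mu = 0$, and in the conical case Theorem \ref{wunsch2} applies directly, giving uniqueness of $X^\mu$ for prescribed Cauchy data on a spacelike slice inside $Y$. In the cuspidal case with $S = \mathrm{const}$, Theorem \ref{cuspidaltheorem1} asserts that the exponential map $\exp_{\partial\tilde U}\colon \partial\tilde U \times (0,\tau_0) \to \tilde U \setminus \partial\tilde U$ is a smooth diffeomorphism. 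Combined with the minimal-surface construction of Theorem \ref{prooff}, in which $X^\mu(y,t) = \gamma_y^\mu(t)$ is obtained by translating the boundary geodesic $\Lambda^\mu(y)$ along the transverse geodesic $\gamma_p^\mu(t)$, injectivity of $\exp_{\partial\tilde U}$ forces this minimal surface to be the unique one compatible with the prescribed Cauchy data; a second solution with the same data would yield two geodesics from $\partial\tilde U$ through the same interior point, contradicting the diffeomorphism property. Since $\tau_0$ can be chosen as small as desired, pulling back through the blow-up map $f$ of Eq.~(\ref{resolv}) shows that uniqueness holds in an arbitrarily small punctured neighborhood of the singularity of the original space.

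For the quantum theory, the Hamiltonian $H(X^\mu, X'^\mu, X''^\mu, \dot X^\mu, \ddot X^\mu)$ was already constructed in the proof of Theorem \ref{prooff}, yielding a Poisson manifold to which Kontsevich's theorem assigns a well-defined deformation quantization. The main obstacle I expect lies less in the geometric arguments than in the bookkeeping: one must check that the normal-form structure of Definitions \ref{deffaaa} and \ref{deffb}, and in particular the restriction $S = \mathrm{const}$ on the boundary metric $h$, is preserved when the collar is pulled back through $f$, and one should be explicit that "well defined" is meant in the deformation-quantization sense, so that no new obstruction arises from the singularity itself without claiming convergence of the star product or a concrete Hilbert-space realization.
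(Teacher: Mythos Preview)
Your proposal is correct and follows essentially the same strategy as the paper: invoke Theorem \ref{prooff} on the blow-up and then blow down to the singular space, with the quantum theory handled via Kontsevich's theorem exactly as before. The paper's own proof is terser than yours---it simply cites the construction of Theorem \ref{prooff}, notes that the world-sheet there is determined by Cauchy data, and then performs the blow-down, adding only the observation that at the singular point the geodesics $\gamma_y^\mu(\tau)$ become geodesics emanating from the singularity with varying tangential directions $y$, so that the differentiability of the exponential map in $y$ guarantees that the $y$-derivatives of $X^\mu(y,t)$ survive the blow-down.

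Your treatment of uniqueness is actually more explicit than the paper's: you separately invoke Theorem \ref{wunsch2} for the conical case and the diffeomorphism property of $\exp_{\partial\tilde U}$ from Theorem \ref{cuspidaltheorem1} for the cuspidal $S=\mathrm{const}$ case, whereas the paper simply inherits ``determined by Cauchy data'' from Theorem \ref{prooff} without isolating the uniqueness step. That extra care is a virtue. Conversely, you do not mention the paper's point about $y$-derivatives at the singular locus itself; this is the one ingredient the paper adds beyond Theorem \ref{prooff}, and it is what justifies that the Polyakov-action solution (which needs $\partial_y X^\mu$) remains meaningful right up to the singularity after blow-down, not merely on the punctured neighborhood you describe.
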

\begin{proof}
	On the conical manifold and the cuspidal manifold with $S=const$, we have constructed in theorem \ref{prooff} a differentiable world-sheet parameterized by embedding functions $X^\mu$ that could be equivalently derived from a variation of the Polyakov action. This smooth world-sheet is determined by Cauchy data at the boundary of the conical or cuspidal manifold and has a well defined quantum theory. We can make a blow-down with a suitable map and get to the singular space where these properties of the $X^\mu$ also hold. At the singular point, the geodesics $\gamma^\mu_y(\tau)$ in \ref{cuspidaltheorem1} become then geodesics that start with varying tangential directions $y$ from the singularity. The differentiability of the exponential map in $y$ then allows to write derivatives of $X^\mu(y,t)$ in $y$ and t.
\end{proof}

\begin{remark}
	Artificial examples for spaces with cuspidal singularities are of course the orbifolds  $ \mathbb{C}/\mathbb{Z}_n $, which are cones with an angle  $ 2\pi/n $. Since the theorem only needs a local blow-up in the neighborhood of the singularity and not a global and full resolution of all singularities, it covers much more general spaces. The proof above also may work for some orbifolds that have cuspidal singularities. Examples for those spaces  are given in \cite{cusporbi}. They occur for example in hyperbolic orbifolds \cite{hyperbolic}. 
\end{remark}

\begin{theorem}
	For spaces with cuspidal singularities where $S$ is a Morse function, one can not define world-sheets with functions  $X^\mu$ that solve the equations of motion determined by the Polyakov action. One also can not quantize a world-sheet there with conventional axioms of quantum field theory.\label{nogo}
\end{theorem}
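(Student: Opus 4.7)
The plan is to run the construction of theorem \ref{prooff} in reverse and isolate the exact step at which the Morse hypothesis destroys it. Recall that in the well-behaved case the world-sheet was produced as a minimal surface by translating a boundary geodesic $\Lambda^\mu(y)\subset\partial\tilde U$ along a transversal geodesic $\gamma_p^\mu(t)$, with the whole surface encoded by the exponential map $\exp_{\partial\tilde U}:(y,\tau)\mapsto\gamma_y^\mu(\tau)$. Theorem \ref{prooff} produced a smooth $X^\mu(y,t)$ precisely because $\exp_{\partial\tilde U}$ is a diffeomorphism and extends continuously to the boundary. I would therefore start by rewriting the minimal-surface/geodesic-translation construction, so that the unique role played by the diffeomorphism property of the exponential map is made explicit: the smoothness of $X^\mu$, the existence of the partial derivatives entering the Polyakov Lagrangian, and the existence of a well-defined pullback metric $\bar\gamma_{ab}=G_{\mu\nu}\partial_a X^\mu\partial_b X^\nu$ all factor through $\exp_{\partial\tilde U}$.

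Next I would invoke the last clause of theorem \ref{cuspidaltheorem1}: when $S$ is a Morse function and $\partial_\varphi^2 S>a_k$ at a minimum of $S$ on $\partial\tilde U$, in every neighborhood of $\partial\tilde U$ there are interior points through which at least two distinct boundary geodesics pass, and the extension of $\exp_{\partial\tilde U}$ to $\tau=0$ is not continuous. From this I would deduce two incompatibilities with the Polyakov equations of motion. First, if one tried to define $X^\mu(y,t):=\gamma^\mu_y(t)$, the resulting assignment would be multi-valued on a dense set of interior points, so one cannot attach to it single-valued partial derivatives $\partial_a X^\mu$, and hence no functional $S_p$ to vary. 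Second, the discontinuity at $\tau=0$ obstructs continuation of any candidate $X^\mu$ up to the singularity, so the compact-set derivative definition used previously fails there, and \eqref{stringbelt} cannot be solved in any sense that includes boundary data on $\partial\tilde U$. Uniqueness for the Cauchy problem likewise fails: identical Cauchy data $(\Lambda^\mu(0),(\Lambda^\mu)'(0))$ launch multiple geodesics, so even if some local solution existed, it would not be determined by initial data near the singularity.

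For the quantization claim I would argue that the DeWitt expansion recalled in the text requires completing the space of classical solutions in a scalar product based on a Cauchy surface, after which the mode operators $a_k=\langle\varphi_k,\Phi\rangle$ generate the field algebra. Without a well-posed Cauchy problem no such complete orthonormal mode basis $\{\varphi_k\}$ can be constructed near the singularity, and the Wightman/Schwartz framework recalled above, which presupposes operator-valued tempered distributions on a smooth manifold, has no natural domain where the boundary is non-Hausdorff in the sense made quantitative by the failure of $\exp_{\partial\tilde U}$ to be a homeomorphism. Equivalently, the Hamiltonian reconstruction $H(X,\dot X,X',\ldots)$ used in the preceding proof cannot be formed on a symplectic leaf, since no smooth family of classical solutions parameterized by Cauchy data exists, so Kontsevich's quantization theorem provides no Poisson structure to deform.

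The main obstacle I foresee is not the local argument above, but making it robust against alternative constructions. A reader can object that even though the geodesic-translation recipe fails, some different minimal surface might exist and render $X^\mu$ single-valued. I would address this by pointing out that any minimal surface filling in from a neighborhood of $\partial\tilde U$ must, by the variational characterization used in theorem \ref{prooff}, be foliated by geodesics emanating from $\partial\tilde U$, so the non-injectivity of $\exp_{\partial\tilde U}$ forces self-intersections of the foliation in every neighborhood of the singular stratum; consequently no immersed, differentiable world-sheet exists there at all. This last step, turning the local failure of $\exp_{\partial\tilde U}$ into a genuine non-existence statement for \emph{any} admissible $X^\mu$, is where I expect the real work to lie.
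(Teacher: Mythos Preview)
Your classical argument is essentially the paper's: both invoke theorem \ref{cuspidaltheorem1} to conclude that when $S$ is Morse the exponential map $\exp_{\partial\tilde U}$ fails to be a continuous diffeomorphism, so the geodesic-translation construction of theorem \ref{prooff} cannot produce a differentiable minimal surface, and hence no smooth $X^\mu$ solving \eqref{stringbelt} exists near the singularity. Your explicit treatment of the objection ``perhaps some other minimal surface exists'' is a useful addition the paper leaves implicit.

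The quantization argument differs in emphasis. You run theorem \ref{prooff} in reverse: no well-posed Cauchy problem $\Rightarrow$ no complete mode basis $\{\varphi_k\}$ for the DeWitt expansion $\Rightarrow$ no symplectic data on which to invoke Kontsevich. The paper instead argues directly within the path-integral framework: since the classical solution already has undefined derivatives (not merely divergent ones), the integrand $e^{iS_p}$ is ill-defined at zeroth order, so no counter-term or regularization scheme can repair it. The paper then strengthens this by citing the Dray--Manogue analysis \cite{ManogueDray}: if one attempts to model the discontinuous $X^\mu$ by step functions, the resulting $\delta$-function contributions to $T_{\mu\nu}$ survive every known renormalization procedure. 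Your route is cleaner as a logical mirror of the positive theorem; the paper's route buys an explicit rebuttal of the natural escape hatch (``just treat the jump distributionally and renormalize''), which your outline does not address. Including a sentence to that effect would close the only visible gap in your version.
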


\begin{proof}
	For spaces with cuspidal singularities where $S$ is a Morse function, several geodesics starting at different points at the boundary may meet with different tangent vectors if $\partial_\varphi^2S>a_k$, see \cite{cuspidal}.  According to \cite{cuspidal}, there does not exist an extension of the exponential map $exp_{\partial\tilde{U}}:\partial\tilde{U}\times[0,\tau_0)\rightarrow \tilde{U}$
	up to the boundary which is continuous and a diffeomorphism. This implies that one can not translate the geodesics from the exponential map that start at the boundary along geodesics on the boundary defining manifold to form a differentiable minimal surface. The same holds then on the singular space at the blow-down. Hence, if one extends the solutions of the equations of motion of the Nambu-Goto action up to the singularity, then there are points in the neighborhood of the singularity, where the vectors $\partial_a X^\mu$ of the world-sheet embedded in the cuspidal manifold  by $X^\mu$ are not differentiable.
	
	Axiomatic quantum field theory constructs smooth solutions to the field equations, with field operators $\Phi$ as tempered distributions and smooth test functions $f$, which do not exist for this case. One may modify the axioms of quantum field theory and use an ansatz $X^\mu(\sigma,t)=\Theta_2(\sigma,t) X_1^\mu(\sigma,t)+\Theta_1(\sigma,t) X_2^\mu(\sigma,t)$, with step functions $\Theta_{1/2}(\sigma,t)$, where the step is at the singularity of the world-sheet and $\Theta_1(\sigma,t)=0 \forall \sigma,t: \Theta_2(\sigma,t)\neq0$ and vice-versa.  One may solve the equations of motion with $X_{1/2}^\mu$  separately for each side of the edge and expand $X_{1/2}^\mu$ in terms of field operators. The energy momentum tensor contains derivatives of the field. With  $\Theta_{1/2}(\sigma,t)$ in  $X^\mu$, the derivatives will become $\delta$ functions and diverge. 
	
	These divergences can not be treated by renormalization. The path integral is defined such that some paths describe classical solutions (they have in fact the highest weights in the amplitude). In the case of our cuspidal manifold with $S$ as Morse function, one has non-differentiable classical paths $X^\mu$ and evaluates the Nambu-Goto action, Eq. (\ref{nambu}), in order to compute the path integral. Eq. (\ref{nambu}) contains derivatives and diverges if they are divergent. According to definition \ref{counterterm}, a counter-term can modify a finite integrand of a path integral such that it converges, but one can still not subtract divergences. Since the integrand itself is divergent, we get non-renormalizable divergences for the string theory on the cuspidal manifold when $S$ is a Morse function. The same is the case for the singular target space in the limit of the blow-down. A similar result was calculated by Anderson and DeWitt and Manogue and Dray \cite{Topology,ManogueDray} for the wave equation if the latter was solved with discontinuous field functions that were then quantized.
\end{proof}

\begin{remark}
	The situation where $S\neq const$ corresponds to a perturbed cusp, see\cite{cuspcon}. Note that perturbations of a geometry should be expected to arise in quantum gravity.
	For other singularities that are neither exactly conical nor cuspidal, one would need a better understanding of the degeneracies of resolved metrics, see \cite{cuspcon}. 
\end{remark}
\begin{remark}
	A reader may argue that the cuspidal manifold was in Euclidean space-time. Indeed,  the analysis would have to be done for Lorentzian manifolds separately. However, as DeWitt noticed, in quantum gravity, one can, if there is no obstruction,  choose a coordinate system with $\alpha=1$ and $\beta_i=0$ in Eq. (\ref{dewittparam}) and the residual metric $\gamma_{ij}$ over which the path integral is summed is then purely Riemannian.  If we restrict us to this case, then a perturbed cuspidal singularity just in $\gamma_{ij}$ would cause difficulties for classical and quantum fields $X^\mu(\sigma,\tau)$ on the target space-time  if they are determined by the Polyakov action. 
\end{remark}

\begin{remark}
	Note also that the resolutions had boundary singularities and were therefore not 'full' resolutions. Hence, one can not make any conclusions about particle numbers that an observer will see with these methods. But as we have argued, it  is generally difficult to make such arguments with blow-ups due to their arbitrary curvature. Even if we had used a full resolution without boundary singularities, it may not have captured particle production effects that may be due to the singularity acting as a boundary. 
	
	The proofs therefore only show that the classical field $X^\mu$  is determined by Cauchy data at e.g. a conical singularity. Note that this then also holds for the quantum field. In the axiomatic framework of quantum field theory, the field operator $\Phi$ in Eq. (\ref{eq:fff1}) is an expansion of the solutions of the classical equations of motion in terms of a complete orthonormal base. The field operator $\Phi$ is thus also determined by Cauchy data at the singularity.
\end{remark}

\begin{remark}
	The reader may argue that one can simply restrict the classes of paths to space-times with just conical or unperturbed cuspidal singularities. This is of course true. But then one would need to define a mechanism that prevents perturbed cones if one wants to have topology changes by Geroch's theorem.
\end{remark}

\section{How the description of topology changes does not work}\label{sa}
By now, we have learned some methods that show how singularities can be made compatible with quantum fields in principle. They can be used if the singularities are such that they can act as boundary points of the propagation, and if this does not result in infinities in physical observables or difficulties with Cauchy problems, but these are rather restrictive conditions.

Especially in the string theory literature, numerous claims about the compatibility of string theory with topology changes have been made. Unfortunately, they often turn out to be hugely exaggerated. In this section we will analyze many such claims from the literature. In the next section we will then give an example that shows how one can work with conical singularities induced by a topology change and field theories in a mathematically and physically consistent way and without changing the theory. 

In string theory, two quantum field theories are called dual if they give the same physical predictions. For a sigma model, the classical equations of motion are invariant under an involution called T duality. If applied in some direction of a coordinate system, the transformation leads to a different sigma model action and a different target space, but it yields the same classical equations of motion and the same conformal field theory with the same central charge. For two amplitudes related by this duality, one can prove Buscher rules that relate the original target space metric to the T-dual one \cite{Buscher,Buscher2}.

Some physicists have applied the Buscher rules for T-duality transformations repeatedly to various coordinate directions of the target space. The result is a "target-space" whose transition functions fail to be diffeomorphisms on some subsets and which is not metrizable and called "non-geometric" by physicists \cite{Plausch, Hulla, Hullb}.

According to the definition from \cite{sing}, such a spaces are called singular. This illustrates nicely that singularities do not always come in form of a divergent Riemannian tensor.

Apparently unaware of this, physicists have tried to describe these spaces by adding new degrees of freedom in terms of a double field theory or generalized geometry \cite{Plausch, Hulla, Hullb}. Unfortunately, the resulting doubled space-time is not a resolution. If one removes the additional degrees of freedom with a constraint, one is left with the singular theory. If one keeps the additional degrees of freedom, there is no isomorphism to the singular space-time away from the singular locus. Therefore, these theories contain solutions which seem to be at odds with established physics. For example, one obtains black-holes with negative mass \cite{blackholedft}. This is at variance with the positive mass theorem of general relativity \cite{positivemass} and with usual notions of thermodynamics where the entropy of a closed system must be positive.

Physicists also have tried to apply repeated applications of T duality transformations in various directions on backgrounds of the conformal field theory (CFT) that results from string theory \cite{Blumenhagen, ng1,ng2,ng3,ng4,ng5,ng6,ng7, Plausch}. If it would be possible to define a quantum field theory on spaces like the ones in  \cite{Hulla, Hullb, Plausch}, which are non-metrizable and contain connected sets of singularities, this would be very interesting for the investigation of Lorentzian topology changes. One could then probably adopt these methods easily at the isolated singularity.

In the following, we will prove the following 
\begin{theorem}
	Non-linear Sigma models and the conformal field theories of a WZW model are incompatible with the application of repeated T-duality transformations in different directions.
\end{theorem}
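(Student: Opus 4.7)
The plan is to show, case by case, that applying several T-duality transformations in inequivalent directions produces target-space backgrounds on which the DeWitt-Morette and Cartier construction of the path integral breaks down, so that the quantum theory underlying the sigma model or the WZW CFT cannot be defined on them, even though each individual Buscher transformation preserves the classical equations of motion.

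First I would recall Buscher's derivation: a single T-duality is obtained by gauging one globally defined Killing vector of the target, fixing a gauge and integrating out the gauge field. The Buscher rules presuppose the existence of this Killing vector in the metric of the dual background. For a sequence of dualities in independent directions, the Killing fields needed for the later steps survive only locally on the dual background, so the subsequent dualizations have to be patched together. Following the references \cite{Hulla, Hullb, Plausch}, the only way to do this patching consistently is to admit transition functions that fail to be diffeomorphisms, producing precisely the non-geometric backgrounds described in the text. By the paper's own definition these are singular spaces, and as emphasized they are moreover non-metrizable.

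Second, for the non-linear sigma model the Polyakov action requires a target-space tensor field $G_{\mu\nu}$ in order to build the integrand $G_{\mu\nu}\partial_a X^\mu\partial^a X^\nu$ at every point reached by the embedding $X^\mu$. If the transition functions between patches are not diffeomorphisms, $G$ is not a tensor field on the would-be target, so the integrand of the path integral is ill-defined pointwise. More importantly, an affine Banach-space structure on the space of paths $\mathcal{P}^{X^\mu}_{ab}$ is needed in order to invoke Theorem \ref{integrator}, and constructing such a structure requires a metric on the target, which is absent by hypothesis. This is directly analogous to the argument of Theorem \ref{thmm1}, where the absence of a metric on the space of singular four-metrics prevented the construction of the gravitational path integral; the same reasoning transfers with $X^\mu$ in place of the four-metric, and by the pullback Equation \eqref{pullback} the world-sheet metric induced through $G_{\mu\nu}$ also fails to be well defined, so the second functional integration over $\tilde\gamma_{ab}$ inherits the obstruction.

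Third, for a WZW model there is in addition a Wess-Zumino term built from a closed three-form $H=dB$ on a group manifold, together with currents that generate an affine Lie algebra. After repeated T-duality one generically leaves the class of group manifolds, the $B$-field is no longer globally defined, the quantization condition on the cohomology class of $H$ fails, and the affine currents cease to be globally defined one-forms, so the Kac-Moody algebra that determines the CFT cannot be constructed. I would make this precise by exhibiting, from the examples in \cite{Blumenhagen, ng1, ng2, ng3, ng4, ng5, ng6, ng7}, a two- or three-step duality chain whose output falls within the scope of Theorem \ref{thmm1} and then noting that neither the sigma-model integrand nor the WZW current algebra survives on that output. The main obstacle is that the physics literature uses the term non-geometric background in several inequivalent ways (T-folds, $R$-flux backgrounds, doubled geometries), and each class requires a slightly different verification that the DeWitt-Morette hypotheses fail. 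I would handle this by tabulating the standard examples and checking in each case that either the target is non-metrizable or that no globally closed $H$ exists, in either case reducing the argument to the criterion of Theorem \ref{integrator} combined with the non-existence argument of Theorem \ref{thmm1}.
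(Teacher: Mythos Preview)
Your treatment of the sigma model case is essentially the paper's own argument: the target must be metrizable for the Banach-space structure on paths and for Buscher's rules to apply at all, and repeated dualities in different directions produce a non-metrizable target, so the hypotheses of Buscher's derivation are simply not met. The paper phrases this last point slightly more sharply than you do: Buscher's proof \emph{assumes} that both the original and the dual background carry a well-defined sigma model, so once the output fails this, the correct conclusion is not that something exotic has been produced but that the rules were applied outside their domain of validity.

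For the WZW case, however, the paper takes a much shorter and cleaner route than your proposed analysis of the Wess--Zumino term, $H$-flux quantization and the global existence of Kac--Moody currents. The paper simply invokes the axiomatic definition of a CFT (Osterwalder--Schrader plus the conformal axioms C1--C3 of \cite{schottenloher}) and observes that the computations in the very references you cite, \cite{Blumenhagen, ng1,ng2,ng3,ng4,ng5,ng6,ng7}, yield a \emph{non-associative} structure. This directly violates axiom C3 (associativity of the operator product expansion), so the output is, by definition, not a CFT; hence the transformation applied was not a T-duality in the first place, since a duality must relate CFTs with the same observables. This avoids entirely the case-by-case tabulation you anticipate as the main obstacle: no inspection of $H$-classes or current algebras on individual backgrounds is needed, because the proponents' own non-associativity result already contradicts a defining axiom. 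Your approach is not wrong in spirit, but it is substantially more laborious and leaves more to verify; the paper's argument turns the non-geometric literature's computations against themselves in one stroke.
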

\begin{proof}
	A CFT can be defined by means of path integrals  \cite{Schwarz}. If we use the rigorous definition of the path integral in \cite{functionalintegration}, the theory shows that the path measure  $ \mathcal{D}\varphi $  does only exist if the field functions  $ \varphi $  over which one integrates are elements of a metrizable Banach space or can be transformed by an affine trasformation into one. When the space of the functions  $ \varphi $  is metrizable, there must be some way to construct this metric. The only way to do so is by using the function values of the  $ \varphi $. So the functions  $ \varphi $  must map into a metrizable space themselves. Then one can create, for example, a supremum norm \begin{equation} |\varphi||_\infty=sup ||\varphi||_T,\end{equation} where  $ ||\cdot||_T $  is the norm of the target space to which  $ \varphi $  map. 
	
	The axiomatic approach to quantum field theory requires smooth tempered distributions on a Schwartz space and operators in a Hilbert space, all of which must be metrizable.
	
	However, showing a contradiction with a rigorous quantum field theory definition may be too restrictive, since there may be a broader definition of quantum field theory. 
	
	We can, however, even show that these models have internal inconsistencies. Axiomatically, a CFT is defined by the Osterwalder-Schrader axioms locality (OS1), covariance (OS2), and reflection positivity (OS3) for the correlation functions, to which the CFT axioms scaling covariance (C1), existence of an energy momentum tensor (C2) and an associative operator product expansion (C3) are added \cite{schottenloher}. 	
	
	Proponents of the non-geometric CFT's apply T-duality transformations repeatedly on the CFT given by a WZW model, which, according to their calculation would yields a non-associative theory\cite{Blumenhagen, ng1,ng2,ng3,ng4,ng5,ng6,ng7}. Thereby the CFT axiom C3 is violated. 
	By violating C3 after the application of their repeated transformation, their resulting theory is no longer a CFT. Hence, they have not applied an involution that could be called a duality, which is defined as an involution leading to a CFT with the same physical observables.
	
	The picture is then the same as with the sigma model: According to Buscher's proof, the target space-time where one applies the T duality, as well as the result of the transformation has to be a background space-time where one is able to define a nonlinear sigma model \cite{Buscher,Buscher2}. In a non-linear sigma model, the embedding functions  $ X^\mu $  with well defined derivatives  $ \partial_{a}X^\mu $  appear, see Eq.  (1) and (2) of \cite{Buscher2}. Since the embedding functions map into the target space, the existence of these derivatives requires that the target space is metrizable.
	
	Thereby, the  proof of Buscher makes it necessary that the transformations are applied on a target space which is metrizable. Furthermore, the result of this transformation should also be a metrizable space. Applying Buscher's transformation rules for the target space metric during repeated T duality transformations in different directions does not yield a target space that fulfills these requirements of Buscher's proof. The only correct conclusion would then be that one is not allowed to apply Buscher's rules in this way.
\end{proof}
\begin{remark}
	It is puzzling that the authors of \cite{Blumenhagen, ng1,ng2,ng3,ng4,ng5,ng6,ng7} do not come to this logical conclusion.
	This example can serve as a useful reminder to the practicing physicist that quantum field theories are based on aspects of functional analysis which do not only require metrizable spaces but, in general, also require differentiable structures and can not even be defined on rather "harmless" boundary singularities as long as one does not extend the notion of derivatives to compact sets as in the foregoing section \ref{s1}.
\end{remark}
\begin{remark}
	There are only slight differences in the requirements of the quantization algorithms. If considered without an expansion of the action in terms of a Taylor series or perturbation theory, the path integral can be defined for actions whose derivatives are square integrable. This is a weaker requirement than the axiomatic method of quantum field theory, which needs tempered distributions with smooth fast falling test functions in the differentiability class  $ C^\infty $. 
\end{remark}

\begin{remark}In \cite{Aspinwall,Aspinwall1}, Aspinwall, Greene, and Morrison used blow-up techniques to change the topology of string theory target spaces. They start from a space  $ X $  with singularities, which is defined to be an  $ n $  dimensional Calabi Yau hyper-surface of Fermat type in weighed projective space  $ W\mathbb{CP} $. From this space, they then construct a space  $ X/G $, where  $ G $  is the maximal subgroup of diagonal scaling symmetries on the homogeneous  $ W\mathbb{CP} $  coordinates that preserves  $ X $  and leaves the holomorphic  $ (n,0) $  form on  $ X $  invariant.
	
	Aspinwall et al. argue that  $ X $  and  $ X/G $  would be related by mirror symmetry. They show that they can use flip and flop operations to convert  $ X $  to  $ X/G $. In these operations, one applies a blow-down on a manifold, which produces a singular space. Then, one blows the singular space up to a topologically different manifold. 
	
	Finally, the authors of \cite{Aspinwall,Aspinwall1} write that one can describe the transition between  $ X $  and  $ X/G $  as a path in the K\"ahler moduli space. They claim  "we can follow paths in the complex structure moduli space which connect these complex structure limit points in a manner that encounters no physical singularity. The reason for this is that in the complex structure description, singularities arise only if the variety is not transverse." And they suggest "It would seem that a topology changing path [...] is a physically well behaved process."
	
	The authors are careful scientists and do not write that they can prove that such a process exists. For an Euclidean space-time, their statement that one can find a non-singular path between $ X $ and $ X/G $ can certainly be correct.
	
	Unfortunately, for a physical Lorentzian target space-time without closed time-like curves, this can not be true because of Geroch's theorem. One may work with Euclidean target space-times, where the topology can change without having singularities. However, if one attempts to rotate these Euclidean space-times back to the physical Lorentzian space-time by a Wick rotation, then, if there is a topology change in the resulting Lorentzian space-time, the latter will either have closed time-like curves or some sort of  singularities.  
	
	The result of Aspinwall et al. can be used to see how two different conformal field theories on Euclidean target space-times are mathematically related in the moduli space. But it can not be used to describe non-singular topology changes of Lorentzian metrics that happen without closed time-like curves.
\end{remark}
\begin{remark}
	A different approach was taken by Kiritsis and Kounnas \cite{Kir}. Similar as in the orbifold problem where string circles around the quotient singularity of a cone, these authors use configurations where the  string moves around the singularity of a topology change and does never reach it. The problem with this solution is that this is just one of the allowed physical configurations in the string theory path integral.
	
	Unfortunately, there are usually paths for the embedding functions  $ X^\mu $  where this differs. Some of them may not revolve around, but at some curve parameters,  may reach the singularity or just its neighborhood, and in some cases, this can render the entire amplitude inconsistent. One major problem to describe topology changes consistently is how to ensure that all configurations over which the path integral is summed are mathematically well defined and yield consistent and physically acceptable observables. 
\end{remark}
\section{How topology changes can be made to work for quantum fields in some cases}\label{s3}
The problems that quantum field theories have with topology changes does not only arise from singularities, but because the singularities can give rise to problematic boundary conditions of the system that are difficult to handle. Boundaries in the space-time may induce particle production from the Casimir effect\cite{DeWittcurved,Birrell}. This can, in combination with singularities, result in severe difficulties which can be studied very precisely in the trousers model of Anderson and DeWitt \cite{Topology}. 

The latter is concerned with a scalar field  $ \varphi(\sigma,t )\in\mathbb{C} $  that fulfills the 2 dimensional mass-less Klein Gordon equation  $ \square\varphi=0 $  with  $ t\in[0,\infty) $  and periodic boundary conditions in  $ \sigma $ :  $ \varphi(0,t)=\varphi(2\pi,t) $. The problem is thus similar to a closed string theory with a flat world-sheet, but that the target space is now two dimensional and given by  $ \mathbb{C} $. The two dimensional space-time, or one might say equivalently, the string world-sheet, then undergoes a topology change involving a conical singularity that increases the Euler characteristic of the one dimensional space-like hyper-surfaces of the manifold. Over the entire time interval, the topology of the space-time looks like a singular pair of pants with a conical singularity at the crotch.

Without loss of generalization, we may assume that the singularity appears at  $ \tilde\sigma,\tilde t $. If we unwrap the world-sheet as in \cite{Topology}, the topology change then appears like a wedge with two one dimensional edges  $ \tilde\sigma_1 $  and  $ \tilde\sigma_2 $  that has been cut out of a flat sheet.  The obvious problem is that  $ \varphi(x)\in\mathbb{C} $  has to fulfill different boundary conditions in the trunk \begin{equation}\varphi(0,t)=\varphi(2\pi,t)\;\forall t\in [0,\tilde t)\label{trunkeq}\end{equation}
than in the two legs, where \begin{equation}\varphi(0,t)=\varphi(\tilde{\sigma}_1,t),\;\; \varphi(\tilde{\sigma}_2,t)=\varphi(2\pi,t)\;\;\forall t>=\tilde{t}\label{legeq}\end{equation}

Anderson and DeWitt expand  $ \varphi(\sigma,t) $  into modes of in-going and outgoing states. In a subsequent article by Manogue, Copeland and Dray\cite{ManogueDray}, one can see better why an inconsistency emerges. They assume assume that field modes propagating in one leg, denoted by  $ \vartheta_L $  for the left leg and  $ \vartheta_R $  for the right leg, vanish in the other leg and vice versa. This leads to a description of the field in terms of step functions:
\begin{equation}\varphi(\sigma,t)=\Theta_1(\sigma)\varphi_L(\sigma,t)+\Theta_2(\sigma)\varphi_R(\sigma,t)
	\label{modedec}\end{equation}where 
\begin{equation}
	\Theta_1(\sigma)=\left\{\begin{aligned}1\forall\sigma<\tilde{\sigma}_1 \\0\forall  \sigma\geq\tilde{\sigma}_2
	\end{aligned}\right.,\;\; \Theta_2(\sigma)=\left\{\begin{aligned}1\forall \sigma>\tilde{\sigma}_2 \\
		0 \forall  \sigma\leq\tilde{\sigma}_1.
	\end{aligned}\right. \label{stepfunction}
\end{equation}

The energy momentum tensor in curved space-times contains derivatives of the fields. When computing the expectation values of in and out energies, Anderson, DeWitt, Copeland, Manogue and Dray find that they differ by an infinite amount even if a full renormalization method for quantum fields in curved space-times is used \cite{ManogueDray}.

However, as is known from electrodynamics, a field that hits the edges of a sharp object may just undergo diffraction processes. The author of this manuscript does think that this was not sufficiently taken into account by the authors of the trousers problem.

Usually,  first step to get a quantum field theory should be to get an orthonormal base of classical solutions (with appropriate boundary conditions) and then turn these solutions into operators.

In the case of the trousers problem, the different periodical conditions before and after the conical singularity make the finding of a solution to the equations of motion difficult. 

So we are trying to simplify the problem here, because, after all, such boundary conditions must not be there, if there is a topology change.

At the beginning of the 20.th century, Sommerfeld hat studied the diffraction of classical waves given by the Helmholtz equation when they fall upon a wedge and stated
\begin{theorem}(Sommerfeld, Malyuzhinets)
	Let  $ \theta_1 $  be the incident angle (that was restricted without loss of generality because of the symmetry)  to  $ \theta_1\in[0,\theta_w] $  of an incoming wave  $ \varphi(r,\theta)=\exp{(-ikr\cos{(\theta-\theta_1)})} $   that propagates in a flat space with coordinates  $ 0<r<\infty, -\theta_w<\theta<\theta_w $  onto a wedge parameterized by  $ 0<r<\infty, |\theta|>\theta_w $. One has the following classical solution for the exterior wedge problem:
	\begin{equation}
		\phi_0(r,\varphi)=\frac{1}{2\pi i}\int_{\gamma_+ +\gamma_-}\exp{(ikr\cos(z))}s_{1/2}(z)dz
	\end{equation}
	where  
	\begin{equation}s_1(z)=\frac{\frac{\pi}{2\theta_w}\cos{(\frac{\pi}{2\theta_w}\theta_1})}{\sin{(\frac{\pi}{2\theta_w}(\theta+z))}-\sin{(\frac{\pi}{2\theta_w}\theta_1)}}\end{equation}
	is for  Dirichlet boundary conditions  $ \phi_0(r,\theta_w)=0 $  and
	\begin{equation}
		s_2(z)=\frac{\frac{\pi}{2\theta_w}\cos{(\frac{\pi}{2\theta_w}(\theta+z))}}{\sin{(\frac{\pi}{2\theta_w}(\theta+z))}-\sin{(\frac{\pi}{2\theta_w}\theta_1)}} 
	\end{equation}
	is for Neumann boundary conditions given by  $ \frac{1}{r}\frac{\partial}{\partial \theta}\phi_0(r,\theta_w)=0 $. The  $ \gamma_+,\gamma_{-} $  are Sommerfeld contours. The  $ s_{1,2}(z) $  are meromorphic inside the domain \begin{eqnarray}\left\{-\pi-\theta_w-\epsilon_1<Re(z)<\theta_w+\epsilon_1,Im(z)>-\epsilon_2\right\}\nonumber\\\cup \left\{-\theta_w-\epsilon_1<Re(z)<\pi+\theta_w+\epsilon_1,Im(z)<\epsilon_2))\right\}\end{eqnarray}  for   $  \epsilon_{1,2}>0 $  and analytic in the same domain with  $ \epsilon_{1,2}=0^+ $. For a proof, see the review \cite{wedge} with the given references.
\end{theorem}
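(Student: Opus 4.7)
The plan is to verify directly that Sommerfeld's integral representation solves the boundary value problem. First, for every fixed $z$ the kernel $e^{ikr\cos z}$ is a plane wave and therefore satisfies the Helmholtz equation $(\Delta + k^2)\phi = 0$ in polar coordinates. Because the functions $s_{1/2}(z)$ are analytic on the Sommerfeld contours $\gamma_\pm$ by hypothesis, differentiation under the integral sign is legitimate, and $\phi_0(r,\theta)$ inherits the Helmholtz equation on the exterior wedge domain.

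Next, I would verify the boundary conditions by exploiting symmetry. Evaluating the denominator of $s_1$ at $\theta = \theta_w$ gives $\sin(\frac{\pi}{2\theta_w}(\theta_w+z)) = \cos(\frac{\pi z}{2\theta_w})$, which is even in $z$; combined with the mirror-image orientation of $\gamma_+$ and $\gamma_-$ about the imaginary axis, the $\gamma_+$ contribution cancels the $\gamma_-$ contribution and one obtains $\phi_0(r,\pm\theta_w) = 0$. For the Neumann case, $\partial_\theta$ acts only on $s_2$ (not on the kernel), and the resulting factor combined with the cosine appearing in the numerator of $s_2$ produces an integrand of the parity needed for $\frac{1}{r}\partial_\theta\phi_0\big|_{\theta=\pm\theta_w} = 0$. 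This is the Malyuzhinets-type symmetry argument.

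The incident plane wave is then recovered by residue calculus: inside the strip enclosed by $\gamma_+ + \gamma_-$ the functions $s_{1/2}$ have exactly one simple pole located at the zero of $\sin(\frac{\pi}{2\theta_w}(\theta+z)) - \sin(\frac{\pi}{2\theta_w}\theta_1)$, and, up to the conventional signs of the Sommerfeld transformation, its residue reproduces the prescribed $\exp(-ikr\cos(\theta-\theta_1))$. The remaining integral along the deformed Sommerfeld contour, evaluated asymptotically by steepest descent as $kr\to\infty$, represents the diffracted cylindrical wave and satisfies an outgoing radiation condition.

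The main obstacle will be the careful tracking of the pole position as $\theta$ moves between the illuminated, reflected and shadow regions: the pole crosses $\gamma_\pm$ precisely where geometrical optics predicts a discontinuity in the incident-plus-reflected field, so the deformation of the contour must be executed consistently in order to obtain a single continuous global solution across these transition lines. A separate subtlety is uniqueness, which requires supplementing the Helmholtz equation with the Sommerfeld radiation condition at $r\to\infty$ together with the Meixner edge condition at the wedge tip $r=0$; both must be read off directly from the integral representation and from the growth of $s_{1/2}$ in the prescribed strips of meromorphicity.
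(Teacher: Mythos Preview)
The paper does not actually supply a proof of this theorem: it simply states the result and refers the reader to the review \cite{wedge} and the references therein. Your proposal therefore goes well beyond what the paper does, and its overall architecture---verify the Helmholtz equation by superposition, check the boundary conditions by the parity/symmetry of $s_{1/2}$ relative to the Sommerfeld contours, recover the incident plane wave from the residue of the unique pole in the strip, and identify the remainder as an outgoing diffracted wave via steepest descent---is precisely the classical Sommerfeld--Malyuzhinets verification that the cited review carries out.

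One point deserves tightening. Your claim that ``for every fixed $z$ the kernel $e^{ikr\cos z}$ is a plane wave and therefore satisfies the Helmholtz equation'' is not correct as written: $e^{ikr\cos z}$ depends on $r$ only, and applying the polar Laplacian to it does \emph{not} give $-k^2 e^{ikr\cos z}$. The $\theta$-dependence of $\phi_0$ enters through the combination $\theta+z$ inside $s_{1/2}$; after the substitution $w=z+\theta$ the integrand becomes $e^{ikr\cos(w-\theta)}s_{1/2}(w)$, and it is $e^{ikr\cos(w-\theta)}$ that is a genuine plane wave for each fixed $w$. The contour then shifts by $\theta$, and one must use that the Sommerfeld contours run to infinity in valleys of $e^{ikr\cos z}$ (so the shift contributes no boundary terms) before differentiating under the integral sign. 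With that correction your sketch is sound and matches the standard literature proof that the paper is invoking.
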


\begin{remark}Note that the Sommerfeld problem is described with spherical coordinates, where the tip of the cone is at the center and not with the coordinate system that is adapted to the future light-cone of the incoming wave. 
\end{remark}
We can use this to proof the following

\begin{theorem}
	Assume we have a field  $ \varphi(x^\mu)\rightarrow\mathbb{C},x^\mu\in M $  with  $ M $  as a flat manifold and that  $ \varphi $  fulfills the Klein-Gordon equation with mass  $ m $. Let the notion of derivatives of  $ \varphi $   be extended such that they hold on compact sets, as in section \ref{s2}. If topology changes of space-like hyper-surfaces of  $ M $  are induced by identifying an algebraic variety in the shape of a 3 dimensional cone as a time-like hyper-surface in  $ M $  and removing the interior of that cone out from the manifold, and if Neumann boundary conditions can be used for  $ \varphi $  at the boundary of  $ M $  where the hyper-surface was removed, and if there are no configurations in the space-time that make imposing additional boundary conditions necessary (i.e. this means that there is no need for periodic boundary conditions on  $ \varphi $  at the singularity), so that one can use asymptotic fall-of conditions for  $ \varphi $  away from the boundary, then one can expand Sommerfeld's solution of the Helmholtz equation in terms of field operators and get a quantum field theory with an energy momentum tensor  $ T_{\mu\nu}^{Neumann} $  that can be regularized to finite observables  $ \langle in |T_{\mu\nu}^{Neumann}(r,\theta,t)|out\rangle $, despite of the topology change of space-like hyper-surfaces in  $ M $.
\end{theorem}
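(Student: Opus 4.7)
The plan is to reduce the problem to the exterior wedge problem of Sommerfeld--Malyuzhinets, quantize the resulting mode basis, and then regularize $\langle T_{\mu\nu}\rangle$ by the standard point-splitting/Hadamard prescription. Since $M$ is flat and the removed 3-cone has axial symmetry about its axis, I would first introduce coordinates adapted to the cone: a time $t$, the azimuthal angle $\psi$ around the cone's axis, and polar coordinates $(r,\theta)$ in the plane spanned by the axis and any radial direction. Fourier-decomposing $\varphi = \sum_{\ell}\int d\omega\, e^{-i\omega t + i\ell\psi}\phi_{\omega,\ell}(r,\theta)$ turns the Klein--Gordon equation into a two-dimensional Helmholtz-type equation for each $(\omega,\ell)$ on the exterior wedge $|\theta|>\theta_w$, where $\theta_w$ is fixed by the cone's half-angle. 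By hypothesis the only boundary condition is the Neumann condition $\partial_n \phi_{\omega,\ell}=0$ on the wedge faces, which is precisely the setup of the $s_2$-kernel in the theorem.

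Next I would assemble a complete orthonormal basis of classical solutions. For each $(\omega,\ell)$, Sommerfeld's theorem supplies a family $\phi_{\omega,\ell,\theta_1}$ parameterized by the incident angle $\theta_1\in[0,\theta_w]$; together with the asymptotic fall-off assumed in the statement and the absence of any periodic identification (which was precisely the source of the inconsistency in the Anderson--DeWitt trousers), these modes form a complete orthonormal system with respect to the DeWitt scalar product $\langle\varphi_1,\varphi_2\rangle = -i\int_\Sigma \varphi_1^{*}\overleftrightarrow{f^{\mu}}\varphi_2\, d\Sigma_\mu$ recalled in Section~\ref{s2}. I would then promote the expansion coefficients to operators with $[a_k,a_{k'}^\dagger]=\delta(k-k')$ and define $\hat\varphi = \int dk\,(a_k \varphi_k + a_k^\dagger \varphi_k^{*})$, obtaining an operator-valued distribution that solves the Klein--Gordon equation and respects the Neumann data up to the tip, in the extended sense of the derivative introduced in Section~\ref{s2}.

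For the observable $\langle\mathrm{in}|T^{\mathrm{Neumann}}_{\mu\nu}|\mathrm{out}\rangle$ I would apply point-splitting: write $T_{\mu\nu}(x) = \lim_{x'\to x} D_{\mu\nu}(x,x')\, G^{(1)}(x,x')$, where $D_{\mu\nu}$ is the bi-differential operator obtained from the classical stress tensor and $G^{(1)}$ is the symmetric two-point function built from the Sommerfeld modes, and then subtract the Hadamard parametrix as in DeWitt's curved-spacetime renormalization. The analyticity properties of $s_2(z)$ stated in the theorem---meromorphic on the strips displayed there and analytic on their boundary for $\epsilon_{1,2}\to 0^{+}$---control the short-distance behavior of $G^{(1)}(x,x')$ away from the tip, so the leading singularity is the usual flat-space Hadamard one and the subtraction yields a finite answer pointwise off the edge.

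The hard part will be verifying finiteness at and near the conical tip itself. In the Anderson--DeWitt trousers the step-function decomposition (\ref{modedec}) forced $\delta$-contributions in the derivatives and hence the non-removable divergences in $\langle T_{\mu\nu}\rangle$ reported in \cite{ManogueDray}. Here the Neumann condition is decisive: it lets $\phi_{\omega,\ell}$ remain continuous and bounded in $L^{2}_{\mathrm{loc}}$ up to the edge, so no such distributional derivatives are imposed by the boundary data. The remaining technical burden, which I expect to be the principal obstacle, is to show that the diffracted contributions produced by the poles of $s_2$ and by the tip geometry fit into the standard Hadamard expansion without requiring an extra counterterm supported on the singular locus; for this I would compare the local form of the Sommerfeld integral near the tip with the heat-kernel expansion around an ordinary conical defect, which is known to generate only a finite, geometry-dependent shift in $\langle T_{\mu\nu}\rangle$ rather than the irremovable divergence characteristic of the periodic-identification case.
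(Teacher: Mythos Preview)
Your proposal is broadly correct and would work, but it is considerably more elaborate than what the paper actually does, and there is a small geometric mismatch worth flagging.

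First, the geometry: you introduce four coordinates $(t,\psi,r,\theta)$ with an azimuthal angle $\psi$ around the cone's axis and an $\ell$-decomposition. The paper, however, works explicitly in a $2{+}1$ dimensional flat spacetime of signature $(-++)$; the ``3-dimensional cone'' is inscribed directly in that spacetime, so after separating time there is no extra azimuthal direction and the spatial problem is already the two-dimensional Sommerfeld wedge. Your construction is not wrong---it is a natural higher-dimensional generalization---but it is not the setup the paper has in mind, and the $\ell$-sum is superfluous in the paper's version.

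Second, the finiteness argument: the paper does not carry out a Hadamard point-splitting analysis, invoke the analyticity strips of $s_2(z)$ for short-distance control, or compare with heat-kernel results for conical defects. Its argument is much more direct and almost entirely qualitative: (i) the Neumann condition makes the normal derivative of $\varphi$ vanish on the wedge faces, so the boundary contributions to $T_{\mu\nu}$ vanish there; (ii) the assumed asymptotic fall-off handles $r\to\infty$; (iii) because there are no periodic identifications forcing a step-function decomposition as in the trousers, the field is continuous and smooth in the exterior, so the Manogue--Dray mechanism that produced non-removable $\delta$-type contributions in $\partial\varphi$ simply does not arise; (iv) whatever remains is handled by ``a suitable regularization,'' with a brief aside that for reflecting (Dirichlet) data one would invoke the standard curved-spacetime methods and the trace anomaly. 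The tip itself is not analyzed in detail; the paper merely remarks that the field is not compressed to a point there, in contrast to propagation \emph{inside} a cone.

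What your route buys is genuine rigor at precisely the point the paper waves its hands: the behavior of $G^{(1)}(x,x')$ near the tip and the question of whether diffracted contributions from the poles of $s_2$ fit into the Hadamard form. What the paper's route buys is brevity---the whole point of the theorem, as the surrounding discussion makes clear, is the contrast with the trousers: once the step-function ansatz is not forced on you by periodicity, the usual machinery goes through without obstruction, and the paper is content to state that rather than verify it mode by mode.
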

\begin{proof}
	If one starts, e.g with a scalar field that fulfills the Klein-Gordon equation  $ (\square+m)\varphi(x,t) $, one can make a product ansatz \begin{equation}\varphi(x,t)=\varphi_0(\mathbf{x},k)\varphi(t)\end{equation}
	and arrives at the Helmholtz equation 
	\begin{equation}
		(\nabla^2+k^2)\varphi_0(\mathbf{x},k)=0
	\end{equation} 
	for the spatial parts, and for the time dependent parts we obtain an equation that can be solved by  $ \varphi(t)=\exp{i\omega t} $, where  $ \omega^2=k^2+m^2 $. 
	
	Similarly, the classical equations of motion for string theory fulfill Eq.  (\ref{stringbelt}), which is the homogeneous case of the Helmholtz equation.
	
	Geroch's theorem does not say at which metric components the singularity arises. For the Lorentzian topology change of  $ n-1 $  dimensional hyper-surfaces by an  $ n $  dimensional interpolating space-time, 
	it would not contradict Geroch's theorem if the singularity arises just in the spatial components of the  $ n\times n $  dimensional space-time metric. Let us now assume that we have a 3 dimensional flat space-time with a Lorentzian metric of signature  $ (-++) $. We want to describe this space-time in an interval from some time  $ t_1 $  to  $ t_2 $  where  $ t_2>t_1 $. We assume that there is an incoming wave-front  $ \varphi $  at some points  $ (t,u,v) $  at  $ t_1 $, with u,v as spatial coordinates. If we look at the space-like hyper-surfaces with respect to the future light-cone of the field  $ \varphi $  at time  $ t_1 $  (i.e. if we look in orthogonal directions with respect to the wave front that moves forward in time) we see a flat 2 dimensional space-like sheet. Now we want to start a process that ends with a hole in  the 2 dimensional space-like hyper-surface at time  $ t_2 $.
	
	We do this by inscribing an algebraic variety with the shape of a 3 dimensional cone into the space-time. The tip of the cone is at some point  $ (\tilde{t},r=0), t_1<\tilde t<t_2 $, where  $ \theta_w $  is the angle of the cone. Its base is at  $ t_2 $. The coordinate system of the the incoming wave front is adjusted by an incident angle  $ \theta_1 $.  We then remove the interior of the cone from the space-time, which is an open set, leaving the space-time with a boundary singularity that an incoming wave sees in its light-cone at  $ t>=\tilde{t} $. The incoming wave notices also a topology change in space-like hyper-surfaces at  $ t<\tilde{t} $  and   $ \tilde{t}>\tilde t $  with respect to the coordinate system of its future light-cone.

	For a quantum field   $ \varphi(x^\mu)\rightarrow\mathbb{C} $, a conical singularity in the space-time is analogous to a singularity in the world-sheet for the embedding functions of string theory  $ X^\mu(\sigma,\tau) $. Also, if the target space-time of a string theory has a conical singularity which can be used as a boundary point, as in the example of a cone in section \ref{s2}, then, the singularity can be regarded simply as a point with coordinates beyond which one can not continue the field and where one has to set up Dirichlet or Neumann boundary conditions for the fields  $ \varphi $  and  $ X^\mu $.
	
	The incoming wave fronts see the removed cone in their coordinate system as a usual wedge in the space-time on which they diffract.
	
	One option in the Sommerfeld problem is to choose Neumann boundary conditions. However, the fact that we have removed the space-time behind the boundary at  $ \theta<\theta_w $  means that we need to impose a condition that no momentum flows over the boundary. There is also no evidence that if we remove an open hyper-surface from a space-time, the fields would then scatter into the reverse spatial direction at the boundary. Hence the adoption of Neumann boundary conditions is useful, as long as one has not measured any sort of back scattering at a singularity. 
	
	In contrast to the assumptions of Anderson, DeWitt, Dray and Manogue from \cite{Topology,ManogueDray} one here expects that the field falls of exponentially at large distances from the wedge. So the solution can not be used for the trousers problem, but that is just because it does not have these boundary conditions.

	One now has to expand the field  $ \varphi $  as a sum of operators as in Eq.  (\ref{eq:fff1}), which is always possible since the basis functions form a complete set with respect to the scalar product  $ \langle,\rangle $  that can be defined from the field equations, and in this case is known, see \cite{Birrell} and section \ref{s2}.
	
	The energy momentum tensor  $ T_{\mu\nu} $  contains the derivatives of the field. At  $ r=\infty $   for  $ \theta>\theta_w $  they vanish because of the  $ r $  dependence of the exponential function. Similarly, with Neumann boundary conditions, the derivatives of  $ \varphi $  and thus the components of  $ T_{\mu\nu} $  vanish at the  $ t\geq \tilde{t} $,  $ \theta=\theta_w $  and one gets for the corresponding observable \begin{equation}\langle in |T_{\mu\nu}^{Neumann}(r,\theta,t)|out\rangle=finite\;\; \forall r>0,\theta>\theta_w,t\in\mathbb{R}.
	\end{equation}
	
	The calculation is a bit be more difficult if we assume that the boundary can reflect particles backwards into space. Then one would have to make use of the usual regularization methods for the Energy stress tensor in curved space-times. 
	From these methods, it can be found that the energy stress tensor has a trace anomaly where it depends on the Euler characteristic. However, one should note that we have just changed the Euler number of a space-like hyper-surfaces and not for the interpolating Lorentzian  $ 3\times3 $  metric that we have used in this example.
	
	As long as there are no discontinuities in   $ \varphi(r,\phi,t)) $  and the quantum field is not compressed to a point at the singularity, one does not get divergences in the observables of the  energy momentum tensor  $ T_{\mu\nu} $  after a suitable regularization.
\end{proof}
\begin{remark}
	Note that our toy model differs from the situation of a quantum field propagating inside of a cone, where the field could be compressed to a point at the tip, which may create divergences in the expectation values for energy due to Heisenberg's uncertainty principle.
\end{remark}
\begin{remark}
	The finite energy result is different from the result  of Anderson, DeWitt, Manogue and Dray, who computed an infinite observable of the energy momentum tensor for their singular trouser problem. Their divergences remained even after a proper regularization scheme was taken into into account. In their model, the divergences in the energy momentum tensor remained because  the field  $ \varphi $  was expressed by step functions due to the difficult periodic boundary conditions that they had. We have not found a solution for the singular trousers problem within the framework of ordinary quantum mechanics by now.
\end{remark}
\begin{remark}
	Also note that the imposition of Neumann boundary conditions at the conical singularity, which simplified the calculation, was an arbitrary assumption that probably would have to be measured by experiments. At least the author does not know of a physical consistency principle that would determine the boundary conditions at a conical singularity. After a renormalization scheme, even Dirichlet boundary should work too.
\end{remark}
\begin{remark}
	Finally, one should note that in this section, we only have described what happens to a field on the  $ n $  dimensional space-time if  $ n-1 $  dimensional  hyper-surfaces have a topology change involving a conical singularity. We have not described topology changes that involve the  $ n $   dimensional space-time.  
\end{remark} 

\section{The ground state of quantum gravity}\label{s4}
In this section, we will review some arguments which indicate why topology changes should happen in the universe. We will argue that our arguments apply to large classes of theories of quantum gravity.

Gravity is non renormalizable and has finite amplitudes only up to the one-loop order. But this suffices for some applications, e.g. simple scattering amplitudes  \cite{DeWitt3}, or the WKB solution for the Friedmann cosmos \cite{DeWitt}.

Arguments in favor of topological transitions are given by  calculations in Euclidean quantum gravity which originated from Hawking \cite{foam}.

By Einstein's equation of motion for a space-time with cosmological constant,  $ R=4\Lambda $  and if one inserts this into the Euclidean Einstein action  $ I=-\frac{1}{16\pi}\int d^4x(\sqrt{g} R-2\Lambda ) $, one gets  $ I=-\frac{\Lambda V}{8\pi} $  and from dimensional arguments, one finds  $ V(\Lambda)=\frac{f^2}{\Lambda^2} $, with $f$ as a scalar factor,   that one may substitute into the action. 
Hawking recognized that the cosmological constant acts in the action like a re-scaling of the classical gravitational action.
The behavior of the gravitational amplitude under a re-scaling was computed with Zeta function renormalization by Gibbons, Hawking and Perry from the trace anomaly \cite{Hawkingzetarenorm}. Merely, the amplitude gets multiplied by a certain factor.  If one uses this scaling factor and computes  the rest of the amplitude with a Taylor series around a classical background, one gets, provided  one only retains the first (classical) term of the Taylor series in the effective action, the following amplitude (note that we included corrections from Christensen and Duff, \cite{chris}):

\begin{equation}
	Z(\Lambda,\chi)=\left(\frac{\Lambda}{2\pi\mu^2}\right)^{-\frac{106}{90}\chi+\frac{87}{240\pi^2}f^2}\exp{\left(\frac{f^2}{8\pi\Lambda }\right)}\label{amplitude},
\end{equation}
In Eq.  (\ref{amplitude}),  $ \mu^2 $  is a renormalization scale and  $ \chi $  is the Euler characteristic of the space-time.

As we have argued, a path integral over metrics always describes a summation over cobordisms with a boundary, where the boundaries in this case should have been fixed 3 dimensional hyper-surfaces of the space-time at some beginning time  $ t_1 $  and some ending time  $ t_2 $. For these space-like hyper-surfaces (which constitute nothing else than the spatial space between  $ t_1 $  and  $ t_2 $ ), one would have to find GHY boundary terms from the second fundamental form. Since we keep these early and late hyper-surfaces fixed), this would then give us a constant that one could take as an energy in the Hamiltonian that would make the observables time dependent. The energy resulting from these boundaries would be the energy of the entire space-time.

In addition to the boundary terms at the end and the beginning of the observation, there may be other boundaries in the space-time that emerge dynamically from quantum processes. All these boundary terms then would describe the energy of the gravitational field.  In Euclidean quantum gravity it is common to set up a boundary singularity at the coordinate singularity of the event horizon if one describes a black-hole. This shows that also in the theory of relativity some singularities usually act as boundaries. It has been argued in \cite{thesispaper1} that several boundaries distributed in the space-time can yield a sum of boundary terms that acts like an energy density like the cosmological constant. 

In  \cite{foam},  Hawking argued that one has to integrate the resulting amplitude over all backgrounds and then used an inverse Laplace transform on Eq.  (\ref{amplitude}).

In contrast to his writings on zeta function regularization \cite{Hawkingzetarenorm},  Hawking called the factor  $ \mu^2 $  a cut-off in his article on space-time foam \cite{foam}, and set it to the Planck scale. However, if one reads Hawking's article on zeta function regularization, which was used to derive Eq. (\ref{amplitude}), it becomes clear that this method, which is based on expressing the divergent functional integral formally as a zeta function and using an analytic continuation to make the amplitude finite \cite{Elizalde}, employs no "cut-off" at all. Instead, the factor  $ \mu^2 $  is called an undetermined renormalization scale in the measure functional  $ \mu(\varphi) $  of DeWitt. This scale parameter should occur generally in amplitudes for any kind of fields on curved space-times and has to be found by measurements (In string theory, the undetermined string tension $ T $ plays the role of this undetermined constant). In dimensional regularization of amplitudes in curved space-times, the scale parameter also appears, and remains after the cut-off $\epsilon$ was sent to infinity. Furthermore, in his later conference papers on his space-time foam calculation, Hawking  reverts his position calls  $ \mu^2 $ again an undetermined regularization scale without giving it a value, see the collection in \cite{Hawkingbook}.

A look at the cosmology literature with respect to experimental physics showed that the same factor  $ \mu^2 $  is  present in  ordinary matter amplitudes of fields in curved-space-time. There, $ \mu^2 $  is sometimes set to  $ \mu\approx \sqrt{E_\gamma E_{grav}} $, see  \cite{Cutoff}. The reason for this is that  $ \mu^2 $  determines the order parameter for the energy of the experiment whose result is given by the amplitude. Eq.  (\ref{amplitude})  is a vacuum amplitude. The energy of the vacuum is usually measured experimentally by photons of energy  $ E_\gamma $  from supernovae with a wavelength of around  $ 500nm $. They couple to gravitons of roughly the energy  $ H_0 $. In a purely gravitational theory, as given by Eq.  (\ref{amplitude}), the photons would be replaced by other gravitational waves that arrive in an interferometer, and thus one would have an energy of order  \begin{equation} \mu\approx H_0\approx 10^{-61} \end{equation} 
in Planck units. That perturbative quantum gravity is in fact a theory of low energies was even remarked by Feynman, who wrote about the low binding energy of a gravitationally bound atom and the even lower energy corrections of the Lamb-shift effect that he wanted to compute from perturbative quantum gravity in his first lecture on the topic \cite{Feyn}. 

In the recent theoretical literature, when one works with gravity in a curved background, the constant  $ \mu $  is set to  $ \mu\approx \sqrt{|T_\mu^\mu|} $, where  $ T_\mu^\mu $  is the trace of the energy momentum tensor, see \cite{Shapiro}. From the Friedmann universe as a classical space-time one would get  $ \mu^2\approx H_0^2 $ if one neglects contributions from pressure and curvature. This may get some corrections from the energy of matter fields. We can thus reasonably set  
\begin{equation} \mu^2=|T_\mu^\mu|=\omega\Lambda,\label{lambdaeq}\end{equation} 
where  $ \Lambda $  is the contribution of gravity to the vacuum energy and  $ \omega\in \mathbb{R} $  is some initially undetermined numerical factor that accounts for the additional energy densities of matter.

In his further calculation from \cite{foam}, Hawking used an approximation of the classical gravitational action. In \cite{thesispaper1}, the author did Hawking's saddle point computation again, but for the choice  $ \mu^2=H_0^2 $ and without Hawking's approximation for the action, but with the corrected  amplitude from \cite{chris}.

It came later to the attention of this author that Hawking's original calculation was criticized by Christensen and Duff in \cite{chris}. They argued that Hawking's Laplace transformation of the amplitude would not converge for negative Euler characteristics  $ \chi $. However, that is just the case if one keeps  $ \mu^2 $  fixed. The inverse Laplace transform involves an integration over  $ \Lambda $. With Eq. (\ref{lambdaeq}), it would be wrong to integrate  $ \Lambda $  over all possible values and thus over all energy densities of the background, and at the same time keep  $ \mu^2 $  fixed, which should be the characteristic energy of a system with a loop expansion around a given background that is stopped at higher orders. Setting Eq. (\ref{lambdaeq}) with an undetermined factor $\omega$ into the inverse Laplace transform makes it indeed converge.

Now we make the assumption that the inverse Laplace transform converges. If one carefully calculates the saddle points for  $ \Lambda $  and  $ \chi $  with the recent modifications of Hawking's amplitude, one arrives at the result that the Euler characteristic is given by
\begin{equation}\chi=cV,\label{euler1}\end{equation} where  $ c $  is some numerical factor and  $ V $  is the 4 volume of the space-time. We emphasize that  $ c $  turns out to be negative for positive 4 volume if the saddle point is carefully calculated, see\cite{thesispaper1}. For an expanding universe, this means that one has to expect a negative  $ \chi $  with large absolute value, and that one finds that  $ |\chi| $  grows with the expansion of the universe. 

With the assumption that Hawking's inverse Laplace transform converges, another a saddle point calculation yields   \begin{equation} \Lambda=2\pi\mu^2, \; \text{ or }\; \omega=\frac{1}{2\pi}, \end{equation}
if no matter terms are added, see \cite{thesispaper1} for details of the calculation. 

One can now turn this argument in reverse order and say that the amplitude predicts that  \begin{equation} 2\pi\mu^2= \Lambda \end{equation} 
since  otherwise the inverse Laplace transform over the  amplitude does not converge for the negative  $ \chi $  with large  $ |\chi| $  that can be derived from the saddle points of the amplitude under the assumption of its convergence. 

With $ \mu=\sqrt{|T_\mu^\mu|} $  one furthermore expects a cosmological constant that is close to the entire  $ |T_\mu^\mu| $  because of the saddle point at  $ \Lambda=2\pi\mu^2 $.

In his phd thesis \cite{thesispaper1}, the author added matter terms to the amplitude and showed that their famously large zero point energies at first order do not change the saddle point for  $ \Lambda $  at all. Instead only  numerically small second order effects that depend on the Euler characteristic in the effective actions of the fields contributed to a small change of the saddle points for  $ \Lambda $  as quantum mechanical effects. These changes can be finely tuned by adding suitable fields. 

Phenomenologically, the model has other interesting properties. For example, one could use an expansion of the effective action that takes higher order terms into account and not only the classical term in the exponential function. Then one would get an amplitude of the form \begin{equation}\left(\frac{\Lambda}{2\pi\mu^2}\right)^{-\frac{106}{90}\chi+\frac{87}{240\pi^2}f^2} \exp{\left(\frac{\Lambda V}{8\pi}+c_1\int  d^4x\sqrt{g} R^2+c_2\chi+\ldots\right)}\end{equation}
after the inverse Laplace transform of the amplitude. The  $ R^2 $  term can account, via Starobinski's result \cite{Star}, for cosmological inflation. The leading terms in the exponential evaluated amplitude corresponds to the classical gravitational action. If one does not take the scaling factor  $ \propto (\frac{\Lambda}{2\pi\mu^2})^{-\frac{106}{90}\chi+\frac{87}{240\pi^2}f^2} $  of the amplitude into account, Hawking argued in \cite{Hawkingconstant} that one gets zero for the expectation value of the cosmological constant. Expressing Einstein's action as \begin{equation} I=-\frac{\Lambda V}{8\pi}=\frac{-f^2}{8\pi \Lambda} \end{equation} 
with  $ f $  as a dimensionless constant and just using an amplitude $Z=\exp{-I}$, one finds
\begin{equation}\Lambda=0\end{equation} as an asymptotic saddle point of the amplitude at one loop order. As a result, the scaling factor in Eq.  (\ref{amplitude}) which yields $\Lambda=2\pi\mu^2$, now appears as a small one-loop correction to this result. 

In\cite{Hawkingbook}, Hawking made the argument that the classical terms of the Euclidean effective action for the gravitational field  $ \exp{\left(\frac{1}{8\pi }\int d^4x\sqrt{g}  R\right)} $ 
would have the   $ S^4 $  space as a saddle point. If a Wick rotation to a Lorentzian space-time is (properly) employed, this would yield a DeSitter space as the background. Hawking gave no details of his calculation but just mentioned this as a result in a conference paper, claiming furthermore in a short sentence that the system would strive to the most symmetrical space without giving further details.  The action is a complicated functional, involving an integral of the curvature scalar which itself depends on the  $ 4\times4 $  matrix of the metric. To find a global saddle point of this functional is certainly a difficult task. We have noted that the gravitational path integral is defined as a functional integration over  $ 4\times4 $   metrics that form a cobordism between fixed  $ 4-1 $  dimensional hyper-surfaces. The latter are generally arbitrary and must be determined by measurements. The path integral can only give answers about the space-time between these two observations that contribute a boundary term to the action and fix the metrics at  $ t_1 $  and  $ t_2 $. To find a metric which is a cobordism between two measured and fixed hyper-surfaces may complicate the finding of a global saddle-point of the action.

For an expanding universe, Eq.  (\ref{euler1}) implies that one should expect topology changes of the space-time as the universe expands. One may argue that this calculation is based on Euclidean path integrals. However, it is a computation of the 'adiabatic' ground state when the 4-volume is held fixed for a short time. For such quasi-static situations, Euclidean methods should be unproblematic. Furthermore, one can make a similar computation with Lorentzian amplitudes.

We noted that the path integral of Lorentzian quantum gravity can be regarded as a solution  $ |\psi(\gamma,t)\rangle $  of a functional Schr\"odinger equation. Instead of the inverse Laplace transform and the saddle point calculation, one may then use  $ ||\psi(\gamma,t)\rangle|^2 $  and compute its the maximum for the parameters in the amplitude. The Lorentzian amplitude is written completely analogous to the Lorentzian amplitude. A calculation shows that results for the saddle points from   $ |\psi|^2 $  are essentially the same.

If one computes the corrections of matter terms for a (Lorentzian) quantum field theory in curved space-time with a metric  $ g_{\mu\nu} $, the effective action would usually be something like \begin{equation}Z=\exp{\left(i \ln{\left(\frac{c_0}{\mu^2}\right)} \left(c_1  \int\sqrt{-g} d^4x R+c_2   \int d^4x\sqrt{-g} R^2+c_3 \chi+\ldots\right)\right) }, \label{higherderiv} \end{equation}
where $c_i$ are coefficients that may depend on the mass of the particles and on constant factors.
The higher derivative corrections can not immediately be interpreted as corrections to Einstein's action. Instead, the effective action is just a way to rewrite the amplitude from which correlation functions can be computed, and one has to use insert the background metric  $ g_{\mu\nu} $  into these curvature terms. If one wants to compute Eq. \ref{higherderiv} e.g. for a Schwarzschild metric, then the higher derivative terms are just numerical values for a probability amplitude, without any dynamics. 

One can, however, make dynamical assumptions about  $ g_{\mu\nu} $. For example, one can assume that it is determined by Einstein's equation. In that case, the terms from the  effective matter action would get into the energy momentum tensor and influence  $ g_{\mu\nu} $  via Einstein's equation. This would yield corrections to the latter with higher derivatives. Or one could add the path integral of gravity to the amplitude. In that case, one would have to compute complicated matter and gravity interactions with corresponding vertexes. 

One problem of models that use higher derivatives in effective actions is that one can derive classical actions and equations of motion by the WKB approximation. The equations of motion would then contain higher derivatives. The Ostrogradski instability \cite{Ostro1,Ostro2} is a classical theorem which shows that non-degenerate classical actions with more than 2 derivatives have classical Hamiltonians which are not bounded from below. For gravity, some higher derivative actions may be degenerate, for example, Eq. (\ref{higherderiv}) if all other terms with higher derivatives vanish. The theory  then has positive definite energy. One can, and probably has, to use Ostrogradski's theorem to further restrict the underlying theories  which yield actions with higher derivatives. Note that Ostrogradski's result is not a quantum mechanical theorem. So having a quantum field theory that does not have unfriendly ghosts may not be a sufficient criteria. Important is also that the corrections to the classical equations of motion that one can derive from the effective action are consistent. 

In string theory, one uses a fixed so-called "background" metric  $ g_{\mu\nu} $  on which one computes the path integral of the embedding functions  $ X^\mu $  and the world-sheet  $ \tilde\gamma_{uw} $. In order to compute the effective action of that theory, one typically writes  the trace of the energy momentum tensor in terms of so-called beta functions

\begin{equation}
	T_{a}^a=\partial_u X^\mu \partial_w X^\nu(\beta_{\mu\nu}^g(X) \tilde{\gamma}^{uw}+\beta^B_{\mu\nu}(X)\epsilon^{uw})+\beta^\vartheta(X)R\end{equation}
where \begin{equation}\epsilon^{uw}=\left(\begin{array}{cc}
		0	& 1/det(\tilde\gamma_{uw}) \\
		-1/det(\tilde\gamma_{uw})	& 0
	\end{array}\right),\end{equation} and  $ B_{\mu\nu} $  and  $ \Theta $  are certain additional fields that have been added to the Polyakov action, see\cite{deligne}.

String theorists then usually make an expansion 
\begin{equation}
	X^\mu(\zeta)=X_0^\mu(\zeta)+\sqrt{\alpha'}Y^\mu(\zeta),
\end{equation} 
where  $ \alpha' $  is some function that makes the expansion dimensionless. They then use this to expand the background metric as
\begin{eqnarray}g_{\mu\nu}\partial X^\mu X^\nu=\alpha'( g_{\mu\nu}(X_0)\sqrt{\alpha'}\nonumber\\+g_{\mu\nu,\rho}(X_0)Y^\rho(\zeta)+\frac{\alpha'}{2}g_{\mu\nu,\rho\tau}Y^\rho(\zeta)Y^\tau(\zeta)+\ldots)\partial  Y^\mu(\zeta)\partial  Y^\nu(\zeta).
\end{eqnarray}

Conformal invariance dictates the vanishing of the beta functions in the energy momentum tensor. At one loop order, one gets for the beta functions related to the target space metric
\begin{equation}
	\beta_{\mu\nu}=0=\alpha'R_{\mu\nu}
\end{equation}
and for the 2 loop order, one has
\begin{equation}	\beta_{\mu\nu}^2=\alpha'R_{\mu\nu}+\frac{1}{2}\alpha'^2R_{\mu\lambda\rho\sigma}R_{\nu}^{\lambda\rho\sigma}=0.
\end{equation}

From these equations of motion one can then deduce the effective action of string theory with higher derivative terms.

The entire procedure looks extremely similar to the way  in which one gets the corrections of the effective action for matter fields in quantum field theory in curved space-times. There, one also starts by separating the fields into a background and a fluctuating contribution. The difference seems just to be that in string theory, one expands the fields in Taylor series around the background while in usual field theory, one expands the action around the background of the fields. 

However, both procedures do not determine the background. The conclusion of this would be that after computing the effective action, one has to make a further path integration. The integrand would then be the effective action of string theory, and the functional integral would go over all classical "background metrics"  $ g_{\mu\nu} $, similarly as in Hawking's calculation for the perturbatively evaluated amplitude of quantum gravity in \cite{foam}. 

Since Einstein's action and higher curvature terms appear in the effective action of both gravity coupled to the effective action from matter fields and string theory, some of results of this section then apply also for string theory, if one carries out an integration over all backgrounds, although the higher curvature terms would be different. One gets different coefficients in the scaling factor and of course the effective actions of the matter fields would be different in the sense that, for example the coefficients by which they depend on the Euler characteristic would differ.

\section{Bell's theorem, a rigorous analysis and its mathematical consequences}\label{s5}
In this section, we will review the mathematically rigorous analysis of Bell's theorem that was given by Nelson \cite{Ne3,Ne4,Ne5 } and Faris\cite{Far}. This allows us to deduce some of the properties which a physical theory should have that is not contradiction with experiments.

It will turn out that Bell's theorem consists in fact of two separate lemmas. The first was published in \cite{Bell} and has a severe mathematical loophole. This loophole was corrected by Bell in a second derivation \cite{Bell3} of his inequality and the argument was given in a mathematically rigorous form by Nelson in \cite{Ne3,Ne4,Ne5} and have been simplified by Faris\cite{Far}.

Unfortunately, Nelson's and Faris' works are not well known. This may partly be because Nelson published his article as a conference paper, arguing that the result would be trivial, and Faris published his calculations and proof in the appendix of a book intended for a general readership.
\begin{remark}
	The first article \cite{Bell} of Bell starts by defining two
	random variables  $ A(\boldsymbol{a},\lambda)=\pm1 $  and  $ B(\boldsymbol{b},\lambda)=\pm1 $,
	where  $ \boldsymbol{a} $  is the setting or axis at detector A,  $ \boldsymbol{b} $  is the setting 
	at detector B and  $ \lambda $  is some parameter over which one integrates. 
	
	In order to describe a theory with exact anti-correlations at the detectors, the random variables are defined to fulfil: 
	\begin{equation}
		A(\boldsymbol{b},\lambda)=-B(\boldsymbol{b},\lambda).\label{eq:belleqdsaf}
	\end{equation}
	Assuming  $ \rho(\lambda) $  to be the probability distribution of  $ \lambda $, Bell then writes the expectation
	\begin{equation}
		\mathrm{E}=\int d\lambda\rho(\lambda)A(\boldsymbol{a},\lambda)B(\boldsymbol{b},\lambda)\label{eq:sfwafsdf}
	\end{equation}and using Eq.  (\ref{eq:belleqdsaf}), Bell gets 
	\begin{equation}
		\mathrm{E}=-\int d\lambda\rho(\lambda)A(\boldsymbol{a},\lambda)A(\boldsymbol{b},\lambda).\label{eq:qdqfddxa}
	\end{equation}From this starting point, Bell then derives his inequality by a step by step calculation.
	
	However, already at this point, the model is in severe disagreement with quantum physics. In an EPRB experiment, the outcomes at A and B can be the results of spin measurements, or of position or momentum measurements. In each case, the observables for different settings at the same detector do not commute. Spin observables fulfill an angular momentum commutator 
	\begin{equation}
		[\hat{s}_{x},\hat{s}_{y}]=i\hbar \hat{s}_z
	\end{equation}
	which leads to an uncertainty relation for different axes.
	
	This means that upon measuring axis  $ \boldsymbol{a} $  at Station A, the measurement result for axis  $ \boldsymbol{b}\neq\boldsymbol{a} $  at the same detector A may be disturbed. As a result, one can not assume that Eq.  (\ref{eq:belleqdsaf}) would hold for the unobserved events for axis  $ \boldsymbol{b} $. One therefore can not insert Eq.  (\ref{eq:belleqdsaf}) into  Eq. (\ref{eq:sfwafsdf}) if one does not want to possibly violate Heisenberg's inequality by a purely local effect.
\end{remark}
\begin{remark}
	That there is in fact no locality assumption behind this first derivation of Bell's inequality can be seen when reformulating the same theorem in the form given by Faris in \cite{Far}. 
\end{remark}
Fortunately a few years later, Bell gave a different proof of his result in \cite{Bell3}. This time, the proof depended on a clear notion of locality and caused the attention of Nelson  \cite{Ne3,Ne4,Ne5} who brought the result into a rigorous form.

\begin{definition}
	We start by defining a probability space  $ (\Omega,\mathcal{F},P) $, with outcomes  $ \Omega $, sigma algebra  $ \mathcal{F}=\mathcal{P}(\Omega) $, where  $ \mathcal{P} $  denotes the power set, and  $ P $  as the probability measure. 
	The observables can be results of spin measurements so our measurable space becomes  $ (E,\mathcal{P}(E)) $  with  $ E=\left\{\uparrow,\downarrow\right\} $ 
	
	The outcomes are depending on experimentally chosen settings, measured at different points  $ x $  in space-time. Therefore, we have to use random fields  $ \varphi_\mu(x,\omega):(M,\Omega)\rightarrow E $  that depend on parameters  $ \mu $  which can be chosen by the experimenters at will.

	In order to make contact with Nelson's notation, we define the following notation for the outcomes at the detectors:  $ \left\{\sigma_{A}=\uparrow\right\}\equiv\left\{\uparrow\times E\right\} $,  $ \left\{\sigma_{A}=\downarrow\right\}\equiv\left\{\downarrow\times E\right\} $, 
	$ \left\{\sigma_{B}=\downarrow\right\}\equiv\left\{E\times \downarrow\right\} $,
	$ \left\{\sigma_{B}=\uparrow\right\}\equiv\left\{E\times \uparrow\right\} $. 
	
	The events  \begin{align}\left\{ \phi_{\boldsymbol{\mu}}(A,\omega)\otimes\phi_{\boldsymbol{\nu}}(B,\omega)\in\left\{\sigma_{A}=\uparrow\right\}\right\},\\\left\{ \phi_{\boldsymbol{\mu}}(A,\omega)\otimes\phi_{\boldsymbol{\nu}}(B,\omega)\in\left\{\sigma_{A}=\downarrow\right\}\right\}\end{align}give information about a spin up outcome detector at a point A for a setting  $ \mu $. Let  $ \mu $  be arbitrary and put all events that give information about an outcome at A into a sigma algebra  $ \mathcal{F}_A\subset\mathcal{F} $.
	
	Similarly, we put all events \begin{align}\left\{ \phi_{\boldsymbol{\mu}}(A,\omega)\otimes\phi_{\boldsymbol{\nu}}(B,\omega)\in\left\{\sigma_{B}=\uparrow\right\}\right\},\\\left\{ \phi_{\boldsymbol{\mu}}(A,\omega)\otimes\phi_{\boldsymbol{\nu}}(B,\omega)\in\left\{\sigma_{B}=\downarrow\right\}\right\}\end{align} for arbitrary axes  $ \nu $  into a sigma algebra  $ \mathcal{F}_B\subset\mathcal{F} $.
	
	Finally, we define a family of axis dependent probability measures  $ \mathrm{P}_{\phi_{A\boldsymbol{\mu}}\otimes\phi_{B\boldsymbol{\nu}}} $  as follows:
	
	\begin{equation}
		\mathrm{P}\left(\left\{ \phi_{\boldsymbol{\mu}}(A,\omega)\otimes\phi_{\boldsymbol{\nu}}(B,\omega)\in\left(\sigma_{A}\bigcap\sigma_{B}\right)\right\} \right)\equiv\mathrm{P}_{A_{\boldsymbol{\mu}}B_{\boldsymbol{\nu}}}\left(\sigma_{A}\bigcap\sigma_{B}\right).\label{eq:def123456}
	\end{equation} 
	
	The EPR experiment consists two stages. A measurement stage, where the outcomes are measured at two detectors located at space-like separated locations A and B and a preparation stage. The events at preparation stage take place in the overlap of the past light cones of A and B before any measurement is carried out.

	We put the events happening at preparation stage into a sigma algebra  $ \mathcal{F}_S $  and define the conditional probabilities \begin{equation}
		\mathrm{P}\left(\left.A\right|\mathcal{F}_{S}\right)(\omega)\equiv\mathrm{EX}\left[\left.1_{A}\right|\mathcal{F}_{S}\right](\omega)\label{eq:expectdsadf}
	\end{equation}
	with  $ 1_{A} $  as the indicator function of  $ A $.
\end{definition}
With this notation, one can make the following
\begin{definition}(Passive locality, Nelson)
	We call a theory passively local if 
	\begin{equation}
		\mathrm{P}_{A_{\boldsymbol{\mu}}B_{\boldsymbol{\nu}}}\left(\left.\sigma_{A}\bigcap\sigma_{B}\right|\mathcal{F}_{S}\right)=\mathrm{P}_{A_{\boldsymbol{\mu}}B_{\boldsymbol{\nu}}}\left(\left.\sigma_{A}\right|\mathcal{F}_{S}\right)\mathrm{P}_{A_{\boldsymbol{\mu}}B_{\boldsymbol{\nu}}}\left(\left.\sigma_{B}\right|\mathcal{F}_{S}\right),\label{eq:passiveloc}
	\end{equation}
	for every pair of axes  $ \boldsymbol{\mu} $  and  $ \boldsymbol{\nu} $. 
\end{definition}The violation of passive locality would imply that a dependence of the outcomes at A and B is not determined by events in  $ \mathcal{F}_S $.
Then one has an additional condition, one that forbids instantaneous signaling.

\begin{definition}(Active locality, Nelson) We call the random fields  $ \phi_{\boldsymbol{\mu}} $  and  $ \phi_{\boldsymbol{\mu'}} $   actively local if, whenever  $ \boldsymbol{\mu} $  and  $ \boldsymbol{\mu'} $  agree except on a region B in space-time, then  $ \phi_{\boldsymbol{\mu'}} $  and  $ \phi_{\boldsymbol{\mu'}} $  agree, except on the future cone of B. \end{definition}

This implies than an experimenter at B can not send a signal outside of the future cone of B. To see this, consider  $  \phi_{\boldsymbol{\mu}}(A,\omega)\otimes\phi_{\boldsymbol{\nu}}(B,\omega) $ 
and  $ \phi_{\boldsymbol{\mu'}}(A,\omega)\otimes\phi_{\boldsymbol{\nu}'}(B,\omega) $, where  $ A $  and  $ B $  in  $ M $  are space-like separated and  $ A $  is outside the future cone of  $ B $. Let  $ \mu=\mu' $  in A, then by active locality \begin{equation}\phi_{\mu}(A,\omega)=\phi_{\mu'}(A,\omega)\label{eq:activeloc},\end{equation} even if the experimenters chose different axes  $ \nu\neq \nu' $  at  $ B $.

By active locality, an event  $ \left\{ \phi_{\boldsymbol{\mu}}(A,\omega)\otimes\phi_{\boldsymbol{\nu}}(B,\omega)\in\left\{\sigma_{A}=\uparrow\right\}\right\}\in\mathcal{F}_A $  is  $ P_{A_{\boldsymbol{\mu}}B_{\boldsymbol{\nu'}}} $   equivalent to  $ \left\{ \phi_{\boldsymbol{\mu}}(A,\omega)\otimes\phi_{\boldsymbol{\nu'}}(B,\omega)\in\left\{\sigma_{A}=\uparrow\right\}\right\}\in\mathcal{F}_A $.

With these definitions, Bell's second theorem states:
\begin{theorem}(Bell's second theorem, Bell, Nelson, Faris). Let Eq.  (\ref{eq:1abcdefg}) and active and passive locality hold. Then the Clauser-Holt-Shimony-Horne (CHSH) inequality  \cite{HoltHorneClauserShimony} holds
	\begin{equation}
		\left|\mathrm{E}\left(\boldsymbol{\mu},\boldsymbol{\nu}\right)-\mathrm{E}\left(\boldsymbol{\mu},\boldsymbol{\nu}'\right)+\mathrm{E}\left(\boldsymbol{\mu}',\boldsymbol{\nu}\right)+\mathrm{E}\left(\boldsymbol{\mu}',\boldsymbol{\nu}'\right)\right|\leq2,\label{eq:2}
	\end{equation}where the function
	\begin{align}
		\mathrm{E}\left(\boldsymbol{\mu},\boldsymbol{\nu}\right) & \equiv\mathrm{P}_{A_{\boldsymbol{\mu}}B_{\boldsymbol{\nu}}}\left(\sigma_{A}=\uparrow\bigcap\sigma_{B}=\uparrow\right)+\mathrm{P}_{A_{\boldsymbol{\mu}}B_{\boldsymbol{\nu}}}\left(\sigma_{A}=\downarrow\bigcap\sigma_{B}=\downarrow\right)\nonumber \\
		& \quad-\mathrm{P}_{A_{\boldsymbol{\mu}}B_{\boldsymbol{\nu}}}\left(\sigma_{A}=\uparrow\bigcap\sigma_{B}=\downarrow\right)-\mathrm{P}_{A_{\boldsymbol{\mu}}B_{\boldsymbol{\nu}}}\left(\sigma_{A}=\downarrow\bigcap\sigma_{B}=\uparrow\right).\label{eq:2a}
	\end{align}is called correlation coefficient \end{theorem}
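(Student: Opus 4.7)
The plan is to follow the standard Bell--Nelson--Faris route via conditional expectations. For $\sigma = \uparrow,\downarrow$ I will use the $\pm 1$ encoding, so that the correlation coefficient in Eq.~(\ref{eq:2a}) is simply the expectation of the product of the two $\pm 1$-valued outcomes under $P_{A_{\boldsymbol\mu} B_{\boldsymbol\nu}}$. The first step is to condition on the preparation-stage $\sigma$-algebra $\mathcal{F}_S$ and invoke passive locality (Eq.~(\ref{eq:passiveloc})) to obtain conditional independence of the two outcomes given $\mathcal{F}_S$. This gives
\begin{equation}
\mathrm{E}(\boldsymbol\mu,\boldsymbol\nu)\;=\;\int a_{\boldsymbol\mu,\boldsymbol\nu}(\omega)\,b_{\boldsymbol\mu,\boldsymbol\nu}(\omega)\,dP(\omega),
\end{equation}
where $a_{\boldsymbol\mu,\boldsymbol\nu}(\omega):=P_{A_{\boldsymbol\mu}B_{\boldsymbol\nu}}(\sigma_A=\uparrow\mid\mathcal{F}_S)(\omega)-P_{A_{\boldsymbol\mu}B_{\boldsymbol\nu}}(\sigma_A=\downarrow\mid\mathcal{F}_S)(\omega)$ and $b_{\boldsymbol\mu,\boldsymbol\nu}$ is defined analogously at $B$. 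Both lie in $[-1,1]$ by construction.

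The second step eliminates the cross-dependence of the settings via active locality. Because $A$ lies outside the future cone of $B$, the conditional law of the $A$-outcome cannot depend on the $B$-setting, and vice versa. Concretely, using Eq.~(\ref{eq:activeloc}) together with the definition in Eq.~(\ref{eq:def123456}), the $P_{A_{\boldsymbol\mu}B_{\boldsymbol\nu}}$-equivalence of $\{\phi_{\boldsymbol\mu}(A)\otimes\phi_{\boldsymbol\nu}(B)\in\{\sigma_A=\uparrow\}\}$ and $\{\phi_{\boldsymbol\mu}(A)\otimes\phi_{\boldsymbol\nu'}(B)\in\{\sigma_A=\uparrow\}\}$ yields $a_{\boldsymbol\mu,\boldsymbol\nu}(\omega)=a_{\boldsymbol\mu,\boldsymbol\nu'}(\omega)$ almost surely, so I may write $a_{\boldsymbol\mu}(\omega)$; symmetrically $b_{\boldsymbol\nu}(\omega)$ is well defined. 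Thus
\begin{equation}
\mathrm{E}(\boldsymbol\mu,\boldsymbol\nu)-\mathrm{E}(\boldsymbol\mu,\boldsymbol\nu')+\mathrm{E}(\boldsymbol\mu',\boldsymbol\nu)+\mathrm{E}(\boldsymbol\mu',\boldsymbol\nu')=\int\!\bigl[a_{\boldsymbol\mu}(b_{\boldsymbol\nu}-b_{\boldsymbol\nu'})+a_{\boldsymbol\mu'}(b_{\boldsymbol\nu}+b_{\boldsymbol\nu'})\bigr]\,dP.
\end{equation}

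The third step is the elementary pointwise bound: for any real numbers $a,a',b,b'\in[-1,1]$,
\begin{equation}
\bigl|a(b-b')+a'(b+b')\bigr|\;\leq\;|b-b'|+|b+b'|\;=\;2\max(|b|,|b'|)\;\leq\;2,
\end{equation}
where the middle equality is the standard identity for $\pm$ combinations of real numbers. Applying this $\omega$-wise to the integrand and then using $\bigl|\int f\,dP\bigr|\leq\int|f|\,dP\leq 2$ produces the CHSH inequality in Eq.~(\ref{eq:2}).

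The main conceptual obstacle is not any computation but the careful bookkeeping that makes active and passive locality do genuinely different work: passive locality is what turns the joint conditional probability in Eq.~(\ref{eq:passiveloc}) into a product so that the correlation can be written as an integral of a product $a\cdot b$, while active locality is what allows the factors to be labelled $a_{\boldsymbol\mu}$ and $b_{\boldsymbol\nu}$ independently of the \emph{other} party's setting, which is exactly what the elementary inequality in the third step requires. Omitting either assumption breaks the argument at a precisely identifiable point, and this is the feature that makes the Nelson--Faris formulation mathematically rigorous in a way that Bell's first derivation (which tacitly invoked Eq.~(\ref{eq:belleqdsaf}) for counterfactual settings) was not.
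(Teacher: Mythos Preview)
Your proof is correct and follows exactly the standard Nelson--Faris route that the paper itself defers to (the paper gives no self-contained argument but cites Nelson and the author's thesis). The three-step structure---passive locality $\Rightarrow$ factorisation of the conditional correlation as $\int a\,b\,dP$, active locality $\Rightarrow$ $a$ depends only on $\boldsymbol\mu$ and $b$ only on $\boldsymbol\nu$, then the pointwise CHSH bound---is precisely the argument those references contain.

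One remark worth recording: your derivation never invokes the perfect-anticorrelation hypothesis Eq.~(\ref{eq:1abcdefg}), and in fact CHSH in the form~(\ref{eq:2}) follows from active and passive locality alone. The role of Eq.~(\ref{eq:1abcdefg}) in the Nelson--Faris analysis is different: combined with passive locality it forces the conditional probabilities $P_{A_{\boldsymbol\mu}B_{\boldsymbol\mu}}(\sigma_A\mid\mathcal{F}_S)$ to be $\{0,1\}$-valued almost surely (i.e.\ $\mathcal{F}_S$-determinism), which is what the paper alludes to immediately after the theorem when it says Faris uses~(\ref{eq:1abcdefg}) to recover the form of Bell's first theorem. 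So you have actually established a slightly cleaner statement than the one asserted, with one hypothesis to spare.
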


\begin{proof}
	For a proof, see the writings of Nelson \cite{Ne3,Ne4,Ne5}. Nelson proves an inequality which is a bit different than the statement above. In \cite{phd,schulz2}, the CHSH \cite{HoltHorneClauserShimony} inequality, which is a variant of Bell's inequality is proven. 
\end{proof}
In contrast to Bell's first article, Nelson makes sure not to make the assumption that expressions like
\begin{align}
	\mathrm{P}_{A_{\boldsymbol{\mu}}B_{\boldsymbol{\mu}}}\left(\sigma_{A}=\uparrow\right)  =  \mathrm{P}_{A_{\boldsymbol{\mu}}B_{\boldsymbol{\mu}}}\left(\sigma_{A}=\uparrow\bigcap\sigma_{B}=\downarrow\right)  =  \mathrm{P}_{A_{\boldsymbol{\mu}}B_{\boldsymbol{\mu}}}\left(\sigma_{B}=\downarrow\right)=\frac{1}{2},\nonumber\\
	\mathrm{P}_{A_{\boldsymbol{\mu}}B_{\boldsymbol{\mu}}}\left(\sigma_{A}=\downarrow\right) =  \mathrm{P}_{A_{\boldsymbol{\mu}}B_{\boldsymbol{\mu}}}\left(\sigma_{A}=\downarrow\bigcap\sigma_{B}=\uparrow\right) =  \mathrm{P}_{A_{\boldsymbol{\mu}}B_{\boldsymbol{\mu}}}\left(\sigma_{B}=\uparrow\right)=\frac{1}{2}
	\label{eq:1abcdefg}
\end{align} 

would always hold even if different axes were chosen at the detectors. Bell's inequality is violated for an experiment with several different axes at each detector and the exact anti correlations between events for the same axis  $ \mu $  at both detectors may be destroyed if a measurement for different axes  $ \mu'\neq\mu $  or  $ \nu'\neq\nu $  was carried out at the same time.

Nelson's proof is therefore removing a severe loop-hole from the proof in Bell's first article \cite{Bell}. Saying this, one should note that Nelson was much inspired by Bell's second article from \cite{Bell3}.

By active locality, the settings  $ \mu,\nu $  of the instruments at A and B which may be chosen later can have no influence on the outcomes of the random variables that generate events in  $ \mathcal{F}_S $. 
Using the experimentally verified relation (\ref{eq:1abcdefg}) for pairs of equally set axes  $ \mu,\mu $ at the two stations, Faris has shown in \cite{Far} that the passive locality condition would then imply that all events at the detectors are equivalent to events in  $ \mathcal{F}_S $  and this was then used by Faris to rewrite Bell's inequality from Bell's second theorem in the form of Bell's first theorem from \cite{Bell}.

This shows that Bell's theorem is similar to the statement of the  "Free-Will Theorem" from Conway and Kochen \cite{Freewill}. The latter also implies that if active locality holds and Eq.  (\ref{eq:1abcdefg}) holds in quantum mechanics for the observed events at an arbitrary axis  $ \mu $, then the outcomes at the detectors can not be  predetermined.

As a result, one has to conclude that any theory whose observables correspond to those of quantum mechanics must have probabilistic elements that can not be determined before the measurement actually happens.

\section{Attempts to formulate quantum mechanics from stochastic differential equations}\label{s6}
Since Kac discovered the Feynman-Kac formula in 1949 \cite{Kac}, it has been well known that Euclidean path integrals could be approximated as averages of stochastic processes, which are non-differentiable. Such a description seems not to be available for Lorentzian path integrals but there exist other techniques. Nelson has derived solutions of the non-relativistic single particle Schr\"odinger equation in terms of a stochastic process \cite{Nelson}. To do this, one starts as follows:
Any solution of the non-relativistic single particle Schr\"odinger Eq.  (\ref{Schroe}), can be written as \begin{equation}\psi(\mathbf{x},t)=\pm\sqrt{\rho(\mathbf{x},t)}e^{i\varphi(\mathbf{x},t)},\end{equation} and we can define 
\begin{equation}\mathbf{u}(\mathbf{x},t)=-\frac{\hbar}{2m}\nabla (ln(\rho(\mathbf{x},t))/\rho_0),\;\;\;\mathbf{v}(\mathbf{x},t)=\frac{\hbar}{m}\nabla\varphi(\mathbf{x},t),\label{nonrelfield}\end{equation} with  $ \rho_0 $  as some normalization factor. The Schr\"odinger equation is then found to be equivalent to \begin{equation}\frac{d}{d t}\mathbf{v}-(\mathbf{u}\nabla)\mathbf{u}+\nu\Delta u=\frac{1}{m}\mathbf{F}(\mathbf{x})\label{eq:mean},\end{equation} where  $ \mathbf{F}=-\nabla V (\mathbf{x}) $  and  $ \nu=\frac{\hbar}{2} $.

It has been shown by Nelson that this equation is connected to the mean of two stochastic differential equations which are usually written in the physicist's literature as Langevin equations for infinitely many sample trajectories  $ \mathbf{\sigma}_j $  with a friction coefficient  $ \beta=\pm m/\tau $  of opposite sign:
\begin{equation}m \ddot{\mathbf{\sigma}}_j\pm\frac{m}{\tau}\dot{\mathbf{\sigma}}_j=\mathbf{F}+\mathbf{F}^B_j\label{lang}\end{equation}
In (\ref{lang}),  $ |\beta| $  is assumed to be large and  $ \mathbf{F}^B_j $  is Wiener process with a Gaussian distribution. Writing Eq.  (\ref{lang}) one should note that the differential forms of the equations are really only abbreviations for the time integrals of these stochastic differential equations, since Brownian sample paths are non-differentiable. 

The index  $ j $  in Eq.  (\ref{lang}) is an index of the sample trajectory over which one has to average. Then one has to compute the mean of the two resulting equations, which is given by (\ref{eq:mean}). Nelson's article is rather concise. For more details of the derivation, the reader is referenced also to Fritsche and Haugk \cite{Fritsche} where the theory is also extended to many particle systems.  

It is argued in \cite{Fritsche} that the entire dynamics for a single particle is not given by a Markov process. If one looks at the evolution of a single trajectory, one has to ensure that it is governed by each one of the two Langevin equations with the same probability at each time step. One has to make a constant re-partitioning of the sample trajectories into two ensembles that are governed by different stochastic differential equations. This results in drastic differences from ordinary Markovian Brownian motion. 

The observables are computed with the state function that solves the non-relativistic Schr\"odinger equation. It depends on the fields  $ \mathbf{u}(\mathbf{x},t) $  and  $ \mathbf{v}(\mathbf{x},t) $  which are computed from  averages of all sample trajectories that arrive at a certain point  $ \mathbf{x} $  at time t. If a measurement device at a different point  $ \mathbf{x'}\neq \mathbf{x} $  at time  $ t'<t $  makes modifications to the system, e.g by closing a slit, then  a different set of trajectories will arrive at point  $ \mathbf{x} $  at  $ t $  than without the modification. This will lead to different observables. The theory therefore describes non-local phenomena.

In his phd thesis\cite{phd}, the author also conjectured that the stochastic derivation of the Schr\"odinger equation can be used to violate Bell's inequalities. One considers two systems
\begin{equation}m \ddot{\mathbf{\sigma}}_{1j}\pm\frac{m}{\tau}\dot{\mathbf{\sigma}}_{1j}=\mathbf{F}_{1}+\mathbf{F}^B_{1j}\label{langA}\end{equation}
and 
\begin{equation}m \ddot{\mathbf{\sigma}}_{2j}\pm\frac{m}{\tau}\dot{\mathbf{\sigma}}_{2j}=\mathbf{F}_{2}+\mathbf{F}^B_{2j}\label{langB}\end{equation}
where the subscripts 1 and 2 denote the two particle systems in space-like separated locations  $ 1 $  and  $ 2 $. If one would assume that one has always 
\begin{equation}\mathbf{F}^B_{2j}=-\mathbf{F}^B_{1j}\label{corr2}\end{equation}
for the Gaussian random force and that the exchange procedure between the Langevin equations with positive and negative sign works in each of the systems 1 and 2 simultaneously and in the way, then one would expect that one gets exact correlations between the observables in 1 and 2 if the experimenters there chose their measurement devices equal and deliberately set e.g.   $ \mathbf{F}_{1}=\mathbf{F}_{2} $. If the measurement devices are chosen differently, the correlations between the observables would cease to be exact, but because of Eq.  (\ref{corr2}) one could still measure some statistical dependence. Furthermore,since  $ \mathbf{F}^B_{1j}, \mathbf{F}^B_{2j} $  are not predetermined in time even if Eq.  (\ref{corr2}) holds, the outcomes at 1 and 2 would not be determined prior to measurement. 

There exist Bell inequalities  involving just position, momentum and energy for certain non-relativistic two  particle systems, see \cite{Bellineq1234}.  A verification that the model is indeed able to mimic these states and to violate a Bell inequality could come e.g from a computer simulation with a carefully chosen random process  for  $ \mathbf{F}^B_{1j}, \mathbf{F}^B_{2j} $. This would, however, be a bit difficult to implement because in each of the systems  $ 1 $  and  $ 2 $  one has to take care that  $ \mathbf{F}^B_{1j}, \mathbf{F}^B_{2j} $  as well as the exchange procedure for each trajectory under the  $ + $ and $- $ sign of the Langevin equation is such that the correlations of \cite{Bellineq1234} are then reproduced.  However, what becomes clear is that this method of stochastic quantization allows to separate the (local) influence of the measurement devises from the non-local intrinsic randomness of the system. 

In quantum mechanics, this is not so clear. Bohmian Mechanics is a non-local interpretation of quantum mechanics where  $ \mathbf{u} $  and  $ \mathbf{v} $  are interpreted as particle velocities. In an entangled multi-particle system, a measurement at one station, can reveal information about the outcomes at a spatially separated second station, provided that the experimenters there have chosen the same settings of their devises and the system as correlated. In Bohmian mechanics, this is (wrongly) interpreted as some kind of superluminal influence on the particle at a spatially separated location and that such an interpretation of quantum mechanics was possible was the reason for Einstein's complaint. Bell's theorem from \cite{Bell3} showed that one can not localize the observed randomness of the outcomes into a particle source predetermined by measurement. Nelson's analysis made clear that one can use random variables and stochastic processes to separate the local influence that experimenters may have on an entangled quantum system from the non-local and intrinsic randomness of the probability space that is given by the quantum system. 

It has been conjectured by the author that the reason for intrinsic non-local randomness of the particle motion in quantum mechanics could be a heath bath of entangled photons at spatially separated locations. These may be expected if the vacuum contains a gas of radiating entangled microscopic black-holes, which then may cause correlations like Eq.  (\ref{corr2}) that give rise to the Schr\"odinger equation for matter particles. 

Note the fact that this proposed version of stochastic mechanics, where the entire process for a given trajectory $\sigma_{1(2)j}$ is non-Markovian, and because it can maintain correlations between two separated systems if  $ \mathbf{F}^B_{1j}, \mathbf{F}^B_{2j} $ are carefully correlated, could resolve the paradox raised by Nelson in \cite{Ne4555}.

The random field in Eq.  (\ref{nonrelfield}) is modeled without relativistic effects and the fluctuating term in Eq.  (\ref{lang}) is assumed to be a non-covariant Gaussian distribution. Therefore, some sample trajectories have velocities greater than the speed of light in this model. As a result, a modification of the system at one point can yield different observables at space-like separated points.

In order to get rid of this, one must try to write covariant field equations in terms of Nelson's stochastic mechanics. Unfortunately, even though there are some attempts, see \cite{Dunkel}, a full theory of relativistic stochastic processes, relativistic statistical physics and relativistic thermodynamics has yet to be formulated. An attempt to quantize relativistic quantum fields was made by Guerra in \cite{guerra,Ne3,Ne4,Ne5}. 

Assume we have  a field  $ \varphi(x)\in\mathbb{C} $  that fulfills \begin{equation} (\square+m^2)\varphi=0,\end{equation}
with  $ x=(\mathbf{x},t)\in M $  and  $ M $  as a Minkowski space-time. We start by restricting  $ \mathbf{x} $  to a box with edge lengths  $ V $  and expand \begin{equation}\varphi(x,t)=\sum_{n=0}^\infty u_n(\mathbf{x}) q_n(t),\end{equation}
where  $ u_n(\mathbf{x}) $  fulfill the Helmholtz equation \begin{equation}\Delta u_n(\mathbf{x})=-k_n^2 u_n(\mathbf{x}),\;\; k_n^2\geq 0 \end{equation} and form a complete orthonormal base within  $ V $. The  $ q_n(t) $  satisfy the equations of motion of an harmonic oscillator \begin{equation} \ddot{q}_n(t)+(m^2+k^2_n) q_n(t)\end{equation}The system can then be quantized with stochastic processes, see \cite{Nelson,Fritsche}

After quantization and going to the infinite volume limit  $ V\rightarrow\infty $  in Minkowski space, Guerra arrives in \cite{guerra} at the conclusion that the quantum field theory for  $ \varphi $  can be described by a stochastic differential equation for the field
$ \varphi(\mathbf{x}) $ :
\begin{equation}d\varphi(x)=-\sqrt{-\Delta+m^2}\varphi(x)dt+dw(x), \label{fieldeq}\end{equation}where  $ w $  is a stochastic process.  The theory was extended to the electromagnetic field \cite{guerra} and the linearized gravitational field \cite{Davidson}. It was argued by Nelson in \cite{Ne3,Ne4,Far} that it has well behaved active locality properties for the observables and it may set to violate passive locality.
\section{Topology changes with random fields}\label{s7}
We have argued in section \ref{s4} that the ground state of quantum gravity is such that it can explain the smallness of the cosmological constant and that it also implies frequent topology changes in an expanding universe. We noted in section \ref{s0} that such changes would imply the emergence of singularities if closed time-like curves are to be avoided. In section \ref{s1} we have refuted some of DeWitt's claims against topology changes and argued that one can define path integrals over certain singular space-times in terms of tetrads with a suitable norm. In \ref{s2}, we have looked at the importance of Cauchy problems for consistent predictions of quantum field theories. We have seen in space-times with certain cuspidal singularities, some field theories may not be defined because the singularity prevents in its neighborhood the construction of fields whose derivatives exist everywhere.  We have taken note that these problems may arise for perturbed cusps. 

We have seen in sections \ref{s2} and \ref{s3} that there are good arguments which show that quantum theories for fields on a topology changing space-time can be made compatible with certain conical singularities if one gets finite observables for the fields. 

With the example of the trousers problem, we have also seen that topology changes, even if they just involve harmless conical singularities, can (but not must) induce boundary conditions that result in inconsistencies for the global analysis of the field.

It was shown recently by Krasnikov in \cite{Krasnikov} that one can get a finite quantization of the singular trousers problem if one deviates slightly from the usual quantum field theory axioms by making the fields continuous but non-differentiable. 

Krasnikov describes a mass-less field that fulfills the two dimensional wave equation and propagates from the legs to the trunk. To each leg, he associates a mode that is constant in the other leg. At points in the trunk where the modes of each leg come together, only one mode is a propagating wave while the other mode is constant. Since the modes are added together, one gets a propagating wave everywhere. The field functions that solve the wave-equation are then continuous and differentiable, except at the points of past incomplete in-extendible null geodesics, where the field functions are still continuous but their derivatives may have jump discontinuities, see figure 2 of \cite{Krasnikov}. As a result of this procedure, the energy momentum tensor stays finite. 

In the space-time foam picture that follows from Hawking's calculation, one would expect many topological changes because of the dependence of the Euler characteristic on the volume of the space-time. Therefore, one would expect the quantum field of a particle to cross light-like geodesics that emanate from one of the many singularities very often. It therefore appears that Krasnikov's solution of the singular trousers problem would make it necessary to derive quantum mechanics entirely from functions which are non-differentiable (provided that topology changes with complicated boundary conditions as in the trousers problem are allowed to happen in a space-time). 

Non-differentiable paths fit to the notion of stochastic quantization that was reviewed in the section \ref{s6}. The smooth description in the stochastic quantization schemes of section \ref{s6} only holds in the limit of a large friction coefficient, i.e for time intervals  $ dt>>\tau $, see p. 385 of \cite{Fritsche} or p. 1081 of \cite{Nelson}. If we keep  $ \tau $  finite,  $ \varphi(\mathbf{x},t) $  from (\ref{fieldeq}) is therefore smooth only approximately and becomes continuous but non-differentiable at intervals  $ dt\leq \tau $. 

In his article \cite{DeWitt}, DeWitt proved an inconsistency of the commutator algebra in quantum gravity as follows. One defines the operator
\begin{equation}\hat{\chi}^i=-2\hat\pi^{ij}_{,j}-\gamma_{il}(2\gamma_{jl,k}-\gamma_{jk,l})\hat{\pi}^{jk}
\end{equation} 
with the operator
$ \hat{\pi}^{ij}\equiv\frac{\delta}{i\delta\gamma_{ij}} $ that follows from the Wheeler-DeWitt equation. The  $ \frac{\delta}{\delta\gamma_{ij}} $  describe functional derivatives. I.e. \begin{equation}\hat{\pi}^{ij}\psi(\gamma,t)=\int \frac{\delta}{i\delta\gamma_{ij}}\psi(\gamma(x),t)\zeta_{ij}(x)d^3x\end{equation}
with  $ \zeta_{ij} $  as an arbitrary function of  $ x $. The comma denote partial derivatives with respect to  $ x^j $. In canonical quantum gravity, one has the following canonical commutator rule \begin{equation}[\gamma_{ij},\hat\pi^{k'l'}]=i\delta_{ij}^{k'l'}\label{commm}\end{equation}
DeWitt now computes in\cite{DeWitt}
\begin{eqnarray}[[\gamma_{ij},\hat\pi^{k'l'}],i\int \hat\chi_{k'}\delta\zeta^{k'}d^3x]\nonumber\\
	=([\gamma_{ij},\hat\pi^{k'l'}]\delta\zeta^k)_{,k}\label{comm1}\end{eqnarray} where  $ \delta\zeta^k $  is an arbitrary displacement. If one sets  $ x'\rightarrow x $  and uses Eq.  (\ref{commm}), one has \begin{equation}[\gamma_{ij},\hat\pi^{k'l'}]=6i\delta(x,x)\label{comm2}\end{equation}
and from the second line of Eq.  (\ref{comm1}):
\begin{equation}[[\gamma_{ij},\hat\pi^{k'l'}],i\int \hat\chi_{k'}\delta\zeta^{k'}d^3x]=([\gamma_{ij},\hat\pi^{k'l'}]\delta\zeta^k)_{,k}=6i(\delta(x,x)\delta\zeta^k),k\end{equation}
However, if one sets Eq.  (\ref{comm2}) into the first line of Eq.  (\ref{comm1}), one notes that the commutator vanishes, or 
\begin{equation}[(\gamma_{ij},\hat\pi^{k'l'}],i\int \hat\chi_{k'}\delta\zeta^{k'}d^3x]=0\end{equation} Since  $ \delta\zeta^k $  was arbitrary we arrive at a contradiction. The inconsistency of this algebra, which only follows if one sets  $ x\rightarrow x' $  puts the consistency of the amplitude Eq.  (\ref{eq:gavipath}) into question for situations where two space-time points are evaluated in the limit of a vanishing distance to each other. 

This does not look as a problem like the failure of renormalizability, since it really only occurs if two points at different locations are set to coincide. The operators  $ \hat{\pi}^{ij} $  must be regarded as acting on a state given by the amplitude. This action involves a functional derivative. The inconsistency that arises if one sets  $ x=x' $  in the algebra thus seems to point to a failure of the ability to compute functional derivatives of expressions like  $ \gamma_{ij}\psi $ properly.  It certainly implies that we can only use the Wheeler-DeWitt equation or the gravitational path integral approximately. 

Replacing the space-time points $x$ with a field operators that fulfill commutation rules as it is done in string theory is unlikely to resolve this difficulty. One could then perhaps define two different space-time coordinates as not co-measurable, but the problem of quantum gravity appears once $x'$ is equal to $x$. Furthermore, as tempered distributions, the field operators would require the ability to define smooth and fast-falling test functions, which would then have to map into the space-time. As we have seen in sections \ref{s2}, \ref{sa} and \ref{s3}, this is a notion that is not very much compatible with topology changing singularities.

The conversion of field theories in terms of stochastic processes that are continuous but non-differentiable has been somewhat successfully adapted to bosonic fields and even for linearized gravity. But that method relied on expansions of the field in terms of harmonic oscillators. Non-perturbative quantum gravity can not be brought in form of harmonic oscillators where this method would work. So one has to try a different method. 

Since we have shown that for fields on the space-time we can easily get into difficulties with cuspidal singularities and the Cauchy problem, the non-differentiability should at best be such that it leads to conical singularities only.

In the following, we will argue that one may be able to define a path integral for quantum gravity such that it can describe topology changes over metrics with singularities. In Regge calculus \cite{Regge}, one approximates the metric with simplicial complexes of various deficit angles, bone lengths and vertex numbers whose edges meet at common joints. The curvature tensor is then given by \begin{equation}R_{\mu\nu\alpha\beta}=\rho_p\epsilon_p U^p_{\alpha\beta}U^p_{\mu\nu},\label{regge}\end{equation} and the  curvature scalar is 
\begin{equation}R=2\rho_p\epsilon_p,\label{regge2}\end{equation}

where  $ \epsilon_p $  is the deficit angle associated with the vertex  $ m $  where the most edges meet at the joint  $ p $,  $ \rho $  is the bone density at the joint  $ p $  and  $ U^p_{\alpha\beta}=\epsilon_{\rho\sigma\lambda}U_p^\lambda $  with  $ \epsilon_{\rho\sigma\lambda} $  as anti-symmetric Levi-Civita symbol and  $ U^\lambda_p $  as a vector normalized to unity which points to the vertex  $ m $.

By summing over varying edge lengths, bone densities and deficit angles, this construction can be used to  approximate the path integral of gravity over all metrics, with the action constructed from Eq.  (\ref{regge2}). For the Polyakov action a description with Regge calculus was given by Jevicki and Ninomiya \cite{jav}.

In causal dynamical triangulation (CDT), one usually only considers metrics of fixed topology and ensures that the transition to different triangulations is such that the space-time develops in a causal manner, i.e. avoid singularities \cite{Loll}. It turns out to be a sufficient regulator that one can let the edge length of the simplicial complexes go to zero and the resulting amplitude is well defined. With this usual restriction to a fixed topology, the method of CDT must be extended in order to implement topological changes.

Macroscopically, space-time is perceived as smooth up to the length scales of current high energy physics experiments. So one should make the bone length of the simplicial complexes reasonably small. On the other hand, one should not make it too small. In general relativity, a curvature singularity is usually reached if the Kretschmann scalar \begin{equation}K=R_{\mu\nu\alpha\beta}R^{\mu\nu\alpha\beta}\end{equation} 
diverges. One should not attempt to approximate this situation with a vanishing bone-length, because reaching a singularity would then also make the curvature scalar  $ R $  divergent if we encounter a singularity. In that case, we could not define the integrand for the path integral anymore.

Instead, we can describe singularities in a simple way if we do not attempt to approximate the path integral  $ Z=\int \mathcal{D}g_{\mu\nu}e^{iS_g} $  over smooth metrics exactly, but limit the bone length down to some fixed but very small value where the space-time becomes non-differentiable. Eq.  (\ref{regge2}) also suggests that we have to fix the maximum number of joints  $ p $, the deficit angle  $ \epsilon_p $  and the bone densities  $ \rho $  at some large value in order to prevent divergent curvature scalars. 

The Euler characteristic of a simplicial complex is given by 
\begin{equation}\chi=V-E+F\end{equation} 
where  $ V $  is the number of vertexes,  $ E $  the number of edges and  $ F $  the number of faces. A change of the Euler characteristic can then be accomplished in principle by a suitable removal of vertices that are entirely surrounded by other simplices, provided we do not let the bone length go to zero.

On a larger scale, one then sees a differentiable manifold and recovers the original path integral of quantum gravity, while on a smaller scale one has simplicial complexes that have conical singularities and also lead to a non-differentiability of the fields that are defined on the space-time 

Because of the singularities of the simplicial complexes, one additionally has to regularize the measure. For this there exist procedures, e.g. \cite{Kath}.

\section{Conclusions and outlook}
In this article, we have carefully reviewed the singularity theorems that are connected with topological changes of a Lorentzian space-time. We have also outlined a rigorous definition of quantum field theory and applied this on the gravitational field. We have seen that the path integral of gravity is an integral over metrics that act as cobordisms, which solves the problem of time. We have found that Path integrals, independent of the action used, are only defined with paths that are elements of metric spaces. We have argued that this creates difficulties whenever one wants to define a path integral over Lorentzian metrics that are interpolating between topologically different manifolds. We nave noted that one may use tetrads and a suitable norm to overcome this problem.

We have argued that for classical and quantum fields on a space-time, it is possible to extend the notion of derivatives to compact sets and that this makes it possible in some cases to extend the fields to a singularity, similarly as it is done with a boundary. However, we also have shown that difficulties may arise in the neighbourhood of certain cuspidal singularities for some fields if one extends them to the singularity.

We then have taken a look at Hawking's calculation of the ground state of gravity. We have argued that with the right choice of the renormalization scale parameter, the amplitude converges and then leads to a  cosmological constant close to $H_0^2$. We have discussed how this model leads to inflation and why it implies frequent topology changes during the expansion of the universe. We noted that  some implications of the model may also hold for other theories, e.g. string theory.

We have reviewed the recent result that the singular trousers problem of Anderson and DeWitt can be solved by requiring the fields to be continuous but not differentiable at some points.

We have reviewed a rigorous analysis of Bell's theorem, from which we conclude that any theory of nature, should it reproduce quantum mechanics, must have probabilistic elements.

We have reviewed known attempts of stochastic quantization and found that these are characterized by trajectories which are continuous but not differentiable. This is exactly the requirement for the solution of the trousers problem.

We also have attempted to modify quantum gravity in this regard. We have argued  to describe the path integral of quantum gravity by Regge calculus and have proposed several conditions that should be obeyed if the amplitude is compatible with topological changes. Regge calculus is a natural tool for this, since it leads to a continuous space-time, which is not differentiable everywhere and where the singularities are always conical. The former condition is required for the solution of the trousers problem, the latter condition ensures that all fields are properly determined by Cauchy data close to the singularity.

Future developments would have to improve the methods of stochastic quantization. Unfortunately, the theory of relativistic stochastic processes is, despite some attempts, still not very developed and unsatisfactory. Neither fermionic fields have been described, nor has the stochastic process driving these equations been fully understood. 

If one accepts the consequences of Hawking's and Geroch's singularity theorems and of Hawking's space-time foam calculation, one has to find ways to make quantum fields compatible with the occurrence of singularities. The formalism of quantum field theory is based on aspects of functional analysis that need sufficiently smooth structures. It appears that in order to make quantum fields fully compatible with singularities, we have to change some of the fundamental definitions of quantum field theories.

\bibliographystyle{unsrt}
\bibliography{bibliography}

\end{document}